\documentclass[a4paper,onecolumn,11pt,accepted=2024-11-09]{quantumarticle}
\pdfoutput=1
\usepackage[utf8]{inputenc}
\usepackage[english]{babel}
\usepackage[T1]{fontenc}

\usepackage{amsmath}
\usepackage{amsfonts}
\usepackage{amssymb}
\usepackage{amsthm}
\usepackage{mathtools}
\usepackage{mathrsfs}
\usepackage{stmaryrd}

\usepackage{color}
\usepackage{graphicx}
\usepackage{rotating}
\usepackage{caption}
\usepackage{subcaption}
\usepackage{ulem}
\usepackage{slashed}
\usepackage{braket}
\usepackage{tabu}
\usepackage{makecell}
\usepackage{diagbox}
\usepackage{placeins}

\usepackage{wasysym}

\usepackage[numbers]{natbib}

\usepackage{pgfplots}
\pgfplotsset{compat=1.17}

\usepackage[colorlinks,allcolors=black]{hyperref}

\usepackage{tikz}
\usetikzlibrary{calc}
\usetikzlibrary{positioning}
\usetikzlibrary{patterns}
\usetikzlibrary{matrix,backgrounds}
\pgfdeclarelayer{myback}
\pgfsetlayers{myback,background,main}
\tikzset{mycolor/.style = {line width=1bp,color=#1}}
\tikzset{myfillcolor/.style = {draw,fill=#1}}

\theoremstyle{definition}
\newtheorem{definition}{Definition}

\newtheorem{lemma}{Lemma}
\newtheorem{observation}{Observation}
\newtheorem{proposition}{Proposition}
\newtheorem{theorem}{Theorem}
\newtheorem{corollary}{Corollary}

\usepackage{cleveref}
\crefname{theorem}{Theorem}{Theorems}
\crefname{proposition}{Proposition}{Propositions}
\crefname{definition}{Definition}{Definitions}
\crefname{lemma}{Lemma}{Lemmas}
\crefname{figure}{Figure}{Figures}
\crefname{corollary}{Corollary}{Corollary}
\crefname{conjecture}{Conjecture}{Conjectures}
\crefname{section}{Section}{Sections}
\crefname{appendix}{Appendix}{Appendixes}
\crefname{observation}{Observation}{Observations}
\crefname{remark}{Remark}{Remarks}
\crefname{example}{Example}{Examples}
\crefname{equation}{Eq.}{Eqs.}
\crefname{table}{Table}{Tables}


\DeclareMathOperator{\tr}{tr}
\DeclareMathOperator{\id}{id}
\DeclareMathOperator{\Id}{Id}

\DeclareMathOperator{\AutG}{\mathbb{A}_G}
\DeclareMathOperator{\AutH}{\mathbb{A}_H}
\DeclareMathOperator{\AutGin}{\mathbb{A}_G^{ \text{in}}}
\DeclareMathOperator{\AutGout}{\mathbb{A}_G^{ \text{out}}}
\DeclareMathOperator{\AutGoutupp}{\overline{\mathbb{A}_G^{ \text{out}}}}

\newcommand{\defeq}{\mathrel{\mathop:}=}


\newcommand{\axilleaf}{\includegraphics[width=3.5mm]{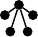}}
\newcommand{\contwins}{\includegraphics[width=3.5mm]{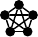}}
\newcommand{\discontwins}{\includegraphics[width=3.5mm]{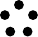}}
\newcommand{\single}{\includegraphics[width=3.5mm]{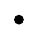}}



\newcommand{\ie}{\textit{i.e.}}
\newcommand{\cf}{\textit{cf.}}
\newcommand{\eg}{\textit{e.g.}}

\definecolor{violet}{HTML}{53257F} 
\definecolor{green}{HTML}{257a7f}
\definecolor{green2}{HTML}{527f27}
\definecolor{brown}{HTML}{852e29}

\colorlet{nodecolor}{black}
\colorlet{fadeoutnodecolor}{nodecolor!50}
\colorlet{edgecolor}{black}
\colorlet{fadeoutedgecolor}{edgecolor!50}
\colorlet{edgecolorU}{edgecolor}
\colorlet{edgecolorV}{fadeoutedgecolor}
\colorlet{edgecolorW}{fadeoutedgecolor}
\colorlet{circlecolor}{brown}


\newcommand{\new}[1]{{\color{black} #1}}

\newcommand\Odd[1]{\text{Odd}\left( #1  \right) }

\newlength{\dhatheight}
\newcommand{\hathat}[1]{%
    \settoheight{\dhatheight}{\ensuremath{\hat{#1}}}%
    \addtolength{\dhatheight}{-0.3ex}%
    \hphantom{\hat{#1}} 
    \kern -0.2em 
    \hat{\vphantom{\rule{1pt}{\dhatheight}}}%
    \smash{\kern -0.48em \hat{#1}}}

\begin{document}

\title{The Foliage Partition: An Easy-to-Compute LC-Invariant for Graph States}

\author{Adam Burchardt}
\affiliation{QuSoft, CWI and University of Amsterdam, Science Park 123, 1098 XG Amsterdam, the Netherlands}
\orcid{0000-0003-0418-257X}
\email{adam.burchardt.uam@gmail.com,}

\author{Frederik Hahn}
\affiliation{Electrical Engineering and Computer Science Department, Technische Universit{\"a}t Berlin, 10587 Berlin, Germany}
\affiliation{Dahlem Center for Complex Quantum Systems, Freie Universit{\"a}t Berlin, 14195 Berlin, Germany}
\orcid{0000-0002-9349-4075}

\email{mail@frederikhahn.eu}

\thanks{\newline\newline Both authors contributed equally to this work.}

\maketitle

\begin{abstract}
This paper introduces the foliage partition, an easy-to-compute LC-invariant for graph states, of computational complexity $\mathcal{O}(n^3)$ in the number of qubits. 
Inspired by the foliage of a graph, our invariant has a natural graphical representation in terms of leaves, axils, and twins. 
It captures both, the connection structure of a graph and the $2$-body marginal properties of the associated graph state. We relate the foliage partition to the size of LC-orbits and use it to bound the number of LC-automorphisms of graphs.
We also show the invariance of the foliage partition when generalized to weighted graphs and qudit graph states.  
\end{abstract}

\section{Introduction}

Graph states play an essential role in quantum information theory and quantum computing~\cite{gottesman1997stabilizer,PhysRevLett.91.107903,PhysRevA.68.022312, PRXQuantum.1.020325}.
As examples of stabilizer states, graph states can be directly implemented on quantum computers~\cite{graphstatesIBMQ, 9866745, briegel2009measurement}. 
As their name suggests, graph states can be visually represented by graphs, with vertices corresponding to qubits and edges representing controlled two-qubit operations for the experimental implementation of the graph state. 
In the language of graph theory, local Clifford (LC) operations acting on a graph state correspond to local complementations of the associated graph.

Three main questions surrounding graph states include their entanglement properties~\cite{PhysRevA.69.062311,Entanglement_GraphStates,HajdusekMurao_conference}, their equivalence under local operations~\cite{BOUCHET199375,AlgorithmLULC,
vandennestGraphicalDescriptionAction2004,Clifford,
ketkarNonbinaryStabilizerCodes2006,BurchardtRaissi20,RaissiBurchardt22}, and their classification~\cite{Clifford,Danielsen_2008,Adcock2020mappinggraphstate}.
Considerable effort has been made to address these questions:
first, an efficient algorithm to determine the LC-equivalence of given graph states with complexity $\mathcal{O}(n^4)$ in the number of qubits $n$ has been discovered \cite{BOUCHET199375,AlgorithmLULC,Bouchet1991AnEA}; 
second, a finite set of invariants which completely characterizes the LC-equivalence class of any graph state --and more generally, any stabilizer state--  has been identified \cite{PhysRevA.71.022310,invStab}; 
third, graph states have been classified up to 12 qubits~\cite{Danielsen_2008} and there are lower and upper bounds on the number of non-equivalent graph states for a given number of qubits \cite{Clifford}. 
Our focus is on the second problem as we introduce an easy-to-compute LC-invariant for graph states.

Although known invariants completely characterize the LC-equivalence class of any graph state, they are computationally inefficient \cite{PhysRevA.71.022310,invStab}, requiring knowledge of the given state's full stabilizer set, which size is exponential in the number of its qubits. 
This inefficiency renders these known invariants impractical. 
To mitigate this inefficiency, we introduce the \textit{foliage partition}, a partition of the graph's vertices induced by a simple equivalence relation.
With computational complexity $\mathcal{O}(n^3)$ in the number of qubits $n$, the foliage partition is an easy-to-compute LC-invariant that eliminates the need to compute the exponential stabilizer set. 
Closely related to the foliage of a graph~\cite{Hahn_2022, Dahlberg_2020, Hahn_2019}, the foliage partition has a simple graphical representation in terms of leaves, axils, and twins.
We show that foliage partition is invariant under local complementation of the associated graph, and hence that it is an LC-invariant for graph states.
We also explore the relationship between the foliage partition and both the entanglement properties and symmetries of the associated graph state. 
In particular, we show, that the foliage partition of the graph is trivial if and only if the corresponding state is 2-uniform, \ie~all $2$-body marginals are maximally mixed. Furthermore, we provide constraints on the LC-automorphisms group of a graph in terms of its foliage partition.

Remarkably, the foliage partition invariance extends to the generalization from qubits to \textit{qudits}: we prove its invariance under the generalized qudit local complementation operations for \textit{weighted graphs}.
Therefore, the foliage partition of any (weighted) graph constitutes an easy-to-compute LC-invariant of the corresponding (qudit) graph state.

The paper is organized as follows:
\cref{Section:qubits} recalls the concept of graph states and presents prior results concerning their LC-equivalence.
\cref{invariantQubit} introduces the \textit{foliage partition} of a graph. It provides two independent definitions of the foliage partition and highlights its basic properties. 
\cref{FoliageRepresentation} defines \textit{foliage graph} and \textit{foliage representation}, a representation of graphs that captures the structure of the foliage partition.
\cref{Sec:ent} discusses the relationship between the foliage partition of a graph and the entanglement properties of the associated graph state. 
\new{\cref{Section:asymptotic} provides the lower bound on the number of LC-non-equivalent  unlabelled graphs, also called LC-classes.} 
\cref{Section:symmetry_of_graphs} considers the LC-automorphism group of a graph in the context of the introduced invariant.
\cref{algo} presents an efficient algorithm to find the foliage partition of a given graph, which runs in $\mathcal{O}(n^3)$ time in the number of graph vertices $n$. 
\cref{Section:qudits} extends the notion of foliage partition to the qudit graph state setting.
In \cref{proofs} proves that the foliage partition is well-defined and LC-invariant for both qubit and qudit graph states.
As a supplement to this paper, we provide complete tables of foliage partitions, foliage graphs, and LC-automorphism groups for all $146$ LC-orbits of graphs up to $8$ qubits. A detailed explanation of supplementary material can be find in \cref{Section:foliage_of_small_graphs}.

\section{LC-equivalence of graph states}
\label{Section:qubits}

Recall that, a graph $G=(V,E)$ consists of a finite set of vertices $V$ and of edges $E\subseteq V \times V$. A graph is called \textit{simple} if it contains neither edges connecting a vertex to itself nor multiple edges between the same vertices. A simple graph is uniquely defined by its \textit{adjacency matrix} $\Gamma_G =(a_{ij})_{i,j \in V}$, where $a_{ij}=1$ if vertices $i,j$ are connected, and $a_{ij}=0$ otherwise.

The \textit{graph state} $\ket{G}$ of a simple graph $G$ is defined as
\begin{equation}
\label{graph_state}
\ket{G} \defeq \prod_{(v,w)\in E} \textbf{CZ}^{\{vw\}} \ket{+}^{\otimes V},
\end{equation}
where $\textbf{CZ}^{\{vw\}} \defeq  (-1)^{ij} \ket{ij}_{vw} \bra{ij}$ is a controlled-Z operator on qubits $v$ and $w$, and $\ket{+}\defeq \frac{1}{\sqrt{2}} (\ket{0}+\ket{1})$.

Recall further, that the Pauli group is generated by local Pauli operators, and the Clifford group, its normalizer, is the largest group preserving the Pauli group. 
The Clifford group is generated by Hadamard, CNOT, and phase gates, while the local subgroup, the \textit{local Clifford group}, is generated by Hadamard and phase gates alone \cite{nielsen_chuang_2010}. 
As the local Clifford group does not contain entangling operations, it retains quantum information properties. 
Two graph states are said to be \textit{LC-equivalent} if they can be transformed into each other using only local Clifford operators, and it is known that they are LC-equivalent if their associated graphs are equivalent under \textit{local complementation operations} \cite{vandennestGraphicalDescriptionAction2004}.

 \begin{definition}[Local complementation]\label{def:local_complementation} 
 For every vertex $a \in V$ of a graph $G=(V, E)$, we define a \textit{locally complemented graph} $\tau_{a}(G)$ with adjacency matrix
	\begin{equation}\label{eq:local_complementation_adjacency}
	\Gamma_{\tau_{a}(G)} \defeq \Gamma_{G}+\Gamma_{K_{N_{a}}} \quad (\bmod 2),
	\end{equation}
	where $K_{N_{a}}\defeq (V, {\{(v,w)~\mid~\left(v,w\in N_{a}\right) \wedge \left(v \neq w\right)\}})$ 
	is the complete graph on the neighborhood $N_{a}$ of $a$ and empty on all other vertices.
\end{definition}

\begin{theorem}[see~\cite{vandennestGraphicalDescriptionAction2004}]
\label{LCisLC}
Two graph states are LC-equivalent if and only if their associated graphs are equivalent under a sequence of local complementations.
\end{theorem}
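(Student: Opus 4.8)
The plan is to pass to the binary symplectic representation of stabilizer states and reduce both implications to statements about block-diagonal symplectic matrices over $\mathbb{F}_2$; this is essentially the route of~\cite{vandennestGraphicalDescriptionAction2004}, building on Bouchet's work on isotropic systems. Recall that $\ket{G}$ is the unique common $+1$-eigenstate of the commuting Paulis $g_v = X_v \prod_{w:(v,w)\in E} Z_w$, so its stabilizer group is encoded by the $n\times 2n$ binary matrix $[\,I \mid \Gamma_G\,]$ (X-part $\mid$ Z-part). Modulo Pauli corrections and a global phase, a local Clifford acts on such a matrix by right multiplication by a symplectic $Q\in\mathrm{Sp}(2n,\mathbb{F}_2)$ that is block-diagonal into $2\times 2$ blocks, i.e.\ by a tuple $(Q_1,\dots,Q_n)$ with $Q_v\in\mathrm{SL}(2,\mathbb{F}_2)\cong S_3$, and two graph states are LC-equivalent iff the two stabilizer matrices have the same $\mathbb{F}_2$-row span after some such $Q$.

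For the ``if'' direction I would exhibit, for each vertex $a$, an explicit local Clifford unitary realizing $\tau_a$, namely (up to the usual conventions) $U_a = \sqrt{-\mathrm{i}X}_a\,\bigotimes_{b\in N_a}\sqrt{\mathrm{i}Z}_b$, and verify by conjugating the generators $g_v$ that $U_a\ket{G}=\ket{\tau_a(G)}$ up to local phases. In the binary picture this is the statement that $\sqrt{-\mathrm{i}X}$ swaps the X- and Z-entries of one qubit while $\sqrt{\mathrm{i}Z}$ adds its X-entry to its Z-entry, so the product over $\{a\}\cup N_a$ sends $[\,I\mid\Gamma_G\,]$ to a matrix row-equivalent to $[\,I\mid\Gamma_G+\Gamma_{K_{N_a}}\,]$. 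Composing such unitaries then shows that graphs related by a sequence of local complementations give LC-equivalent graph states.

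For the ``only if'' direction, suppose $\ket{G}$ and $\ket{G'}$ are LC-equivalent, so there is a block-diagonal symplectic $Q=\left(\begin{smallmatrix}A & B\\ C & D\end{smallmatrix}\right)$ with diagonal blocks and $A_vD_v+B_vC_v=1$ for all $v$ such that $[\,I\mid\Gamma_G\,]Q$ and $[\,I\mid\Gamma_{G'}\,]$ have the same row span. Expanding the product gives $[\,A+\Gamma_G C\mid B+\Gamma_G D\,]$, so $A+\Gamma_G C$ must be invertible and $\Gamma_{G'}=(A+\Gamma_G C)^{-1}(B+\Gamma_G D)$; the symplectic condition forces $\Gamma_{G'}$ to be symmetric with zero diagonal, hence a genuine adjacency matrix. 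The content of the theorem is now the purely combinatorial claim that the set of matrices $\Gamma'$ obtainable this way from a fixed $\Gamma_G$ equals the set of adjacency matrices reachable from $G$ by local complementations. I would prove this by induction on the number of vertices at which $Q_v$ does not lie in the order-two subgroup of $\mathrm{SL}(2,\mathbb{F}_2)$ generated by the Z-phase (the Z-phase alone never alters the graph, only signs), using the elementary moves from the ``if'' direction: when the nontrivial support meets an edge $(a,b)$ one applies a pivot $\tau_a\tau_b\tau_a$ to reduce to the case of support on an independent set, which can then be cleared one vertex at a time by a single $\tau_a$, with the leftover freedom being graph-preserving.

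The hardest step is precisely this last factorization — showing that every graph-state-preserving block-diagonal symplectic is a product of the elementary ones induced by the $\tau_a$, which is Bouchet's local-equivalence theorem for isotropic systems transported into the graph-state language. One must also check that the sign and local-Pauli freedom in local Cliffords never enlarges the orbit: flipping the sign of any $g_v$ or multiplying by a local Pauli leaves the state a graph state with the same underlying graph, so such corrections can be absorbed at each step, but carrying these phases cleanly through the induction needs some bookkeeping. Everything else — the ``if'' direction, the symplectic algebra, and the classification of the six possible local $2\times 2$ blocks — is routine once the representation is fixed.
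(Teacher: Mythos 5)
This theorem is not proved in the paper at all: it is imported verbatim from \cite{vandennestGraphicalDescriptionAction2004}, so there is no in-paper argument to compare against. Your proposal is an outline of the proof in that reference (and of Bouchet's antecedent for isotropic systems), and its overall architecture --- binary symplectic representation, explicit local Clifford $\sqrt{-\mathrm{i}X}_a\prod_{b\in N_a}\sqrt{\mathrm{i}Z}_b$ for the ``if'' direction, and a factorization of graph-state-preserving block-diagonal symplectics into elementary moves for the ``only if'' direction --- is the correct and standard one. One technical slip: $\sqrt{-\mathrm{i}X}$ does \emph{not} swap the X- and Z-entries of a qubit (that is the Hadamard); it maps $Z\mapsto -Y$, i.e.\ adds the Z-entry to the X-entry, while $\sqrt{\mathrm{i}Z}$ adds the X-entry to the Z-entry. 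With the corrected actions the verification goes through: the X-block becomes $I+\Gamma_{\cdot a}e_a^{T}$, which is invertible since $\Gamma_{aa}=0$, and one computes $(I+\Gamma_{\cdot a}e_a^{T})^{-1}(\Gamma+D_{N_a})=\Gamma+\Gamma_{K_{N_a}}$ over $\mathbb{F}_2$. With a genuine swap you would instead be computing a different (pivot-type) operation and the ``if'' direction would not close.

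The substantive gap is that the ``only if'' direction, which carries all of the mathematical content, is only gestured at. The claim that every block-diagonal symplectic $Q$ with $[\,A+\Gamma C\mid B+\Gamma D\,]$ row-equivalent to $[\,I\mid\Gamma'\,]$ factors through local complementations is precisely the theorem, and your induction sketch (pivot $\tau_a\tau_b\tau_a$ to make the nontrivial support independent, then clear it vertex by vertex) asserts rather than proves the two key steps: that a single $\tau_a$ strictly reduces the nontrivial support when that support is independent, and that the residual per-vertex freedom is graph-form-preserving. The latter is also where your parenthetical ``the Z-phase alone never alters the graph'' needs care: $S_v$ flips the diagonal entry $\Gamma_{vv}$ in the Z-block (it maps $X_v\mapsto Y_v$), so it takes a graph state out of graph-state standard form; the correct statement concerns the row space modulo this diagonal, and tracking it is exactly the bookkeeping you flag. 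Since you ultimately defer the factorization to Bouchet, the proposal is a faithful roadmap but not a self-contained proof; that is no worse than the paper, which simply cites the result, but the deferred lemma should be stated precisely rather than left implicit.
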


 It is useful to use the acronym LC-equivalent both for the local Clifford equivalence and for the local complementation equivalence of the related graphs due to \cref{LCisLC}.

As an example, consider two graphs on $n$ vertices: the star graph $S_n$, which has one central vertex and $n-1$ leaves (vertices with just one neighbor), and the fully connected graph $K_n$ (clique). 
These graphs are related by local complementation (see \cref{figStarClicque}), meaning their associated graph states $\ket{K_n}$ and $\ket{S_n}$ are also LC-equivalent.
In fact, both states are also locally equivalent to $\ket{\text{GHZ}_n}$, a maximally entangled state of $n$ qubits \cite{Graph1}.

\begin{figure}[ht]
\centering
\includegraphics{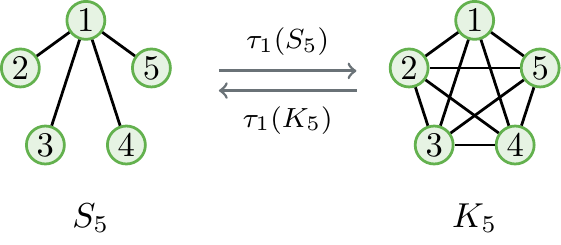}
\caption{The fully connected graph $K_n$ and the star graph $S_n$ are equivalent under the local complementation operation with respect to the center of the star.}
\label{figStarClicque}
\end{figure}

\section{Foliage partition}
\label{invariantQubit}

For any graph, there is a simple binary relation on its vertices that leads to a partition of the vertex set.

\begin{definition}
\label{def:relation}
Consider a graph and its adjacency matrix $\Gamma_G =(a_{ij})_{i,j \in V}$. 
Two vertices are related $v \sim w$ if and only if they are in the same connected component and 
$
a_{v,u_1}\cdot a_{w,u_2}=a_{v,u_2}\cdot a_{w,u_1}
$
for any other pair of vertices $u_1,u_2$.
This relation is an equivalence relation and thus defines a partition on the set of vertices by grouping related vertices into subsets $V_i$\footnote{Recall that, a \textit{partition} of a set $V$ is a collection of its subsets $V_i \subseteq V$, such that they are pairwise disjoint, that is, $V_i \bigcap V_j =\emptyset$ for $i\neq j$, and such that they sum up to the whole set as $\bigcup_{i=1}^k V_j =V$. Any equivalence relation $\sim$ on $V$ provides its partition into equivalence classes defined as $[v]\defeq \{w\in V: w\sim v\}$ for $v\in V$.}. We call this partition the \textit{foliage partition} of the graph, and denote it by $\hat{V} \defeq \{V_1,\ldots,V_k\}$.
We say that the foliage partition is \textit{trivial} if $\hat{V} = \{\{v\}\mid v \in V\}$.
\end{definition}

\cref{figPartition} presents the foliage partition of several graphs on five vertices. A remarkable property of the foliage partition is its invariance under LC-operations.

\begin{theorem}
\label{th1q}
The foliage partition of $G$ is LC-invariant.
\end{theorem}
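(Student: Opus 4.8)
The plan is to show that the equivalence relation $\sim$ from \cref{def:relation} is preserved by a single local complementation $\tau_a$; since any LC-equivalence of graphs is a composition of such operations (\cref{LCisLC}) and $\tau_a^2 = \id$, it suffices to prove that $v \sim w$ in $G$ implies $v \sim w$ in $\tau_a(G)$. Because the connected components are manifestly unchanged by $\tau_a$ (local complementation never adds or removes the property of two vertices lying in the same component), the entire content is the combinatorial identity on the adjacency entries. Write $a_{ij}$ for the entries of $\Gamma_G$ and $a_{ij}'$ for those of $\Gamma_{\tau_a(G)}$; by \cref{eq:local_complementation_adjacency}, $a_{ij}' = a_{ij} + a_{ia}a_{ja}$ for $i \neq j$ (working mod $2$), and $a_{ia}' = a_{ia}$.

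First I would record the defining condition in a cleaner form. The condition ``$a_{v,u_1} a_{w,u_2} = a_{v,u_2} a_{w,u_1}$ for all $u_1, u_2$'' says exactly that the two rows of $\Gamma_G$ indexed by $v$ and $w$, restricted to the columns $V \setminus \{v,w\}$ (or really all columns, once one checks the $u_i \in \{v,w\}$ cases separately using simplicity and the component hypothesis), are proportional as $\GF{2}$-vectors — and since entries are $0/1$ and the vectors are nonzero on a shared component, this forces them to be \emph{equal} on $V \setminus \{v,w\}$, and moreover $v,w$ to be twins (adjacent or non-adjacent). So $v \sim w$ iff, up to the possible edge between them, $v$ and $w$ have the same neighborhood: $N_v \setminus \{w\} = N_w \setminus \{v\}$. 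I would then split into the two twin types and track what $\tau_a$ does to each, or — more slickly — just verify the algebraic identity $a_{v,u_1}' a_{w,u_2}' = a_{v,u_2}' a_{w,u_1}'$ directly. Substituting $a_{i j}' = a_{ij} + a_{ia} a_{ja}$ and expanding, the cross terms that appear are governed by the products $a_{va}a_{wa}$, $a_{v a} a_{u_1 a}$, etc.; using the hypothesis $a_{v u} = a_{w u}$ for all $u \notin \{v,w\}$ (in particular for $u = a$, when $a \notin \{v,w\}$, giving $a_{va} = a_{wa}$) the expansion should collapse. The genuinely separate cases are $a \in \{v, w\}$: if $a = v$, then $\tau_a$ complements within $N_v$, and since $w$'s neighborhood equals $v$'s off $\{v,w\}$, one checks $w$ gets dragged along consistently; the cleanest route here is the twin-neighborhood characterization rather than raw index-chasing.

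The main obstacle I anticipate is precisely the bookkeeping of the boundary cases where one or both of $v, w, u_1, u_2$ coincide with $a$, and the subtlety that the original relation is stated with a universally quantified pair $(u_1,u_2)$ ranging over \emph{all} vertices (including $v,w$ themselves), so one must be careful that the reformulation ``equal neighborhoods up to the mutual edge'' is genuinely equivalent to the stated condition — this is where the ``same connected component'' clause does real work, ruling out the degenerate case where $v$ or $w$ is isolated and the proportionality condition becomes vacuous. Once the twin-neighborhood reformulation is in hand and shown LC-stable, I would also remark that this gives, as a byproduct, the graphical picture (leaves, axils, twins) promised in the introduction. I expect the actual algebra, after the reformulation, to be short; the care is all in the equivalence of the two descriptions and in the $a \in \{v,w\}$ case. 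This is the argument I would expect \cref{proofs} to carry out in detail.
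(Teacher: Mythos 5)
Your overall strategy --- reduce to a single local complementation $\tau_a$ and check that the relation $\sim$ of \cref{def:relation} is preserved, with a case split on where $a$ sits relative to $v,w,u_1,u_2$ --- is exactly the paper's strategy (carried out there in the more general weighted/qudit setting, \cref{th2}). But your pivotal intermediate step is false. You claim that the condition ``$a_{v,u_1}a_{w,u_2}=a_{v,u_2}a_{w,u_1}$ for all $u_1,u_2$'' forces the rows of $v$ and $w$ to be \emph{equal} off $\{v,w\}$, i.e.\ that $v\sim w$ iff $N_v\setminus\{w\}=N_w\setminus\{v\}$. This misses the case where the row of $v$ restricted to $V\setminus\{v,w\}$ is the \emph{zero} vector: if $v$ is a leaf whose unique neighbour is $w$, every product $a_{v,u_1}a_{w,u_2}$ with $u_1,u_2\notin\{v,w\}$ vanishes, the condition holds vacuously, and $v\sim w$ even though $w$ may have many other neighbours. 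The ``same connected component'' clause does not rescue you here --- $v$ and $w$ are adjacent, hence in the same component, yet they are not twins. This leaf--axil case is not a pathology you may ignore: it is one of the three constituent types of the foliage partition (\cref{def:foliage_partition}, type \axilleaf, and \cref{extraOb1}). Consequently every place where you invoke ``$a_{vu}=a_{wu}$ for all $u\notin\{v,w\}$'' --- in particular $a_{va}=a_{wa}$ in the collapse of the cross terms, and the treatment of the case $a=v$ (``$w$ gets dragged along consistently'') --- breaks down for a leaf--axil pair.

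The repair is to abandon the reformulation entirely and run the direct algebraic verification using \emph{only} the cross-product hypothesis, which is what the paper does. Substituting $a'_{ij}=a_{ij}+a_{ia}a_{ja}$ into $a'_{v u_1}a'_{w u_2}-a'_{v u_2}a'_{w u_1}$, the quartic terms cancel identically, the leading terms cancel by the hypothesis with $(x,y)=(u_1,u_2)$, and the cross terms cancel by the hypothesis with $(x,y)=(a,u_1)$ and $(a,u_2)$ --- e.g.\ $a_{va}a_{au_1}a_{wu_2}=a_{vu_2}a_{wa}a_{au_1}$ follows from $a_{va}a_{wu_2}=a_{vu_2}a_{wa}$. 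No row equality is ever needed, and the leaf--axil case is handled automatically (both sides vanish). The boundary cases $a\in\{v,w\}$ are in fact the easy ones, since edges incident to $a$ are unchanged by $\tau_a$. Your observation that $\tau_a^2=\id$ (so one-directional preservation suffices) and that connected components are preserved are both correct and are used implicitly in the paper as well.
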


\noindent
We refer to \cref{proofs}, where we present an exhaustive proof of \cref{th1q}, and show that the relation $\sim$ is, indeed, an equivalence relation. As an immediate consequence of \cref{th1q}, we have the following corollary.

\begin{corollary}
\label{necessary_condition}
Let $G_1$ and $G_2$ be graphs on the same vertex set.
Having the same foliage partition is a necessary condition for graphs $G_1$ and $G_2$ to be LC-equivalent.
\end{corollary}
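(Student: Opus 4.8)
The statement is an immediate consequence of \cref{th1q}, so the plan is simply to unwind the definition of LC-equivalence for graphs into a finite sequence of local complementations and invoke \cref{th1q} repeatedly. Recall that, by the convention fixed after \cref{LCisLC}, two graphs $G_1$ and $G_2$ on a common vertex set $V$ are LC-equivalent precisely when there is a finite sequence of vertices $a_1,\ldots,a_m \in V$ with $G_2 = (\tau_{a_m} \circ \cdots \circ \tau_{a_1})(G_1)$. Since \cref{def:local_complementation} alters only the adjacency matrix and never the vertex set, every intermediate graph in this sequence lives on the same set $V$, so that the foliage partitions of all these graphs are partitions of one and the same set and can meaningfully be compared.

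I would then proceed by induction on the length $m$ of the sequence. The case $m = 0$ is trivial, as then $G_1 = G_2$. For the inductive step, write $G_1' \defeq \tau_{a_1}(G_1)$; by \cref{th1q} the single local complementation $\tau_{a_1}$ preserves the foliage partition, so $\hat{V}(G_1') = \hat{V}(G_1)$, while the induction hypothesis applied to the remaining sequence $a_2,\ldots,a_m$ (which carries $G_1'$ to $G_2$) gives $\hat{V}(G_2) = \hat{V}(G_1')$. Chaining the two equalities yields $\hat{V}(G_1) = \hat{V}(G_2)$, and passing to the contrapositive gives exactly the asserted necessary condition: if the foliage partitions of $G_1$ and $G_2$ differ, then no sequence of local complementations can relate them, and hence, by \cref{LCisLC}, the graph states $\ket{G_1}$ and $\ket{G_2}$ are not LC-equivalent either.

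There is no real obstacle in this argument; all of the mathematical content sits in \cref{th1q}, and the corollary is a routine bookkeeping step. The only point worth flagging explicitly is the hypothesis that $G_1$ and $G_2$ share a vertex set, which is what makes the phrase ``having the same foliage partition'' well-posed; this hypothesis costs nothing, since local complementation never touches the vertices.
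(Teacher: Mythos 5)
Your proof is correct and matches the paper's treatment: the paper states the corollary as an immediate consequence of \cref{th1q}, and your induction over the sequence of local complementations is exactly the routine unwinding that this entails.
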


\begin{figure}[ht]
\centering
\includegraphics{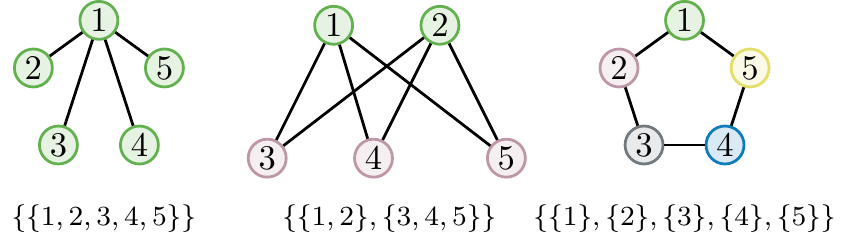}
\caption{The fully connected graph $K_5$, the bipartite graph $K_{2,3}$, and the circular graph $C_5$ are shown with their foliage partition. Since their foliage partitions are different, the graphs cannot be LC-equivalent.}
\label{figPartition}
\end{figure}

We now will examine the possible structures of subsets within a foliage partition and will discover that there are essentially three possible structures for any foliage partition set that contains at least two elements.

\begin{proposition}
\label{def:foliage_partition}
Each set $V_i\in \hat{V}$ in the foliage partition is of 
exactly\footnote{In the unique case of a fully connected graph with two vertices $K_2$, the foliage partition contains one set. This set has a single element that is both of type \axilleaf~and \contwins. However, this exceptional case should not create confusion later on.} 
one of the following types:
\begin{enumerate}
\item[\single] $V_i$ contains only one element, \ie~$|V_i|=1$, or if not:
\item[\axilleaf] $V_i$ is a star graph and is connected to the rest of the graph only by its center,
\item[\contwins] $V_i$ is a fully connected graph and all its vertices have pairwise the same neighborhood,
\item[\discontwins] $V_i$ is a fully disconnected graph and all its vertices have the same  neighborhood.
\end{enumerate}
Each such set $V_i$ is maximal, meaning that adding any vertex not in $V_i$ to create a larger set $V_i'$ would result in a set that is not one of the types listed above.
Conversely, every subset $W\subset V$ that is of one of the types above and that is maximal, belongs to the foliage partition, \ie~$W\in \hat{V}$.
\end{proposition}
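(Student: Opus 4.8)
The plan is to work directly from the defining relation $\sim$ in Definition \ref{def:relation}. First I would unpack what it means for two distinct vertices $v,w$ in the same connected component to satisfy $a_{v,u_1}a_{w,u_2} = a_{v,u_2}a_{w,u_1}$ for all $u_1,u_2 \notin \{v,w\}$. Treating the rows of $\Gamma_G$ restricted to $V\setminus\{v,w\}$ as $0/1$ vectors $r_v, r_w$, the condition says that the $2\times|V\setminus\{v,w\}|$ matrix with these rows has all $2\times 2$ minors vanishing over the integers, i.e. $r_v$ and $r_w$ are proportional. Over $\{0,1\}$ vectors, proportionality forces one of three cases: $r_v = r_w$ (equal neighborhoods outside the pair), or $r_v = \mathbf 0$, or $r_w = \mathbf 0$. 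I would then split on whether the edge $vw$ is present. This yields exactly the local picture: if $r_v = r_w$ and $vw\in E$, then $v,w$ are \emph{true twins} (same closed neighborhood); if $r_v = r_w$ and $vw\notin E$, they are \emph{false twins} (same open neighborhood); if (say) $r_v = \mathbf 0$, then $v$ is a leaf attached to $w$ (its only possible neighbor is $w$, and connectivity forces $vw\in E$), so $w$ is its axil.

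Next I would promote these pairwise statements to a statement about a whole class $V_i$ with $|V_i|\ge 2$. The key observation is that the three pairwise types are mutually incompatible \emph{in a way that propagates}: if $v\sim w$ are true twins and $v\sim x$, then examining the relation between $w$ and $x$ (using that $x$ sees $v$ iff it sees $w$, via $v\sim w$) forces $w\sim x$ to also be the true-twin case, and likewise all of $V_i$ consists of pairwise true twins forming a clique with common external neighborhood — type \contwins. Symmetrically, false twins propagate to type \discontwins. For the leaf case, if $v$ is a leaf with axil $w$ and $v\sim x$ with $x\neq w$, then $x$ must also be a leaf on $w$ (it cannot be a twin of the degree-one vertex $v$ unless the component is just $K_2$), so $V_i\setminus\{w\}$ is a set of leaves all attached to $w$; one then checks $w\sim v$ as well, giving the star $S_{|V_i|}$ attached to the rest of the graph only through its center $w$ — type \axilleaf. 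The degenerate overlap (a vertex that is simultaneously leaf-type and twin-type) happens only when the whole component is $K_2$, which is the footnoted exception. I would also need to handle the boundary subtlety that a leaf's axil $w$ might a priori have further relations pulling in vertices outside the star; checking the relation for $w$ against an outside vertex $u$ with a leaf $u_1=v$ fixed shows $a_{w,u}\cdot a_{v,u_2} = a_{w,u_2}\cdot a_{v,u} = 0$, forcing $a_{w,u}=0$ whenever $w$ has $\ge 2$ neighbors — wait, this needs care, so the cleaner route is: $w\sim$ (outside vertex $u$) would require $u$ and $w$ to be twins or one a leaf, and a short case analysis rules this out once $w$ has at least one leaf and at least one other neighbor.

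For maximality and the converse, I would argue as follows. Maximality: given a class $V_i$ of one of the three types, adding a vertex $u\notin V_i$ to form $V_i'$ breaks the structure — e.g. for a clique-of-true-twins block, $u$ would have to be a true twin of every vertex of $V_i$, but then $u\sim v$ for $v\in V_i$, contradicting that $V_i$ is a full $\sim$-class (unless $u$ were already in it). I would phrase this uniformly: being of one of the listed types with the maximality clause is precisely being a $\sim$-class, which is what makes the ``conversely'' direction immediate. The converse direction — every maximal set $W$ of one of the three types lies in $\hat V$ — follows by showing any two vertices of such a $W$ are $\sim$-related (direct verification of the minor condition from the explicit structure of stars / twin-blocks), hence $W$ is contained in a single $\sim$-class $V_i$; then $V_i$ is itself of one of the three types by the first part, and $W\subseteq V_i$ together with maximality of $W$ gives $W = V_i\in\hat V$.

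The main obstacle I anticipate is not any single deep step but the bookkeeping around the overlaps and boundary cases: (i) confirming that the three local types genuinely partition all non-singleton classes with no class being ``half leaf, half twin'' except for $K_2$; (ii) pinning down exactly which vertex the axil is and that it belongs to the same class (the relation must be checked \emph{both} from the leaf's side and the axil's side, and these are asymmetric); and (iii) ensuring the ``only connected to the rest of the graph by its center'' clause — i.e. that no leaf in an \axilleaf~block has an edge to anything outside the block, and that the axil's outside-edges don't accidentally relate it to outside vertices. I would organize the proof around a single lemma characterizing $v\sim w$ for a fixed pair via the proportionality-of-rows dichotomy, and then let all the global statements fall out of that lemma plus the propagation argument.
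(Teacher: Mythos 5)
Your proposal is correct and follows essentially the same route as the paper: the paper proves the general weighted version (\cref{EquiDefi}, which specializes to \cref{def:foliage_partition} for $d=2$) by splitting each class into the leaf-containing case (\cref{obs1}) and the leafless case (\cref{obs2,EitherConOrNotCon}), and the contradiction arguments there are exactly your ``one side of the minor identity is nonzero while the other vanishes'' computation. Your pairwise row-proportionality dichotomy is the $0/1$ specialization of that same case analysis, just organized as a single pairwise lemma plus propagation rather than as class-level lemmas.
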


\new{We refer to \cref{proofs}, where we present an exhaustive proof of \cref{def:foliage_partition}. }
\cref{def:foliage_partition} allows us to define the \textit{type function} 
$
T:\hat{V} \rightarrow \{\single, \axilleaf, \contwins, \discontwins\}
$ 
associating each set $V_i\in \hat{V}$ in the foliage partition with its type. 
To better understand the types \single, \axilleaf, \contwins, and \discontwins, refer to the examples provided in \cref{fig:types_of_foliage_partition_subsets}. 

Local complementation can alter the types of subsets, as shown in  \cref{fig:types_of_foliage_partition_transformation}
and formalized in \cref{LCinLLC}. 
However, the foliage partition is invariant under local complementation. Informally speaking, the property of sharing the same neighborhood or being a star subset turns out to be LC-invariant. 

\begin{figure*}
\centering
		\includegraphics[width=0.23\textwidth]{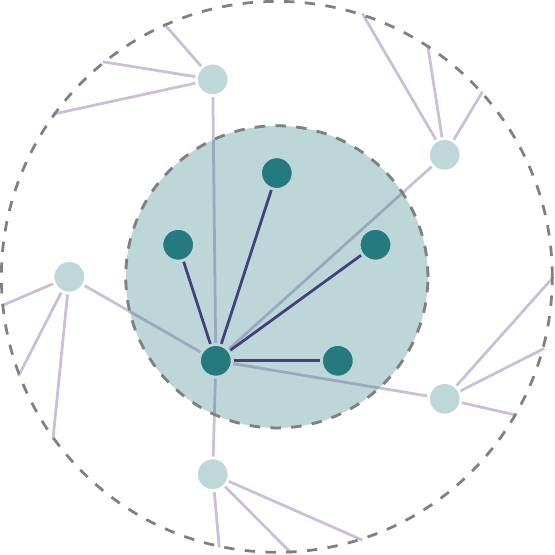}
 \hfill
		\includegraphics[width=0.23\textwidth]{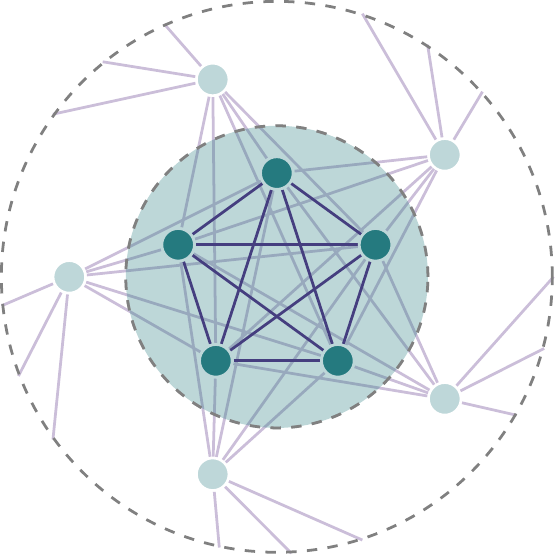}
 \hfill
		\includegraphics[width=0.23\textwidth]{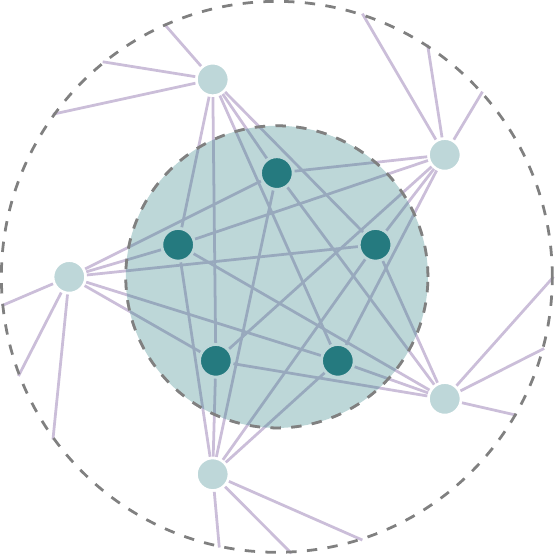}
 \hfill
		\includegraphics[width=0.23\textwidth]{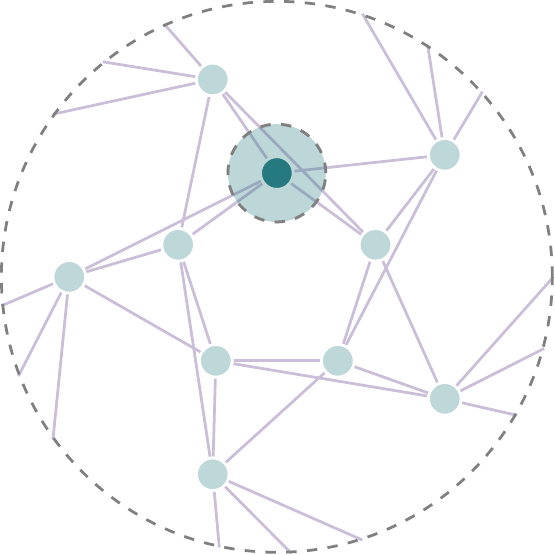}
\caption{The possible types for sets $V_i\in \hat{V}$ in the foliage partition of graphs are shown in the green-shaded inner circles.  From the left: a star graph connected to the rest of the graph only by its center, a fully connected graph with vertices sharing pairwise the same neighborhood, a fully disconnected graph with vertices sharing the same neighborhood, and finally a one-element set.}
\label{fig:types_of_foliage_partition_subsets}
\end{figure*}

\begin{figure*}
\centering
\includegraphics[width = \textwidth, 
    trim=0cm 
        9cm 
        0cm 
        7.75cm,
    clip]{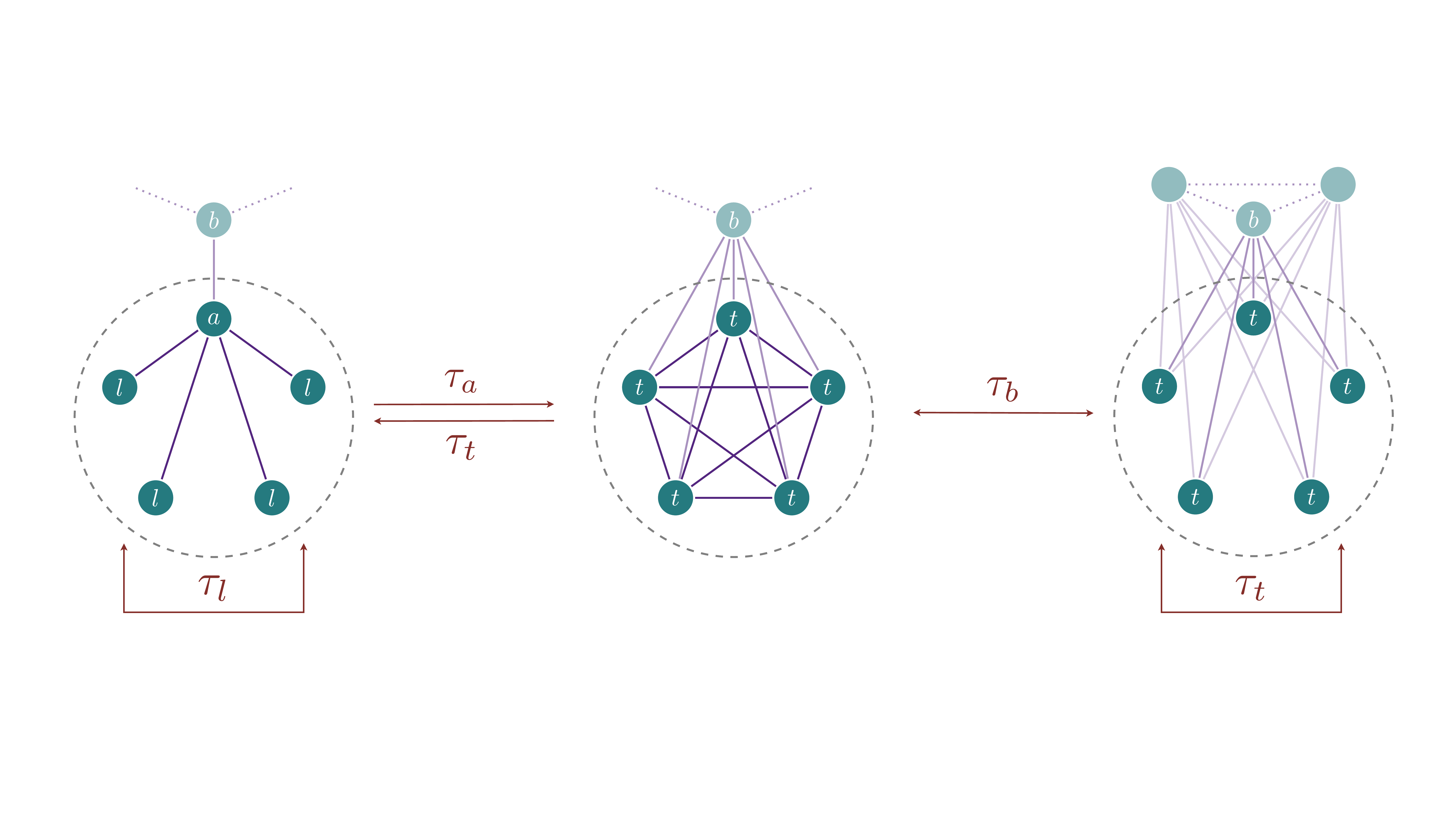}
\caption{
Transformation of the types of sets in a foliage partition. A local complementation $\tau_{a}$ with respect to the axil $a$ of \axilleaf~transforms the type \axilleaf~into the \mbox{type \contwins.} Conversely, any vertex $t$ of \contwins~can be transformed with $\tau_{t}$ into the axil of \axilleaf. Likewise, a local complementation $\tau_{b}$ with respect to a neighbor $b$ of a set of disconnected twins \discontwins~transforms the type \discontwins~into the type \contwins. Conversely, any vertex $t$ of \contwins~can be transformed with $\tau_{t}$ into a disconnected twin in \discontwins. Local complementations with respect to a leaf in type \axilleaf~or a twin in type \discontwins~leave the sets invariant.
For simplicity not all neighbors of $b$ are shown.
}
\label{fig:types_of_foliage_partition_transformation}
\end{figure*}

\begin{figure*}
\centering
		\includegraphics[width=\textwidth, 
    trim=1cm 
        7cm 
        1cm 
        7cm,
    clip]{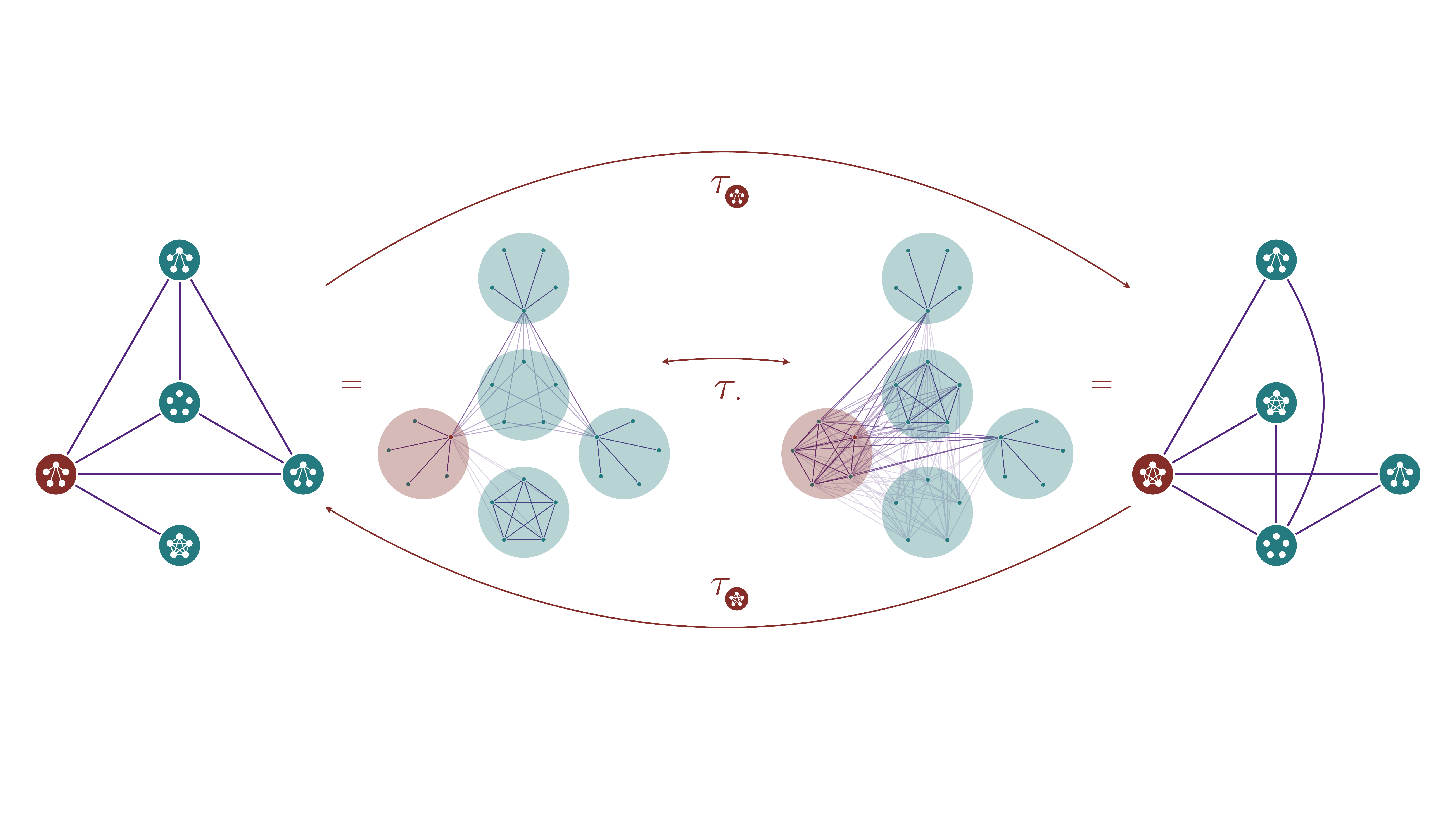}
\caption{Lifted local complementation operations on the foliage representation a graph.
	The two middle graphs are related via 
	local complementation with respect to the node 
	highlighted in dark brown, while the two outer 
	graphs are the foliage representations of the 
	two inner ones and are related via 
	lifted local complementation
	with respect to the node (set)
	highlighted in light brown.
 The concept of lifting local complementations to the foliage representation is formalized in \cref{LCinLLC}.
}
\label{fig:lifted_foliage_LC}
\end{figure*}
As we reach the end of this section, it is pertinent to touch upon the name foliage partition. 
The \textit{foliage} of a graph represents the comprehensive collection of all leaves, axils, and twins. Here, a \textit{leaf} denotes a vertex with only one neighbor, and an \textit{axil} is defined as the unique neighbor of a leaf. \textit{Twins} are vertices that share the same neighborhood, demonstrating a common interconnectedness within the graph.

It is noteworthy to mention that the foliage being an LC-invariant was previously established \cite{Hahn_2022}. However, the focus of the current exploration lies in identifying the nuanced relationship between the foliage and the \textit{foliage partition}. This relationship is captured by two direct consequences of \cref{def:foliage_partition}, forming the foundation for the following observations.

\begin{observation}
\label{extraOb1}
Two vertices $v_1,v_2 \in V $ belong to the same set in the foliage partition $\hat{V}$ if and only if either they are twins, or a leaf and an adjacent axil, or a pair of leaves which are adjacent to the same axil. 
\end{observation}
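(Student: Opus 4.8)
The statement is a direct unpacking of \cref{def:foliage_partition}, so the plan is to read off the possibilities by matching vertices $v_1,v_2$ to the type of the foliage set $V_i$ containing them. First I would dispose of the trivial direction: if $v_1,v_2$ lie in the same $V_i\in\hat V$, then $|V_i|\ge 2$, so by \cref{def:foliage_partition} the set $V_i$ is of type \axilleaf, \contwins, or \discontwins. In the \contwins~and \discontwins~cases, all vertices of $V_i$ share the same neighborhood outside $V_i$ (and inside $V_i$ they are either all adjacent or all non-adjacent), so $v_1,v_2$ have the same neighborhood in $G$ up to the question of whether they are adjacent to each other; in the \contwins~case they are adjacent twins, in the \discontwins~case they are non-adjacent twins. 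In either case they qualify as ``twins'' in the sense of the statement (I would be explicit that ``twins'' here is meant to include both the adjacent and non-adjacent variants, matching the definition of twins given just before the observation, and that the $K_2$ exceptional case from the footnote of \cref{def:foliage_partition} is consistent since the two vertices of $K_2$ are adjacent twins). In the \axilleaf~case, $V_i$ is a star connected to the rest of the graph only through its center $c$: its vertices are $c$ together with leaves $\ell_1,\dots,\ell_m$ ($m\ge 1$). Then $\{v_1,v_2\}$ is either $\{c,\ell_j\}$ — a leaf and its adjacent axil — or $\{\ell_i,\ell_j\}$ — two leaves adjacent to the common axil $c$. This exhausts all cases and proves the ``only if'' direction.

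For the converse, I would check that each of the three listed configurations forces $v_1,v_2$ into a common foliage set. If $v_1,v_2$ are twins, consider a maximal set $W$ of mutual twins containing them (all sharing the same neighborhood, all pairwise adjacent or all pairwise non-adjacent); this $W$ is of type \contwins~or \discontwins~and maximal, so by the ``conversely'' clause of \cref{def:foliage_partition} we have $W\in\hat V$, hence $v_1\sim v_2$. If $v_1=\ell$ is a leaf and $v_2=a$ its adjacent axil, take the star $W$ consisting of $a$ together with all leaves adjacent to $a$; one checks $W$ is connected to the rest of the graph only through $a$ (any neighbor of a leaf of $a$ other than $a$ would contradict it being a leaf), so $W$ is of type \axilleaf, and it is maximal, hence $W\in\hat V$ and $\ell\sim a$. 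If $v_1=\ell_1,v_2=\ell_2$ are both leaves adjacent to the same axil $a$, the same maximal star $W$ through $a$ contains both, again giving $\ell_1\sim\ell_2$. One small point to be careful about: the maximality of these candidate sets — e.g. that a maximal twin-class really does coincide with the corresponding element of $\hat V$ rather than being strictly contained in a larger foliage set — is exactly what the ``maximal $\Rightarrow$ belongs to $\hat V$'' part of \cref{def:foliage_partition} guarantees, so I would invoke it explicitly rather than re-derive it.

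The only genuine subtlety, and the place I expect to spend a sentence or two, is the interaction between the \axilleaf~case and the twin case when a set has a single leaf: a set $\{a,\ell\}$ of type \axilleaf~with one leaf could a priori also be viewed through the twin lens, but since here $\ell$ and $a$ need not share a neighborhood, the correct reading is the leaf/axil one, and I should make sure the statement's disjunction (``twins, \emph{or} a leaf and adjacent axil, \emph{or} two leaves sharing an axil'') is genuinely an inclusive ``or'' so no case is lost. Beyond that, the proof is essentially bookkeeping over the three nontrivial types in \cref{def:foliage_partition}, and the main obstacle is simply presenting it cleanly rather than any mathematical difficulty. I would close by noting that \cref{extraOb1} is really just a translation of \cref{def:foliage_partition} into the vocabulary of leaves, axils, and twins, which is why it is stated as an observation rather than a proposition.
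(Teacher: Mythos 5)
Your proposal follows the same route the paper intends: the paper offers no separate proof of \cref{extraOb1} and presents it as a direct consequence of \cref{def:foliage_partition}, which is exactly the case analysis over the types \axilleaf, \contwins, \discontwins~that you carry out. The forward direction and the leaf/axil cases of the converse are fine.

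One step in your converse is not quite right as written. For the twin case you take the maximal set $W$ of mutual twins containing $v_1,v_2$ and claim that $W$ is maximal in the sense of \cref{def:foliage_partition}, so that the ``conversely'' clause gives $W\in\hat V$. That fails precisely when $v_1,v_2$ are non-adjacent twins whose common neighborhood is a single vertex $c$: then $W$ is the set of leaves attached to $c$, which is of type \discontwins~but is \emph{not} maximal in the proposition's sense, since adjoining $c$ produces a set of type \axilleaf. In that configuration the element of $\hat V$ containing $v_1,v_2$ is the star $W\cup\{c\}$, not $W$, so the invocation of the ``conversely'' clause is invalid there. The conclusion $v_1\sim v_2$ still holds --- this configuration is exactly your third case --- so the fix is only to note that twins which are leaves of a common axil are handled by the star argument, and to restrict the maximal-twin-class argument to twins with at least two common neighbors (or adjacent twins), where no vertex can be adjoined to form a star and the maximality transfers as you claim. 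Alternatively, all three converse cases follow in one line directly from \cref{def:relation}, since in each configuration the rows $a_{v_1,\cdot}$ and $a_{v_2,\cdot}$ are proportional outside $\{v_1,v_2\}$, which sidesteps the maximality bookkeeping entirely.
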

\begin{observation}
\label{extraOb2}
The union of all sets in the foliage partition that are at least of size two is the foliage of $G$.
\end{observation}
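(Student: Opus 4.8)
The plan is to derive the claim directly from \cref{extraOb1} (equivalently, from \cref{def:foliage_partition}) by proving the two set-inclusions separately. Recall that the foliage of $G$ is by definition the set of all vertices that are leaves, axils, or twins. For the direction stating that every foliage vertex lies in a foliage-partition set of size at least two, I would take a vertex $v$ in the foliage. If $v$ is a leaf, let $a$ be its unique neighbor; then $a$ is the axil of $v$, so $v \sim a$ by \cref{extraOb1} and the class $[v]$ contains both $v$ and $a$. If $v$ is an axil, then $v$ is the unique neighbor of some leaf $\ell$, and the same reasoning gives $\ell \sim v$. If $v$ is a twin, it shares its (open or closed) neighborhood with some $w \neq v$, so $v \sim w$ by \cref{extraOb1}. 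In all three cases $[v]$ has at least two elements, so $v$ lies in the union of the foliage-partition sets of size at least two.

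For the reverse inclusion I would fix a set $V_i \in \hat{V}$ with $|V_i| \geq 2$ and a vertex $v \in V_i$, and pick any $w \in V_i$ with $w \neq v$. Since $v \sim w$, \cref{extraOb1} leaves exactly three possibilities: $v$ and $w$ are twins; or one of them is a leaf and the other its adjacent axil; or $v$ and $w$ are leaves adjacent to a common axil. In every case $v$ is itself a leaf, an axil, or a twin, hence $v$ lies in the foliage of $G$. Alternatively, one can argue straight from \cref{def:foliage_partition}: a set of type \contwins{} or \discontwins{} consists of twins, and a set of type \axilleaf{} of size at least two consists of its central axil together with one or more leaves.

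Since \cref{extraOb1} already does the real work, what remains is only bookkeeping, and the sole point requiring care will be the handful of degenerate configurations. The extreme case is the single edge $K_2$, whose unique foliage-partition set has size two and is simultaneously of type \axilleaf{} and \contwins{} (cf.\ the footnote to \cref{def:foliage_partition}); there both vertices are at once a leaf/axil and a twin, so the statement holds. I would also confirm that isolated vertices are treated consistently: two isolated vertices are not related under $\sim$, since the relation lives inside a single connected component, so they must not be counted as twins in the notion of foliage used here; with that convention --- the one implicit throughout --- the claim is unaffected.
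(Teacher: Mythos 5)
Your proof is correct and takes essentially the same route as the paper: the paper presents \cref{extraOb2} as a direct consequence of \cref{def:foliage_partition} (equivalently of \cref{extraOb1}) without writing out the two inclusions, and your argument simply makes that explicit, including the sensible handling of the $K_2$ and isolated-vertex edge cases.
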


As we continue our exploration, we unravel the intricate details of the foliage partition and how it can be used to represent graphs in a more compact way.

\section{Foliage representation}
\label{FoliageRepresentation}

The foliage partition provides a convenient way to represent any graph $G$ by grouping together all vertices that belong to the same set in the foliage partition. First, we introduce the concept of a \textit{foliage graph}. This graph encapsulates information about the connections between different sets in the foliage partition of the original graph.

\begin{definition}[Foliage graph]
\label{def.foliage_graph}
Consider a graph $G=(V,E)$ with the foliage partition $\hat{V}=\{V_1,\ldots,V_k \}$. We define its foliage graph $\hat{G}=(\hat{V}, \hat{E} )$ on the set of vertices given by the foliage partition $\hat{V}$. Two vertices $V_i,V_j\in \hat{V}$ are connected if and only if the subsets $V_i,V_j$ are connected in the initial graph $G$. In other words, $v_i\in V_i$ and $v_j\in V_j$ exist such that $(v_i, v_j)\in E$. 
\end{definition}

The foliage graph, along with the type function, captures most information about the underlying graph $G$. Note, however, that for $V_i\in \hat{V}$ of the type $T(V_i)=\axilleaf$, the information which vertex $v\in V_i$ is the axil is lost in both $\hat{G}$ and $T$.  
To address this, we introduce the \textit{axil set} of $G$, denoted as $A\subset V$, which contains all axils of $G$. The axil set $A$ includes exactly one element from each $V_i$ where $T(V_i)=\axilleaf$, and no elements from other subsets.

\begin{definition}[Foliage representation]
\label{def.foliage_representation}
To a given graph $G$ we associate the tuple $ (\hat{G},T,A)$ consisting of its foliage graph $\hat{G}$, its type function $T$, and its axil set $A$, and denote it as the \textit{foliage representation} of the graph $G$.
\end{definition}

The foliage representation of a graph is uniquely defined and faithfully represents the corresponding graph. 

\begin{proposition}
The foliage representation of a graph allows to recover the original graph.
\end{proposition}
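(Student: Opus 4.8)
The plan is to show that the triple $(\hat G, T, A)$ contains exactly the data needed to reconstruct the adjacency matrix $\Gamma_G$, by describing the reconstruction procedure explicitly and arguing it is unambiguous. First I would fix notation: the vertex set of $G$ is recovered as $V = \bigcup_{V_i \in \hat V} V_i$, and for each block $V_i \in \hat V$ we know its size and its type $T(V_i) \in \{\single, \axilleaf, \contwins, \discontwins\}$. The core observation, which follows directly from \cref{def:foliage_partition} and \cref{extraOb1}, is that the adjacency relation between two vertices $v \in V_i$ and $w \in V_j$ depends only on the pair of blocks $(V_i, V_j)$ together with the type information and, in the case of a $\axilleaf$-block, on whether $v$ (or $w$) lies in the axil set $A$.

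The key steps, in order, are as follows. (1) \emph{Intra-block edges.} For a block $V_i$ with $|V_i| = 1$ there is nothing to do. If $T(V_i) = \contwins$, then by \cref{def:foliage_partition} $V_i$ induces a clique, so all pairs inside $V_i$ are edges. If $T(V_i) = \discontwins$, then $V_i$ is independent, so no pairs inside are edges. If $T(V_i) = \axilleaf$, then $V_i$ induces a star whose centre is the unique element of $V_i \cap A$; hence the edges inside $V_i$ are exactly those joining the axil to each of the $|V_i| - 1$ leaves. This determines every entry $a_{vw}$ with $v, w$ in the same block. (2) \emph{Inter-block edges.} For $i \neq j$, I would first note that by the maximality clause of \cref{def:foliage_partition} (and \cref{extraOb1}), every vertex of $V_i$ has the same adjacency status toward $V_j$ as a whole \emph{except} possibly when $V_i$ is a $\axilleaf$-block, in which case only the axil can be adjacent to vertices outside $V_i$ and the leaves are adjacent to nothing outside. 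More precisely: if $(V_i, V_j) \in \hat E$ and neither block is of type $\axilleaf$, then by twin-type homogeneity all of $V_i \times V_j$ are edges; if $(V_i, V_j) \notin \hat E$, none are; and if, say, $T(V_i) = \axilleaf$, then the leaves of $V_i$ contribute no external edges, while the axil of $V_i$ is adjacent to all of $V_j$ when $(V_i,V_j)\in\hat E$ and to none of $V_j$ otherwise (and symmetrically if $T(V_j) = \axilleaf$ as well, only the two axils can be joined). Since $A$ tells us which vertex is the axil in each such block, every inter-block entry $a_{vw}$ is pinned down. (3) Combining (1) and (2) reconstructs $\Gamma_G$ completely, hence $G$; and since $(\hat G, T, A)$ is itself uniquely determined by $G$ (the partition $\hat V$ is canonical, $\hat E$ and $T$ are functions of $G$, and $A$ is the set of axils of $G$), the correspondence $G \leftrightarrow (\hat G, T, A)$ is a bijection onto its image.

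The main obstacle I anticipate is justifying the homogeneity claims in step (2) with full rigor — specifically, verifying that for a non-$\axilleaf$ block $V_i$, either all vertices of $V_i$ are adjacent to a given external vertex $u$ or none are, and that a $\axilleaf$-block's leaves are genuinely isolated from everything outside the block. The first of these is essentially the defining property of twins together with the fact that blocks are equivalence classes of $\sim$; the second is part of the type-$\axilleaf$ characterization (\enquote{connected to the rest of the graph only by its center}) in \cref{def:foliage_partition}. I would also handle the degenerate case flagged in the footnote to \cref{def:foliage_partition}, namely $G = K_2$, where the single block is simultaneously of type $\axilleaf$ and $\contwins$; here the reconstruction is trivially the single edge regardless of which reading one takes, so it causes no real difficulty. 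Everything else is bookkeeping: once the case analysis on the (unordered) pair of block types is laid out, each $a_{vw}$ is read off deterministically from $(\hat G, T, A)$.
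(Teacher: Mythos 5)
Your proposal is correct and follows essentially the same route as the paper: both reconstruct $\Gamma_G$ entry by entry, splitting into intra-block and inter-block cases and using the type function together with the axil set to decide adjacency (your homogeneity observations in step (2) are exactly the paper's conditions that $V_i,V_j$ be joined in $\hat G$ and that a vertex in a \axilleaf-block participate in external edges only if it lies in $A$). If anything, your write-up is slightly more explicit about why the inter-block adjacency is homogeneous and about the $K_2$ degeneracy, but the argument is the same.
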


\begin{proof}
Let $(\hat{G},T,A)$ be a tuple, where $\hat{V}$ is a partition of a set $V$, $\hat{G}$ is a graph on $\hat{V}$, $T: \hat{V}\rightarrow \{\single, \axilleaf, \contwins, \discontwins\}$ is an arbitrary function, and $A \subset V$ is a subset containing exactly one element from each $V_i$ such that $T(V_i)=\axilleaf$, and no elements from other subsets. 

We shall define the graph $G$ on the set of $V$ vertices by defining the set of its edges. Consider any two vertices $v,w \in V$. If both vertices $v,w$ belong to the same element $V_i$ in $\hat{V}$, we shall connect them by an edge if and only if either $T(V_i)=\contwins$, or both $T(V_i)=\axilleaf$ with either $v\in A$ or $w \in A$. Suppose that vertices $v,w$ belong to different elements in $\hat{V}$, \ie~$v\in V_i, w\in V_j$, $i\neq j$. We shall connect $v$ and $w$ by an edge if and only if the following three conditions are simultaneously satisfied: 
(i) $V_i, V_j$ are connected in $\hat{G}$, 
(ii) $T(V_i )= \axilleaf \Rightarrow v \in A$, 
(iii) $T(V_j )= \axilleaf \Rightarrow w \in A$.

\new{
We will now verify that the map defined in the previous paragraph recovers the initial graph. Consider an arbitrary graph $G$ on the set of vertices $V$, its foliage partition $\hat{V}$, and its foliage representation $(\hat{G}, T, A)$. Moreover, consider two vertices $v, w \in V$. Suppose first that the vertices $v, w \in V$ belong to the same set in the foliage partition $v, w \in V_i$. 
In this case, $v$ and $w$ are connected in $G$ if and only if either $T(V_i) = \contwins$, or both $T(V_i) = \axilleaf$ with either $v \in A$ or $w \in A$. Note that the map defined in the previous paragraph captures exactly this condition. A similar analysis can be performed in the second case, where $v$ and $w$ belong to different sets in the foliage partition.
}
\end{proof}

Since local complementation operations do not change the foliage partition, the foliage graphs of any starting graph and of a locally complemented graph derived from this starting graph are defined on the same set of vertices. Below, we show how local complementation operations $\tau_a (G)$ change the foliage representation of a graph $G$. An example of this lifted local complementation is shown in~\cref{fig:lifted_foliage_LC}.

\begin{observation}[Lifted local complementation]
\label{LCinLLC}
Consider the graph $G=(V,E)$ with corresponding foliage representation $ (\hat{G}, T, A)$. Let vertex $a\in V$, and $ V_a\in \hat{V}$ be the element in $ \hat{V} $ such that $a\in V_a$. The foliage representation of the locally complemented graph $\tau_a (G)$ is given by $ (\tau_{V_a} (\hat{G}), T', A')$, where $\tau_{V_a} ( \hat{G})$ is the locally complemented graph $\hat{G}$ with respect to its vertex $V_a$, and $ T', A'$ are modified  type function and axil set given by
\begin{equation}
T' (V_j)\defeq 
\begin{cases} 
\axilleaf & \text{if } j=a \text{ and } T(V_a)= \contwins,\\ 
\contwins & \text{if } j=a \text{ and } T(V_a)= \axilleaf,\\ 
\discontwins &\text{if }V_a,V_j\in \hat{E}\text{ and }T(V_j)= \contwins,\\ 
\contwins &\text{if }V_a,V_j\in \hat{E}\text{ and }T(V_j)= \discontwins,\\ 
T(V_j) & \text{otherwise,} 
\end{cases} 
\label{Tprime}
\end{equation} 
for all $V_j \in \hat{V}$ and
\begin{equation} 
A' \defeq 
\begin{cases} 
A \cup \{a\} & \text{if } T(V_a)= \contwins,\\ 
A \setminus \{a\} & \text{if } T(V_a)= \axilleaf,\\ 
A & \text{otherwise.} 
\end{cases} 
\label{Aprime}
\end{equation} 
\end{observation}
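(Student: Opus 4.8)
The plan is to verify \cref{LCinLLC} by a direct case analysis, leveraging \cref{def:foliage_partition} (which already classifies the possible types of foliage sets) and \cref{th1q} (which guarantees that the partition $\hat{V}$ itself is unchanged by $\tau_a$, so only the type function and axil set can change, and $\hat G$ can change only by the local complementation $\tau_{V_a}$ at the combinatorial level). First I would record the easy structural observation: since $\hat V$ is LC-invariant, the vertex sets of $\hat G$ and of the foliage graph of $\tau_a(G)$ coincide, so it suffices to (i) check that two sets $V_i, V_j$ are connected in $\tau_a(G)$ iff $V_i, V_j$ are adjacent in $\tau_{V_a}(\hat G)$, and (ii) check the claimed formulas \eqref{Tprime}, \eqref{Aprime} for the new type function and axil set. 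For (i), I would use the fact that $V_i$ and $V_j$ are connected in $G$ iff every vertex of $V_i$ is adjacent (in $G$) to every vertex of $V_j$, or to none — this "all-or-nothing" property of the bipartite pattern between two distinct foliage sets follows from the relation $\sim$ in \cref{def:relation} — so adjacency between foliage sets behaves exactly like adjacency between single vertices under $\tau_a$, giving $\tau_{V_a}(\hat G)$.

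Next I would handle the type function \eqref{Tprime} by splitting on the location of $a$ relative to $V_j$. If $V_j = V_a$ (the case $j=a$): a local complementation at a vertex $a$ inside a set of connected twins \contwins{} turns that set into a star with $a$ as axil (this is exactly the $K_m \leftrightarrow S_m$ move of \cref{figStarClicque} restricted to $V_a$, and the external neighborhood is untouched because all vertices of $V_a$ share it), and symmetrically a complementation at the axil $a$ of an \axilleaf{} set turns it into connected twins; complementation at a leaf of \axilleaf{} leaves the set as \axilleaf{} — this is the content of \cref{fig:types_of_foliage_partition_transformation}, which I would cite. If $V_j \neq V_a$ and $V_j$ is adjacent to $V_a$ in $\hat G$, then every vertex of $V_j$ is a neighbor of $a$, so $\tau_a$ adds a clique on $N_a \supseteq V_j$: this flips the internal edges of $V_j$, sending \contwins{} to \discontwins{} and vice versa, while \axilleaf{} and \single{} are stable (a one-element set has no internal edges; an \axilleaf{} set adjacent to $V_a$ must be attached through its axil, and the internal star structure is preserved since $\tau_a$ toggles the complete graph on $N_a$, which restricted to the leaves-plus-axil of $V_j$ adds exactly the edges making it... — here I would be slightly careful and note that an \axilleaf{} set can only touch $V_a$ via its axil, so its leaves lie outside $N_a$ and are unaffected, and the axil-leaf edges are also unaffected; hence it stays \axilleaf{}). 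If $V_j$ is neither $V_a$ nor adjacent to it, then $\tau_a$ changes neither the internal edges of $V_j$ nor its external connection pattern, so $T'(V_j) = T(V_j)$. These subcases exhaust \eqref{Tprime}.

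Finally, \eqref{Aprime} is immediate once \eqref{Tprime} is established: the axil set records, for each \axilleaf{} set, which vertex is the axil. The only set whose membership in "type \axilleaf" is created or destroyed by $\tau_a$ is $V_a$ (for $V_j \neq V_a$, being \axilleaf{} is preserved and the distinguished axil vertex is preserved, since as noted the axil of such a $V_j$ lies either outside $N_a$ or its star structure is untouched). When $T(V_a) = \contwins$, the set becomes \axilleaf{} with $a$ as its new axil, so $A' = A \cup \{a\}$; when $T(V_a) = \axilleaf$ and $a$ was its axil, the set becomes \contwins{} and loses its axil, so $A' = A \setminus \{a\}$; in all other cases ($T(V_a) \in \{\single, \discontwins\}$, or $T(V_a)=\axilleaf$ but $a$ is a leaf — though in the latter case $V_a$ stays \axilleaf{} with the same axil, so $a \notin A$ and $A$ is unchanged) we get $A' = A$.

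I expect the main obstacle to be the bookkeeping in the case $V_j \neq V_a$, $V_j$ adjacent to $V_a$, $T(V_j) = \axilleaf$: one must argue carefully that $\tau_a$ does not disturb the internal star structure of $V_j$ nor promote a leaf to an axil. The clean way is to observe that an \axilleaf{} set is, by \cref{def:foliage_partition}, connected to the rest of the graph \emph{only through its center}; hence if $V_j$ is adjacent to $V_a$ then the axil of $V_j$ is the unique vertex of $V_j$ in $N_a$, all leaves of $V_j$ lie outside $N_a$, and so the complete graph $K_{N_a}$ added by $\tau_a$ contributes no edge with both endpoints in $V_j$ — leaving $V_j$'s internal edges, and thus its \axilleaf{} type and its axil, exactly as before. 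Everything else is a routine toggling-of-edges computation modulo $2$, which I would not spell out in full.
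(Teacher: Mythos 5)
The paper states \cref{LCinLLC} as an observation without a written proof — it is justified only by the picture and caption of \cref{fig:types_of_foliage_partition_transformation} — so your case-by-case verification is the natural route and essentially formalizes what the paper leaves implicit. Two points need repair, though. First, the ``all-or-nothing'' claim you invoke for part (i) — that two distinct foliage sets are joined by a complete or empty bipartite graph — is false whenever one of them is of type \axilleaf: its leaves are adjacent to nothing outside the set. The statement you actually need (and in effect use later) is that the bipartite pattern between the \emph{externally connecting} vertices of $V_i$ and $V_j$ is complete or empty, together with the fact that for a set $V_i$ adjacent to $V_a$ these externally connecting vertices are exactly $V_i\cap N_a$; only in this formulation does toggling $K_{N_a}$ translate into toggling the edge $V_iV_j$ of $\hat{G}$, which is what part (i) requires.

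Second, and more seriously, your own (correct) analysis of the subcase ``$a$ is a leaf of an \axilleaf\ set $V_a$'' contradicts the formulas you then claim to have verified. In that subcase $N_a$ consists of the single axil, so $\tau_a(G)=G$ and the foliage representation is unchanged; yet \eqref{Tprime} unconditionally returns $T'(V_a)=\contwins$ whenever $T(V_a)=\axilleaf$, and $\tau_{V_a}(\hat{G})\neq\hat{G}$ as soon as $V_a$ has two non-adjacent neighbours in $\hat{G}$ (take a centre $c$ with two leaves and two pendant paths $c\text{--}u\text{--}p$, $c\text{--}v\text{--}q$: complementing at a leaf of $c$ does nothing to $G$, but $\tau_{V_a}(\hat{G})$ would join $\{u,p\}$ and $\{v,q\}$). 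So you cannot conclude that ``these subcases exhaust \eqref{Tprime}''. The observation is only correct under the implicit hypothesis that $a$ is the axil whenever $T(V_a)=\axilleaf$ (equivalently $a\in A$), which is consistent with the figure caption treating the leaf case separately as one that ``leaves the sets invariant''; you should either add this hypothesis explicitly or record the leaf case as an exception in which nothing at all changes. With these two amendments the remaining bookkeeping in your argument is sound and matches the intended content of the observation.
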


As a direct consequence of \cref{LCinLLC}, we obtain the following necessary condition for LC-equivalence, which is stronger than the one presented in \cref{necessary_condition}.

\begin{corollary}
Consider two graphs $G_1$ and $G_2$ on the same set of vertices with the same foliage partition. 
If the graphs $G_1$ and $G_2$ are LC-equivalent the corresponding foliage graphs $\hat{G_1}$ and $\hat{G_2}$ are also LC-equivalent.
\end{corollary}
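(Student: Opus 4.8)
The plan is to lift a sequence of local complementations on the level of graphs to a sequence of local complementations on the level of foliage graphs, using \cref{LCinLLC}. Since $G_1$ and $G_2$ are LC-equivalent, by definition (see \cref{LCisLC}) there is a finite list of vertices $a_0,\ldots,a_{m-1}\in V$ such that, setting $H_0\defeq G_1$ and $H_{i+1}\defeq \tau_{a_i}(H_i)$, one has $H_m=G_2$. First I would invoke \cref{th1q} to record that all the intermediate graphs $H_0,\ldots,H_m$ carry one and the same foliage partition $\hat{V}$; in particular the foliage graphs $\hat{H}_0,\ldots,\hat{H}_m$ are graphs on the \emph{same} vertex set $\hat{V}$, so it is meaningful to speak of local complementations relating them, and the symbol $\tau_{V_{a_i}}$ below refers to the same vertex of $\hat{V}$ at every step.

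Next I would apply \cref{LCinLLC} to each step of the sequence: writing $V_{a_i}\in\hat{V}$ for the (step-independent) block of the foliage partition containing $a_i$, the foliage representation of $H_{i+1}=\tau_{a_i}(H_i)$ is $(\tau_{V_{a_i}}(\hat{H}_i),T',A')$, and in particular its foliage graph is $\hat{H}_{i+1}=\tau_{V_{a_i}}(\hat{H}_i)$. A straightforward induction on $i$ then yields
\begin{equation}
\hat{G}_2=\hat{H}_m=\tau_{V_{a_{m-1}}}\circ\cdots\circ\tau_{V_{a_0}}\bigl(\hat{G}_1\bigr),
\end{equation}
so $\hat{G}_1$ and $\hat{G}_2$ are related by a finite sequence of local complementations on $\hat{V}$, hence LC-equivalent. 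The standing hypothesis that $G_1$ and $G_2$ share a foliage partition is only needed to make the statement ``$\hat{G}_1$ and $\hat{G}_2$ on the same vertex set'' meaningful in general; for $G_1,G_2$ that are genuinely LC-equivalent it is automatic from \cref{th1q}.

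I do not expect any serious obstacle here: the entire mechanism is already packaged in \cref{LCinLLC}, and the proof is little more than iterating that observation along the complementation sequence. The one point that needs care — and the only reason \cref{th1q} is invoked rather than \cref{LCinLLC} alone — is the bookkeeping that the foliage partition, and therefore both the vertex set of the foliage graph and the interpretation of each $\tau_{V_{a_i}}$, does not drift as we move along the sequence; once that is fixed, the induction is immediate.
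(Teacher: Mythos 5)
Your argument is correct and is exactly the one the paper intends: the corollary is stated there as a direct consequence of \cref{LCinLLC}, obtained by lifting each local complementation in the sequence relating $G_1$ to $G_2$ to a local complementation of the foliage graph, with \cref{th1q} guaranteeing that the common vertex set $\hat{V}$ does not change along the way. Your write-up just makes the induction explicit.
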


Note that the foliage graph $\hat{G} $ generally has a smaller number of vertices than the initial graph $G$, so it is easier to explore its LC-equivalence class. The foliage graph of a graph with trivial foliage partition is isomorphic to the graph itself.
Having understood the basic properties of foliage graphs, we can delve deeper into their structure by introducing the idea of recursion in this context.

Foliage graphs can be defined recursively. 
Given the foliage graph $\hat{G}$ of $G$, we can define the \textit{$2$nd-foliage graph} $\hathat{G}$ as the foliage graph of $\hat{G}$. 
Obviously, we can continue and may define  the \textit{$k$th-foliage graph} for any integer $k$. 
However, this continuation saturates at a certain point. At this point, for some integer $k$, all $k'$th-foliage graphs for $k'\geq k$ become isomorphic to this $k$th-foliage graph. Consequently, this process can be seen as a step-by-step reduction of the given graph to a graph with a smaller number of vertices until the reduction saturates.

We can now ask at which step this reduction saturates (saturation time) and how many vertices the final graph has (saturation size). 
For example, the graph presented in \cref{fig:graph139} has saturation time $k=2$, since its $2$nd-foliage graph is the first with a trivial foliage partition. This graph has saturation size $6$, since its $2$nd-foliage graph has six vertices. The saturation time of the graphs presented in \cref{figPartition} is $k=1,2,0$ for $K_5,K_{2,3},C_5$ respectively. 

Now that we have looked at specific examples of saturation times, it is also interesting to examine how these times vary across different families of graphs.

Different families of graphs have different saturation times. For instance, line graphs, which are made up of vertices connected in a straight line (graphs with edges $(i,i+1)$ for $i=1,\ldots,n-1$), have a saturation time $k= \lfloor \tfrac{n}{2}\rfloor$ that increases linearly as the size $n$ of the graph increases. However, for fully connected graphs, where every vertex is connected to every other vertex, the saturation time $k=1$ is always the same, no matter how big the graph is.


\begin{figure}[ht]
\centering
	\begin{subfigure}[b]{0.15\textwidth}
		\centering
		\includegraphics[width=\textwidth]{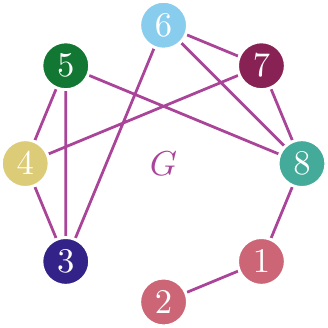}
	\end{subfigure}
 	\begin{subfigure}[b]{0.15\textwidth}
		\centering
		\includegraphics[width=\textwidth]{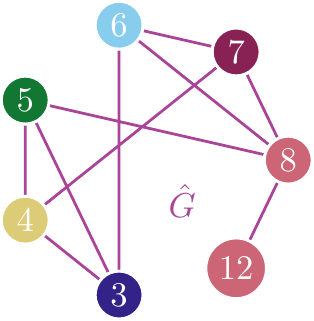}
	\end{subfigure}
 	\begin{subfigure}[b]{0.15\textwidth}
		\centering
		\includegraphics[width=\textwidth]{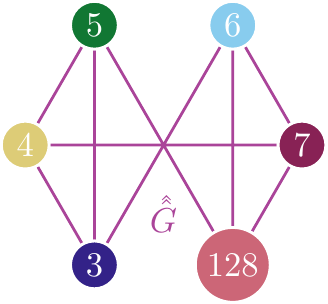}
	\end{subfigure}
 \caption{\label{fig:graph139} Graph $G$, its foliage graph $\hat{G}$, and $2$nd-foliage graph $\hathat{G}$ corresponding to the representative of LC-orbit (139); \cf~\cref{fig:graph_representatives}. For simplicity, we omit the brackets in the description of vertices in the two foliage graphs, \eg~the label $128$ represents the set $\{1,2,8\}$.}
\end{figure}

Let us continue to consider the properties of saturation time and saturation number. It is important to note that both are LC-invariant, making them orbital properties rather than characteristics of individual graphs alone.
We can extend saturation time and number further when considering the averages across all LC-orbits.

By taking the average of both the saturation time and saturation number over all LC-orbits of graphs with a fixed number of vertices $n$, we can define what we will call the \textit{$n$-th saturation time} and the \textit{$n$-th saturation size}.
In addition to these definitions, we can also look at the proportion of certain types of orbits.

We introduce the \textit{ratio of reducible orbits} of size $n$. This is defined as the number of LC-orbits with non-trivial foliage partition divided by the total number of LC-orbits of graphs with $n$ vertices. Similarly, we define the \textit{ratio of fully reducible orbits} of size $n$, calculated as the number of LC-orbits for which any $k$-th foliage partition is a one-vertex graph divided by the total number of LC-orbits of graphs with $n$ vertices.

The averaged saturation quantities and orbit ratios provide valuable insight when applied to the analysis of LC-orbits of graph states.

We present the above quantities for all LC-orbits of graph states with $n\leq 8$ in \cref{tab:stabilization}. These measures carry useful information about the extent to which LC-orbits of graphs with $n$ vertices can be reconstructed from LC-orbits of graphs with fewer vertices.
We believe that studying the asymptotic behavior of these measures can shed some light on the general structure of LC-orbits of large graphs.

\begin{table}[ht]
\begin{center}
		\includegraphics[width=0.55\columnwidth, 
    trim=0cm 
        0cm 
        0cm 
        0cm,
    clip]{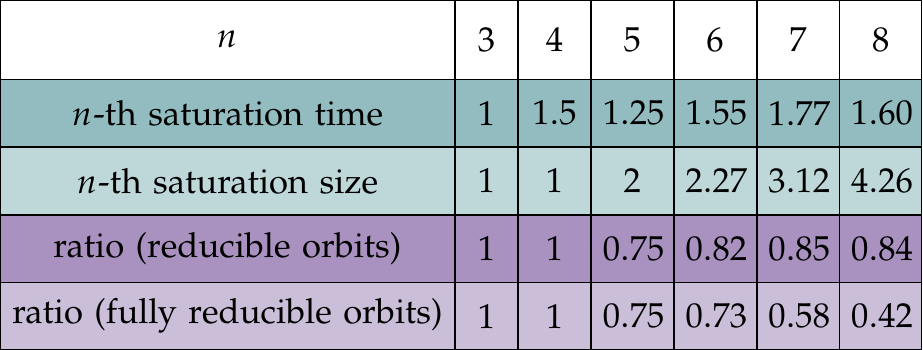}
    \caption{The average value of \textcolor{green}{saturation time and saturation size} of LC-orbits, and the \textcolor{violet}{ratio of (fully) reducible orbits}, rounded to the second decimal. For example, the LC-orbits of graphs with $6$ vertices can be reduced to graphs with $2.27$ vertices in $1.55$ steps on average. The probability that an LC-orbit has non-trivial foliage partition is $0.82$, and the probability that it reduces to single-element graph is $0.73$.}
    \label{tab:stabilization}
    \end{center}
\end{table}

We conclude this section by introducing the notion of a \textit{normal form} of a graph $G$. 

Consider any graph $G$, its foliage partition $\hat{V}$ and its foliage representation $(\hat{G},T,A)$. By locally complementing with respect to any axil $a\in A$, we change the structure of $\hat{G},T$, and $A$ according to \cref{LCinLLC}. In particular, this reduces both the number of type $\axilleaf$~sets in $\hat{V}$ and the size of the axil set $A$ by one; see \cref{Tprime,Aprime}. Therefore, local complementation with respect to all elements in the axil set $A$ (in any order) produces an LC-equivalent graph $G'$ with the same foliage partition $\hat{V}$ and different foliage representation $(\hat{G'},T',A')$, where the function $T$ takes only values in the set $\{\discontwins,\contwins,\single\}$,~while $A'=\emptyset$.\footnote{Note that such $G'$ is not always unique. It depends on the order of local complementations applied on the axil set $A$ of $G$}. Therefore, we have the following observation.

\begin{observation}
\label{NorFor}
Every graph $G$ is LC-equivalent to a graph $G'$, for which the type function $T'$ takes values only in the set $\{\discontwins,\contwins,\single\}$ and for which the axil set $A'=\emptyset$ is empty. We call such $G'$ a \textit{normal form} of $G$.
\end{observation}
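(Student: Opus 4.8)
\textbf{Proof proposal for \cref{NorFor}.}
The plan is to show that repeated local complementation along the axils of a graph terminates in a graph of the desired form, using \cref{LCinLLC} as the engine. First I would set up an induction on $|A|$, the size of the axil set of $G$. If $|A| = 0$, then by \cref{def:foliage_partition} and the definition of the axil set, no subset $V_i \in \hat{V}$ has type $\axilleaf$, so $T$ already takes values only in $\{\discontwins, \contwins, \single\}$ and $G$ is its own normal form. For the inductive step, suppose $|A| = m \geq 1$ and pick any $a \in A$. By the definition of the axil set, $a$ lies in some $V_a \in \hat{V}$ with $T(V_a) = \axilleaf$, so the first case of \cref{Aprime} applies — wait, one must be careful here: \cref{Aprime} is stated with the roles reading ``$A \cup \{a\}$ if $T(V_a) = \contwins$'' and ``$A \setminus \{a\}$ if $T(V_a) = \axilleaf$''. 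Since $T(V_a) = \axilleaf$, applying $\tau_a$ yields a new graph $G_1 = \tau_a(G)$ whose axil set is $A_1 = A \setminus \{a\}$, hence $|A_1| = m - 1$, and whose foliage partition is still $\hat{V}$ (by \cref{th1q}). By the inductive hypothesis, $G_1$ is LC-equivalent to a graph $G'$ with type function valued in $\{\discontwins, \contwins, \single\}$ and empty axil set; since $G \sim_{\mathrm{LC}} G_1 \sim_{\mathrm{LC}} G'$, the same holds for $G$.

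The one subtlety worth spelling out — and the only place where a reader might balk — is the claim implicit in the passage preceding the observation, namely that complementing along an axil $a$ of a type-$\axilleaf$ set does not \emph{create} new type-$\axilleaf$ sets elsewhere, so that the axil set genuinely shrinks rather than merely changing. This follows from inspecting \cref{Tprime}: the only way $T'(V_j) = \axilleaf$ can occur for $j \neq a$ is through the fifth (``otherwise'') case, which preserves whatever type $V_j$ already had; the explicit first four cases produce only types $\axilleaf$ (for $V_a$ itself, but only when $T(V_a) = \contwins$, which is not our situation), $\contwins$, and $\discontwins$. So after $\tau_a$ the only change among type-$\axilleaf$ sets is that $V_a$ becomes type $\contwins$, confirming $A_1 = A \setminus \{a\}$ exactly as \cref{Aprime} asserts. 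This is consistent with the footnote's caveat that the resulting $G'$ depends on the order in which the axils are processed — the induction does not claim uniqueness, only existence.

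I do not expect any serious obstacle: the entire argument is a finite descent licensed by \cref{LCinLLC}, and the bookkeeping is routine once one trusts that observation. The only thing to present with care is the base case characterization (empty axil set $\iff$ no type-$\axilleaf$ sets $\iff$ $T$ avoids $\axilleaf$), which is immediate from the definition of the axil set as containing exactly one representative from each type-$\axilleaf$ class and nothing else. An alternative, non-inductive phrasing would be to apply $\tau_{a_1}, \tau_{a_2}, \ldots, \tau_{a_m}$ for an enumeration $A = \{a_1, \ldots, a_m\}$ all at once and argue that each step decrements the axil-set size by one so that after $m$ steps it is empty; this is the phrasing the surrounding text already uses, and the induction above is just its formalization.
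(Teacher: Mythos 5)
Your proposal is correct and follows essentially the same route as the paper, which justifies the observation in the paragraph preceding it by noting that each local complementation at an axil decrements both the number of type-\axilleaf~sets and the size of the axil set by one (via \cref{Tprime,Aprime}), so that processing all of $A$ yields the normal form. Your induction on $|A|$, together with the explicit check that \cref{Tprime} cannot create new type-\axilleaf~sets, is just a careful formalization of that same finite-descent argument.
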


Normal forms of this type will be crucial for analyzing the entanglement structure of graph states (see \cref{prop2} in the following section).

\section{Entanglement structure}
\label{Sec:ent}

The examination and understanding of the entanglement properties of graph states have garnered considerable attention and sparked numerous studies in the recent past \cite{Graph1,Entanglement_GraphStates,gittsovichMultiparticleCovarianceMatrices2010,RaissiBurchardt22}. It has been established that the entanglement properties of a given graph state $\ket{G}$ can be effectively derived from its adjacency matrix $\Gamma_G$. 

Here, we demonstrate that the entanglement properties inherent to the graph state $\ket{G}$ can be comprehensively encapsulated by the adjacency matrix $\Gamma_{\hat{G}}$ of the corresponding foliage graph $\hat{G}$, in conjunction with the type function $T$. 

We initiate our discussion by recapitulating the well-established findings pertaining to the entanglement properties of graph states.

To quantify the entanglement in graph states we use the \textit{entropy of entanglement} for bipartitions of the graph state's quantum system into two parts \cite{PhysRevA.64.022306,Entanglement_GraphStates}. The entropy of entanglement of a state $\ket{\psi}\in \mathcal{H}_2^{\otimes n}$ with respect to the bipartition $A|A'$ -- where $A'$ is the complement of the set $A$ -- is determined by the reduced density matrix $\rho_A\defeq\text{tr}_{A'} \ket{\psi}\bra{\psi}$ via the expression
\begin{equation}
S_A (\ket{\psi})\defeq- \text{Tr}  \big(\rho_A \;\text{log}_2 \rho_A  \;\big).
\end{equation}
For graph states, the entropy of entanglement is an integer-valued measure, 
aligning with continuous entanglement measures such as the Schmidt rank or the purity of the reduced density matrices~\cite{Entanglement_GraphStates}. 
Importantly, the value of $S_A  (\ket{\psi})$ is bounded between $0$ and $\text{min} \{|A|,|A'|\}$, 
and the above definition is symmetric with respect to the exchange of $A$ and $A'$ 
\cite{PhysRevA.64.022306,PhysRevA.69.062311}. 

It is known that for graph states $\ket{G}$, the entropy of entanglement can be computed efficiently from the adjacency matrix of the corresponding graph $G$. 
Specifically, it has been proven that
\begin{equation}
\label{rankkk}
S_A (\ket{G})= \text{rank}\; \Gamma^{A\,A'}_G,
\end{equation}
where $\Gamma^{A\,A'}_G$ denotes the submatrix of the adjacency matrix $\Gamma_G$ with 
rows and columns indexed by subsets $A$ and $A'$ respectively, and the rank is computed over the field $\mathbb{Z}_2$ \cite[Proposition 3]{PhysRevA.69.062311}. 

Based on this, we can define the \textit{entropy vector}, also known as the \textit{Schmidt vector} \cite{entropycones1,entropycones2}:
\begin{equation}
\label{Schmidt_vector}
S (\ket{\psi})  =\big( S_A (\ket{\psi}) \, :\, A\subset [n]\big),
\end{equation}
given by an ordered collection of the entropies of entanglement for all possible subsets $A\subset [n]$. The length of the vector is given by $|S (\ket{\psi}) |=2^n$ and it characterizes the entanglement properties for all bipartitions of a given state. 


As we shall see, the adjacency matrix $\Gamma_{\hat{G}}$ of the foliage graph $\hat{G}$ together with the type function $T$ contains all information about the entanglement properties of the underlying graph state $\ket{G}$. We will show it explicitly for the graphs of the normal form, see \cref{NorFor}\footnote{One may establish such a connection for an arbitrary graph $G$ by translating the sequence of LC-operations on its axil set $A$ into changes in the foliage graph's adjacency matrix $\Gamma_{\hat{G}}$ and the type function $T$.}. 

First, we reduce the information given by the type function $T$ to a diagonal matrix 
\begin{align}
D_{T}&\defeq\text{diag}\big( d_{V_i}
\,\big| \, V_i\in \hat{V} \big),\\
d_{V_i}&\defeq
\begin{cases} 
1 & \text{if } T(V_i)=\contwins \\ 
0 & \text{if } T(V_i)= \single,\discontwins
\end{cases}
\end{align}
indexed by the elements of the foliage partition $V_i\in \hat{V}$. 

Second, we define the following square matrix 
\begin{equation}
  E_{T,\hat{G}}\defeq \Gamma_{\hat{G}}+D_{T}  
\end{equation}
with rows and columns indexed by the elements of the foliage partition $V_i,V_j\in \hat{V}$
and find the following.

\begin{proposition}
\label{prop2}
Consider any graph $G=(V,E)$ in a normal form (see \cref{NorFor}) with its corresponding foliage graph $\hat{G}$ and its type function $T$. 
For an arbitrary subset $A\subset [n]$, we define the following two subsets of the foliage partition $\hat{V}$ set\footnote{Note that $\hat{A'}$ and $\hat{A}$ are subsets of $\hat{V}$, which are no longer complementary. Indeed, there may exist $V_i\in \hat{V}$ with non-trivial intersections with both sets $A$ and $A'$.}:
\begin{align*}
\hat{A} \defeq&\{V_i\in \hat{V} : V_i \cap A\neq \emptyset \},\\
\hat{A'} \defeq&\{V_i\in \hat{V} : V_i \cap A' \neq \emptyset \}.
\end{align*}
Then the entanglement entropy of $\ket{G}$ can be expressed as
\begin{equation}
S_A (\ket{G}) = \text{rank}\, E_{T,\hat{G}}^{\hat{A}\,\hat{A'}} ,
\end{equation}
where $E_{T,\hat{G}}^{\hat{A}\,\hat{A'}}$ is a submatrix of $E_{T,\hat{G}}$ containing rows indexed by elements from $\hat{A}$ and columns indexed by elements form $\hat{A'} $. 
\end{proposition}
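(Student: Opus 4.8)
The plan is to reduce everything to the known formula $S_A(\ket{G})=\operatorname{rank}_{\mathbb{Z}_2}\Gamma_G^{A\,A'}$ recorded in \cref{rankkk}, by exhibiting $\Gamma_G$ as a ``blow-up'' of the matrix $E_{T,\hat G}$ along the foliage partition and then observing that passing to the $A\mid A'$ submatrix exactly collapses this blow-up back down.

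First I would pin down the block structure of $\Gamma_G$ when $G$ is in normal form. By \cref{def:foliage_partition}, every $V_i\in\hat V$ is then of type \single, \contwins, or \discontwins --- crucially never \axilleaf --- so all vertices inside a fixed $V_i$ share the same neighbourhood in $V\setminus V_i$. Combining this with \cref{def.foliage_graph} I get: for $v\in V_i$ and $w\in V_j$ with $i\neq j$, the edge $(v,w)$ is present iff $V_i$ and $V_j$ are adjacent in $\hat G$; and for two distinct vertices $v,w$ of the same $V_i$, the edge $(v,w)$ is present iff $T(V_i)=\contwins$. Writing $\iota\colon V\to\hat V$ for the map sending a vertex to its foliage set, and recalling $E_{T,\hat G}=\Gamma_{\hat G}+D_T$ with $D_T$ supported on the \contwins-sets, both cases merge into the single identity $(\Gamma_G)_{vw}=(E_{T,\hat G})_{\iota(v)\,\iota(w)}$ valid for all $v\neq w$; the restriction $v\neq w$ is precisely what accounts for the nonzero diagonal $D_T$ contributes to $E_{T,\hat G}$ but which $\Gamma_G$, being simple, cannot have. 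This bookkeeping --- in particular getting the diagonal blocks right --- is where I expect the only real subtlety to lie, and it is also the step where the normal-form hypothesis is genuinely used, since an \axilleaf-set would break the ``same external neighbourhood'' property and hence the clean block form.

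Next I would restrict to the bipartition. Since $A$ and $A'$ are disjoint, any $v\in A$ and $w\in A'$ are automatically distinct, so the identity above applies without the exceptional diagonal case: $(\Gamma_G^{A\,A'})_{vw}=(E_{T,\hat G})_{\iota(v)\,\iota(w)}$. Introducing the membership matrices $B_A$ and $B_{A'}$, whose columns are the standard basis vectors $e_{\iota(v)}$ for $v\in A$, respectively $v\in A'$, this reads $\Gamma_G^{A\,A'}=B_A^{\,T}\,E_{T,\hat G}\,B_{A'}$. Left multiplication by $B_A^{\,T}$ merely lists the rows of $E_{T,\hat G}$ indexed by $\hat A=\{\iota(v):v\in A\}$, each repeated $|V_i\cap A|\ge 1$ times, and right multiplication by $B_{A'}$ does the analogous thing to the columns with index set $\hat{A'}$; hence $\Gamma_G^{A\,A'}$ is nothing but $E_{T,\hat G}^{\hat A\,\hat{A'}}$ with some rows and columns duplicated. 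Since duplicating rows or columns leaves the $\mathbb{Z}_2$-rank unchanged, $\operatorname{rank}\Gamma_G^{A\,A'}=\operatorname{rank}E_{T,\hat G}^{\hat A\,\hat{A'}}$, and feeding this into \cref{rankkk} gives $S_A(\ket{G})=\operatorname{rank}E_{T,\hat G}^{\hat A\,\hat{A'}}$, as claimed. I would also remark that $\hat A$ and $\hat{A'}$ need not be complementary in $\hat V$, but complementarity is never invoked: only the rank of an off-diagonal submatrix enters, which is also why the diagonal correction term harmlessly drops out.
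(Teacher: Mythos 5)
Your proof is correct and follows essentially the same route as the paper's: both reduce to the rank formula of \cref{rankkk}, use the normal-form hypothesis to rule out \axilleaf-sets so that every foliage set has a uniform external neighbourhood, and then observe that $\Gamma_G^{A\,A'}$ is just $E_{T,\hat G}^{\hat A\,\hat{A'}}$ with rows and columns repeated, which does not change the $\mathbb{Z}_2$-rank. The paper phrases this as deleting duplicate rows/columns down to representative sets $B\subset A$, $B'\subset A'$ and then exhibiting an isomorphism of matrices, whereas you package the same content as the blow-up identity $\Gamma_G^{A\,A'}=B_A^{T}E_{T,\hat G}B_{A'}$; the substance, including the careful treatment of the diagonal term $D_T$ for intra-set pairs, is identical.
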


\begin{proof}
Consider any bipartition $A|A'$ of the vertex set $V$. Using \cref{rankkk}, we shall show that $\text{rank}\, \Gamma^{A\,A'}=E_{T,\hat{G}}^{\hat{A}\,\hat{A'}}$. 
The proof consist of two parts. 

Firstly, we show that $\text{rank}\, \Gamma^{A\,A'}=\text{rank}\, \Gamma^{B\,B'}$, where $B\subset A,B'\subset A'$ are specific subsets containing at most one element from each element $V_i\in \hat{V}$ in the foliage partition. Note that if the set $V_i\in \hat{V}$ is of either of the types $T(V_i)=\contwins,\single,\discontwins$ the rows in $\Gamma^{A\,A'}$ corresponding to $v\in V_i\cup A$ are all the same. Therefore, one can remove all but one without changing the rank of the matrix. The same holds true for the columns in $ \Gamma^{A\,A'}$ corresponding to $v\in V_i\cup A'$. By assumption, $G$ is in a normal form, hence there are no elements $V_i\in \hat{V}$ of the type $T(V_i)=\axilleaf$.

Secondly, we show that the matrices $\Gamma^{B\,B'}$ and $E_{T,\hat{G}}^{\hat{A}\,\hat{A'}}$ are isomorphic, and hence that they have the same rank. In the previous paragraph, we have shown that $\text{rank}\, \Gamma^{A\,A'}=\text{rank}\, \Gamma^{B\,B'}$, where $B\subset A$ contains exactly one element $v_i\in A\cup V_i$ for any $V_i\in \hat{V}$ such that $A\cup V_i\neq \emptyset$. Similarly, $B'\subset A'$ contains exactly one element $v_i\in A'\cup V_i$ for any $V_i\in \hat{V}$ such that $A'\cup V_i\neq \emptyset$. This allows us to establish a bijection between the sets $B$ and $\hat{A}$, and $B'$ and $\hat{A'}$, simply by identifying the elements $v\in V_i$  with the sets $V_i$ in the foliage partition to which they belong. 

We shall see that this construction leads to an isomorphism between matrices $\Gamma^{B\,B'}$ and $E_{T,\hat{G}}^{\hat{A}\,\hat{A'}}$. Indeed, consider any element in $\Gamma^{B\,B'}$ indexed by a pair of vertices $v,w\in V$, and the respective element in $E_{T,\hat{G}}^{\hat{A}\,\hat{A'}}$ indexed by $V_i,V_j \in \hat{V}$, where $v\in V_i, w\in V_j$. 

Consider the following two cases: $V_i=V_j$, or $V_i\neq V_j$. 

In the first case, it is easy to see that in both matrices $\Gamma^{B\,B'}$ and $E_{T,\hat{G}}^{\hat{A}\,\hat{A'}}$ the respective element equals $1$ when $T(V_i)=\contwins$ and equals $0$ when $T(V_i)=\discontwins$. It is not possible that $T(V_i)=\single$, as in this case $|V_i|=1$, and $V_i$ cannot have a non-trivial intersection simultaneously with $B$ and $B'$. 

In the second case, when $V_i\neq V_j$, one can see that the respective elements in both matrices are equal to $(\Gamma_{\hat{G}})_{V_i,V_j}$. This shows that matrices $\Gamma^{B\,B'}$ and $E_{T,\hat{G}}^{\hat{A}\,\hat{A'}}$ are isomorphic, hence have the same rank, which concludes the proof.
\end{proof}

\noindent
\cref{prop2} is an analogy to the well-known result describing entanglement in graph states as a rank of submatrices of its adjacency matrix $\Gamma_G$, see \cref{rankkk} \cite{PhysRevA.69.062311}. We showed that this information is contained in the genarally smaller matrix $E_{\hat{G},T}$, with rows and columns indexed by the elements in the foliage partition $ \hat{V}$.

\subsection{k-uniform states}

A state $\ket{\psi}\in \mathcal{H}_2^{\otimes n}$ is \textit{$k$-uniform} if and only if it is maximally entangled with respect to any bipartition $A|A'$ where $|A|=k$, namely 
\[
\rho (\psi )_A \defeq\tr_{A'} \propto \text{Id},
\]
for any $|A|=k$ \cite{Helwig2013AbsolutelyME,helwig2013absolutely,
HelwigAME,raissi2019new,PhysRevA.100.032112,RaissiBurchardt22}. 
It is well-known that uniformity $k$ cannot exceed half of the dimension, \ie~$k\leq \frac{n}{2}$. 
Moreover if a state is $k$-uniform it is also $k'$-uniform for any $k'\leq k$ \cite{Helwig2013AbsolutelyME,helwig2013absolutely,Huber_2018}. A multipartite quantum state which achieves the highest possible uniformity, \ie~$k=\lfloor \frac{n}{2} \rfloor$ is called an \textit{Absolutely Maximally Entangled} (AME) state \cite{Helwig2013AbsolutelyME,helwig2013absolutely}. This special class of multipartite quantum states find applications in several quantum protocols including quantum secret sharing \cite{PhysRevA.59.1829}, parallel teleportation \cite{HelwigAME}, holographic quantum error correcting codes \cite{Pastawski2015HolographicQE}, and quantum repeaters \cite{AlsinaStab} among many others.

The notion of uniformity appears naturally in the context of graph and stabilizer states. There are many methods to construct graph states with high uniformity \cite{RaissiBurchardt22}. 
Note that a graph state $\ket{G}$ is $k$-uniform if and only if its entropy of entanglement $S_A (\ket{G})$ equals $k$ for any subset $A\subset [n]$ of size $|A|=k$. However, investigating the uniformity of a given graph state is a rather demanding problem and in principle requires analysis of an entire set of stabilizers (whose size is exponential in the number of qubits of the system). As an immediate consequence of \cref{rankkk}, we have the following.

\begin{observation}
\label{auxiA}
If graph $G$ is connected, the corresponding graph state $\ket{G}$ is 1-uniform.
\end{observation}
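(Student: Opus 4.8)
The plan is to reduce the claim directly to the rank formula \cref{rankkk}, specialized to singleton subsets. Recall that $\ket{G}$ is $1$-uniform precisely when $S_{A}(\ket{G}) = 1$ for every subset $A \subset [n]$ with $|A| = 1$. So fix a vertex $v \in V$ and set $A = \{v\}$, $A' = V \setminus \{v\}$. By \cref{rankkk}, $S_{\{v\}}(\ket{G}) = \operatorname{rank} \Gamma_G^{\{v\}\,A'}$, where the rank is taken over $\mathbb{Z}_2$.

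Next I would observe that $\Gamma_G^{\{v\}\,A'}$ is a single row: it is exactly the $v$-th row of the adjacency matrix $\Gamma_G$, with the (necessarily zero) diagonal entry $a_{vv}$ deleted. A $1 \times (n-1)$ matrix over $\mathbb{Z}_2$ has rank $1$ if it is nonzero and rank $0$ if it is the zero row. The row $(a_{vw})_{w \neq v}$ is nonzero if and only if $v$ has at least one neighbor in $G$, i.e.\ $\deg(v) \geq 1$. Since $G$ is connected and has at least two vertices, every vertex has a neighbor, so this row is nonzero and hence $S_{\{v\}}(\ket{G}) = 1$. As $v$ was arbitrary, $\ket{G}$ is $1$-uniform.

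There is essentially no substantive obstacle here; the only point requiring a word of care is the degenerate case $n = 1$, where $G$ is vacuously connected but $1$-uniformity is not even well-defined (the bound $k \leq n/2$ fails), so the statement is implicitly understood for $n \geq 2$. One should also make sure the reader recalls that $1$-uniformity is the condition on all size-$1$ marginals, so that the single computation above suffices; this is why invoking \cref{rankkk} for singletons is exactly the right tool and no further machinery (such as the foliage partition) is needed for this particular observation.
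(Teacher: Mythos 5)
Your proof is correct and follows exactly the route the paper intends: the paper presents this observation as an immediate consequence of the rank formula in \cref{rankkk}, and your argument simply spells out that the relevant submatrix for a singleton $A=\{v\}$ is the (nonzero, by connectedness) $v$-th row of $\Gamma_G$, hence has rank $1$. Your side remark about the degenerate case $n=1$ is a reasonable extra precaution but not a point of divergence.
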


\noindent
As we will see now, the foliage partition provides a necessary and sufficient condition for $2$-uniformity of graph states. 

\begin{proposition}
\label{2UniCorrected}
Consider a connected graph $G$ associated with a graph state $\ket{G}$. For any two vertices $v,w\in V$, the corresponding two body marginal of $\ket{G}$ is proportional to the identity $\rho_{vw}\propto \Id$ if and only if the vertices $v,w$ belong to different parts in foliage partition, \ie~$v \in V_i$, $w\in V_j$, $i\neq j$, where $\hat{V}=\{V_1,\ldots,V_k\}$ is the foliage partition of $G$. 
\end{proposition}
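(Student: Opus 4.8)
The plan is to reduce the statement to a rank computation via \cref{rankkk}. For a two-element set $A=\{v,w\}$ with complement $A'$, the marginal $\rho_{vw}$ is proportional to the identity on $\mathcal{H}_2^{\otimes 2}$ if and only if $S_A(\ket{G})=2$, i.e. if and only if the $2\times|A'|$ submatrix $\Gamma_G^{A\,A'}$ has full rank $2$ over $\mathbb{Z}_2$. Since $G$ is connected, each of $v$ and $w$ has at least one neighbor, so the two rows of $\Gamma_G^{A\,A'}$ are each nonzero; hence the rank is $2$ unless the two rows are equal, in which case the rank is $1$. Thus $\rho_{vw}\not\propto\Id$ if and only if the rows of $\Gamma_G$ indexed by $v$ and $w$, restricted to the columns in $A'=V\setminus\{v,w\}$, coincide.

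Next I would translate this row-equality condition into the defining relation of the foliage partition. The rows of $\Gamma_G$ indexed by $v$ and $w$ agree on all coordinates $u\notin\{v,w\}$ precisely when $a_{v,u}=a_{w,u}$ for every such $u$. I would compare this with \cref{def:relation}: $v\sim w$ means $a_{v,u_1}a_{w,u_2}=a_{v,u_2}a_{w,u_1}$ for every pair $u_1,u_2$. One direction is immediate: if $a_{v,u}=a_{w,u}$ for all $u\neq v,w$, then for any $u_1,u_2\notin\{v,w\}$ both sides of the product relation agree; the only pairs requiring care are those where $u_1$ or $u_2$ lies in $\{v,w\}$, and here I would use the fact that $G$ is simple (no loops, so $a_{vv}=a_{ww}=0$) together with connectivity to check these boundary cases directly — for instance the pair $(u_1,u_2)=(w,u)$ gives $a_{v,w}a_{w,u}=a_{v,u}a_{w,w}=0$, and symmetrically $a_{w,v}a_{v,u}=0$, which is consistent with $a_{v,u}=a_{w,u}$ combined with the twin/leaf-axil structure. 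Conversely, if $v\sim w$, I would show that the rows restricted to $V\setminus\{v,w\}$ must be equal, again reading off the product relation and invoking \cref{extraOb1} or \cref{def:foliage_partition} to identify the three possible local structures (twins, leaf and adjacent axil, two leaves of a common axil) and verifying the row-equality in each case.

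Concretely, the cleanest route is probably to bypass re-deriving the combinatorics and instead invoke \cref{extraOb1} directly: $v,w$ lie in the same part of $\hat V$ iff they are twins, a leaf with its axil, or two leaves of a common axil. In each of these three cases I would exhibit that the rows of $\Gamma_G$ indexed by $v,w$ agree off $\{v,w\}$: twins share a neighborhood by definition; a leaf $\ell$ and its axil $a$ have $N_\ell=\{a\}$ and, since $\ell$ is the unique leaf contributing, the only discrepancy between their neighborhoods lies inside $\{\ell,a\}$ itself (using that the graph is connected so $a$ has no other structure forcing a difference outside — here I would lean on \cref{def:foliage_partition} which says the part is a star attached only through its center); two leaves of a common axil $a$ both have neighborhood $\{a\}$. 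For the converse, if $v,w$ are in different parts, \cref{extraOb1} says none of the three relations holds, and I would argue the rows must then differ somewhere outside $\{v,w\}$ — essentially the contrapositive of the forward direction — so that $\Gamma_G^{A\,A'}$ has rank $2$ and $\rho_{vw}\propto\Id$.

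The main obstacle is the handling of the "boundary" coordinates $u_1,u_2\in\{v,w\}$ in \cref{def:relation}: the product form $a_{v,u_1}a_{w,u_2}=a_{v,u_2}a_{w,u_1}$ is genuinely weaker than coordinatewise row-equality in general, and the equivalence only holds because the extra structural constraints of \cref{def:foliage_partition} (a same-part set is a clique of twins, an anticlique of twins, or a star) pin down the behavior at those coordinates. So the real work is making sure the rank-$1$ characterization (rows equal off the diagonal block) and the $\sim$-relation coincide, and the safest way to do that is to route everything through \cref{extraOb1}/\cref{def:foliage_partition} rather than manipulating the product relation by hand. The rest — the reduction from $\rho_{vw}\propto\Id$ to a rank condition, and the observation that connectivity forces each row to be nonzero — is routine given \cref{rankkk} and \cref{auxiA}.
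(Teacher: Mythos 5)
Your overall strategy --- reduce $\rho_{vw}\propto\Id$ to the rank condition $\operatorname{rank}\,\Gamma_G^{A\,A'}=2$ via \cref{rankkk} and \cref{auxiA}, then match the rank-deficient cases against \cref{extraOb1} --- is essentially the paper's. However, there is a genuine error in your rank characterization. You claim that connectivity makes both rows of $\Gamma_G^{A\,A'}$ nonzero, so that the rank drops below $2$ only when the two rows coincide. This is false: $v$ having a neighbor in $V$ does not give it a neighbor in $A'=V\setminus\{v,w\}$. In the leaf--axil case the leaf's only neighbor is $w$ itself, so its row restricted to $A'$ is the zero vector; the two rows do \emph{not} coincide (the axil's row is nonzero off $\{v,w\}$, since the axil connects $V_i$ to the rest of the graph), and yet the rank is $1$. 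Your later attempt to absorb this case --- asserting that for a leaf $\ell$ and its axil $a$ the only discrepancy between their neighborhoods lies inside $\{\ell,a\}$ --- is simply wrong: the rows differ at every vertex of $N_a\setminus\{\ell\}$, all of which lie outside $\{\ell,a\}$. As written, your argument would misclassify the leaf--axil pair as having a maximally mixed $2$-body marginal, contradicting the proposition it is meant to prove.

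The repair is exactly what the paper's proof does: over $\mathbb{Z}_2$ a $2\times|A'|$ matrix has rank $1$ iff its two rows are linearly dependent, i.e.\ either equal or one of them is zero. The ``rows equal'' branch corresponds to twins and to two leaves of a common axil; the ``one row zero'' branch corresponds precisely to a leaf and its adjacent axil. Together these are the three configurations of \cref{extraOb1}, and with this corrected dichotomy the rest of your case analysis (and the converse direction) goes through. Note also that for the ``same part $\Rightarrow$ rank $1$'' direction the paper sidesteps the leaf--axil case entirely by first applying an LC-operation so that all vertices of $V_i$ share the same neighborhood, which you could adopt as an alternative to the three-way case split.
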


\begin{proof}
Note that for a graph state $\ket{G}$ associated to a connected graph $G$, and any set $\{v,w\}$ consisting of two vertices, the entropy of entanglement $S_{\{v,w\}} (\ket{G}) = 2$ if the corresponding two-body marginal is maximally mixed, i.e $\rho_{vw}\propto \Id$, and $S_{\{v,w\}} (\ket{G}) = 1$ otherwise. 

Consider any set $V_i\in \hat{V}$ of the foliage partition of a graph $G$ with at least two parties $|V_i|\geq 2$, and any subset $W\subset V_i$ consisting of two elements, \ie~$|W|=2$. We shall see that the entropy of entanglement $S_{W} (\ket{G}) = 1$. 
Indeed, the graph $G$ can be transformed by LC-operations to the form $G'$, where all vertices in $V_i$ have pairwise the same neighborhood. Such a graph has the same entanglement properties, hence without loss of generality, we may assume that all vertices in $V_i$ have pairwise the same neighborhood. 
Consider the bipartition $W |W'$ and the corresponding matrix $\Gamma_G^{W\,W'}$. 
Note that all rows in the matrix $\Gamma_G^{W\,W'}$ are the same, hence $\text{rank}\; \Gamma_G^{W\,W'} \leq1$. 
According to \cref{auxiA}, and \cref{rankkk}, the entropy of entanglement $S_{W} (\ket{G}) = 1$. Therefore the state $\ket{G}$ cannot be $2$-uniform. 

On the other hand, suppose that there is a two-element subset $T=\{v_1,v_2\}$, such that $\rho_T (\psi) \not\propto  \text{Id}$. Note that this means $S_{T} (\ket{G}) <2$, and since entropy of entanglement takes integer values for graph states, by \cref{auxiA}, $S_{T} (\ket{G}) =1$. Consider now a bipartition $T|T'$ and the corresponding matrix $\Gamma_G^{T \,T'}$. Denote by $r_1, r_2$ rows in $\Gamma_G^{T \,T'}$  corresponding to $v_1$, and $v_2$ vertices respectively. By (\ref{rankkk}), $\text{rank}\,\Gamma_G^{T \,T'}=1$, hence either $r_1=r_2$, or one of two vectors $r_1$ or $r_2$ is the zero vector. In the first case, $v_1$ and $v_2$ have pairwise the same neighborhood, in the second case, vertices $v_1$ and $v_2$ are leaf and adjacent axil. In both cases they belong to the same set in the foliage partition, hence, the foliage partition cannot be trivial, \ie~$|\hat{V}|\neq |V|$.
\end{proof}

\noindent
\new{\Cref{2UniCorrected} establishes an intimate relationship between the foliage partition of a graph and the $2$-body marginals of the corresponding graph state. Specifically, one uniquely determines the other. This relationship suggests potential connections between the foliage partition and certain entanglement measures.

Firstly, it is notable that the foliage partition is closely related to the concept of local sets introduced in Ref.~\cite{LocalSets1}. In particular, any pair of distinct vertices $u, v$ belong to the same set in the foliage partition if and only if ${u,v}$ is a minimal local set\footnote{A set $M$ is termed local if it has the form $M = D \cup \Odd{D}$, where $\Odd{D} = \{v \in [n] : |\delta_v \cap D| \text{ is odd}\}$ is the set of vertices connected to an odd number of vertices in $D$. Here, $\delta_v$ denotes the neighborhood of vertex $v$. A local set that is minimal by inclusion is called a minimal local set.} \cite{claudet2023small,claudet2024covering}.

Secondly, it is important to recognize that the foliage partition does not convey any information about higher-order marginals (i.e.~not about 3-body marginals or greater). 
Therefore, it is inherently impossible to establish a direct link between the foliage partition and other entanglement measures that are based on higher-order correlations, such as entanglement width \cite{PhysRevA.75.012337,PhysRevLett.97.150504,PhysRevLett.131.030601} or sector length \cite{Wyderka_2020}. However, this does not exclude the possibility that these types of entanglement measures may be related to the recursive application of the foliage partition, i.e., the $k$-th foliage graph, as discussed in \cref{FoliageRepresentation}.
}

As a consequence of \cref{2UniCorrected}, the foliage partition provides a necessary and sufficient condition for $2$-uniformity of graph states. 

\begin{corollary}
\label{2Uni}
For a connected graph $G$, the associated graph state $\ket{G}$ is $2$-uniform if and only if the graph's foliage partition $\hat{V}$ is trivial, \ie~$|\hat{V}|=|V|$. 
\end{corollary}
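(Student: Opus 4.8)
The plan is to derive \cref{2Uni} directly from \cref{2UniCorrected} by unpacking what $2$-uniformity means in terms of two-body marginals. Recall that a state is $2$-uniform precisely when every two-body marginal is maximally mixed, and that $1$-uniformity is automatic here by \cref{auxiA} since $G$ is connected. So the whole statement reduces to: every two-body marginal $\rho_{vw}$ is proportional to $\Id$ if and only if the foliage partition is trivial.

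For the forward direction, I would argue by contraposition. Suppose $\hat{V}$ is not trivial, so there is some set $V_i \in \hat{V}$ with $|V_i| \geq 2$. Pick two distinct vertices $v, w \in V_i$. By \cref{2UniCorrected}, since $v$ and $w$ lie in the \emph{same} part of the foliage partition, the marginal $\rho_{vw}$ is \emph{not} proportional to the identity. Hence $\ket{G}$ fails to be $2$-uniform. Contrapositively, $2$-uniformity forces $|\hat{V}| = |V|$.

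For the converse, assume $\hat{V}$ is trivial, i.e.\ every part is a singleton. Then any two distinct vertices $v, w$ necessarily lie in \emph{different} parts $V_i \neq V_j$, so \cref{2UniCorrected} gives $\rho_{vw} \propto \Id$ for every pair. Since all two-body marginals are maximally mixed, $\ket{G}$ is $2$-uniform by definition. Combining the two directions yields the claimed equivalence.

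I do not expect any genuine obstacle: the corollary is essentially a restatement of \cref{2UniCorrected} at the level of the whole partition rather than a single pair, together with the observation that "all pairs in distinct parts" is exactly the condition "the partition is trivial." The only point requiring a word of care is making sure the definition of $2$-uniformity used matches the two-body-marginal characterization invoked in \cref{2UniCorrected} (this is immediate from the definition recalled just above the corollary, since $S_{\{v,w\}}(\ket{G}) = 2$ is equivalent to $\rho_{vw} \propto \Id$ for connected $G$), and noting that connectedness of $G$ is what supplies $1$-uniformity so that the $k \leq k'$ monotonicity of uniformity is not an issue.
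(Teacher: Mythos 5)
Your proposal is correct and follows exactly the route the paper takes: the paper presents \cref{2Uni} as an immediate consequence of \cref{2UniCorrected}, and your argument simply makes explicit the quantification over all pairs (trivial partition $\Leftrightarrow$ every pair of distinct vertices lies in different parts $\Leftrightarrow$ every $2$-body marginal is maximally mixed). Nothing is missing.
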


\noindent
The reminiscence of the above relation between $2$-body marginals and a form of foliage partition of a graph state can be found in \cite{gittsovichMultiparticleCovarianceMatrices2010,Wyderka_2020} 
 where the relation between the $2$-body marginals and the existence of axils, leaves, and twin vertices was described.

\cref{2Uni} shows that the property of being a $2$-uniform graph state is determined by the graph's foliage partition. 
It is intriguing if there is another LC-invariant for graphs that determines higher uniformities of corresponding quantum states. We leave this for future work. 

\section{LC-classes of graph states}
\label{Section:asymptotic}

\new{
We say that two graphs $G$ and $G'$ belong to the same \textit{LC-class} if $G$ is LC-equivalent to $\sigma(G')$, where $\sigma$ is some permutation of nodes. In other words, the LC-class of a graph $G$ is the set of all unlabeled graphs that can be obtained by performing any sequence of LC operations on $G$. We denote by $\mathcal{C}(n)$ the number of LC-classes among all graph states with $n$ qubits. The classification of graph states into LC-classes is a well-studied topic; in particular, graph states up to $12$ qubits were classified into $\sim 1.3 \times 10^6$ different LC-classes \cite{Danielsen_2006,PhysRevA.83.042314}.

As we shall see, the foliage partition provides a lower bound on the number of LC-classes $\mathcal{C}(n)$, which is exponential in $\sqrt{n}$. 

The foliage partition $\hat{V} = \{V_1, \ldots, V_k\}$ of a graph $G$ directly provides a partition $|V_1|, \ldots, |V_k| \mathbin{\vdash} n$ of an integer $n$. Clearly, if for two graphs $G$ and $G'$, the corresponding integer partitions are different, $G$ and $G'$ belong to different LC-orbits. Conversely, for almost all integer partitions of a given $n$, one can find a graph with a foliage partition corresponding to this integer partition. Indeed, we have the following.

\begin{proposition}
\label{prop:Int_partition}
For non-decreasing integer partitions $n_1, \ldots, n_k \mathbin{\vdash} n$ (where $n_{i-1} \leq n_i$), there exists a graph $G$ with a foliage partition matching this partition, except in the cases $1, n-1 \mathbin{\vdash} n$; $1, 1, n-2 \mathbin{\vdash} n$; and $1, 1, 1, n-3 \mathbin{\vdash} n$.
\end{proposition}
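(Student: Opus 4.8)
The strategy is to construct, for each admissible non-decreasing integer partition $n_1,\ldots,n_k \vdash n$, an explicit graph $G$ whose foliage partition realizes it, and separately to verify that the three listed exceptional partitions genuinely cannot be realized. For the construction I would work with a \emph{normal form} (see \cref{NorFor}), so every foliage set of size $\geq 2$ is either a clique of connected twins (type \contwins) or an independent set of disconnected twins (type \discontwins), and I only need to specify a foliage graph $\hat{G}$ on $k$ vertices together with the type function $T$. By \cref{def:foliage_partition}, the foliage partition of the resulting graph is exactly $\hat{V}=\{V_1,\ldots,V_k\}$ provided each prescribed set is \emph{maximal} of its type — that is, no two of the blocks accidentally merge into a single larger twin-class or star. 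So the real content is: choose $\hat{G}$ and $T$ so that (i) blocks of size $1$ do not become leaves/axils or twins of each other, and (ii) blocks of size $\geq 2$ are not inadvertently combinable. A convenient uniform choice: take $\hat{G}$ to be a graph on $\{V_1,\ldots,V_k\}$ that is connected, has no two vertices with the same neighborhood in $\hat G$, no leaf–axil pair, and in which every singleton block $V_i$ has $\hat G$-degree $\geq 2$; then lift it by blowing up each $V_i$ with $|V_i|\geq 2$ into $n_i$ disconnected twins (type \discontwins). One must check this blow-up does not create new twins across blocks — it does not, because two vertices in distinct blocks $V_i\neq V_j$ have different neighborhoods whenever $V_i,V_j$ have different neighborhoods in $\hat G$, and a vertex in a blown-up block has a strictly larger closed-neighborhood pattern than any singleton.

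The case analysis on $k$ and on how many parts equal $1$ is where the exceptions enter, so I would organize it that way. If $k=1$ the partition is just $n\vdash n$: realize it by $K_n$ (type \contwins) or its normal-form counterpart — no obstruction. If $k\geq 2$ and at least four parts exceed $1$, or the multiset of part sizes allows a foliage graph $\hat G$ on $k$ vertices with no repeated neighborhoods and no pendant structure and minimum degree $\geq 2$ at the singletons, the construction above works; such $\hat G$ exist for all $k\geq 5$ (e.g.\ $C_k$, possibly with a chord to kill the symmetry), and can be handled by ad hoc small graphs for $k=2,3,4$. The delicate regime is small $k$ with many singletons: for $k=2$, a partition $1,n-1$ forces $\hat G=K_2$, i.e.\ a single edge between a singleton and a block of $n-1$ mutual twins — but a singleton adjacent to a twin-class is precisely an axil with $n-1$ leaves, which is type \axilleaf and hence is \emph{one} foliage set, not two; so $1,n-1\vdash n$ is impossible, matching the exception. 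Similarly $1,1,n-2$ and $1,1,1,n-3$: here the two or three singletons, together with the large block, are forced (by the maximality and type constraints of \cref{def:foliage_partition}, using that $\hat G$ must be connected and small) into configurations that collapse — e.g.\ two singletons both adjacent only to the big block are twins of each other, or the big block plus its single neighbor forms an axil–leaves set. I would prove these three non-realizability claims directly from \cref{extraOb1} and \cref{def:foliage_partition} by exhausting the finitely many connected foliage graphs on $2$, $3$, and $4$ vertices and checking each forces a merge.

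The main obstacle I anticipate is the negative direction — showing the three exceptional partitions are truly unrealizable — rather than the constructions. For the positive side one just needs one witness per partition and the verification is routine once the normal-form reduction is in place; but for the exceptions one must argue over \emph{all} graphs, and the cleanest route is to reduce to the foliage graph (which has at most $k\leq 4$ vertices in these cases, since a foliage graph has trivial foliage partition) and then rule out every small connected graph on $k$ vertices as a candidate $\hat G$, checking that each would force two of the prescribed blocks to coincide. A secondary nuisance is the boundary bookkeeping for small $k$ (say $k=2,3,4$) in the positive direction, where generic constructions like $C_k$ degenerate or acquire unwanted automorphisms; these finitely many cases I would simply dispatch by exhibiting explicit small graphs.
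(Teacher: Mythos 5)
Your plan matches the paper's proof in essence: both realize each part $n_i$ as one foliage class attached to a rigid skeleton (the paper hangs a tree/star of $n_i$ vertices on each node of a cycle $C_k$, you blow up a foliage graph into twin classes), both split into $k\geq 5$ --- where $C_k$ has trivial foliage partition --- versus an ad hoc analysis for $k\leq 4$, and both identify the three exceptions as exactly the $k\leq 4$ partitions with all but one part equal to $1$. Your treatment of the negative direction (exhausting the connected foliage graphs on at most four vertices) is actually more explicit than the paper's, which merely asserts that the three exceptional cases are easily seen to be unrealizable.
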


\begin{proof}
This correspondence follows from direct calculation and the consideration of several cases. 

Firstly, note that the cyclic graph on five or more nodes has a trivial foliage partition. Therefore, when $k \geq 5$, the foliage partition of a graph $G$ of the type presented in \cref{fig:partition} always corresponds to the integer partition $n_1, \ldots, n_k \mathbin{\vdash} n$. 

Secondly, when $k \leq 4$, assume first that $n_{k-1} \geq 2$ (and hence $n_k \geq 2$ due to the non-decreasing condition). A straightforward analysis of the three separate cases $k = 2, 3, 4$ shows that the foliage partition of $G$ presented in \cref{fig:partition} is indeed $n_1, \ldots, n_k \mathbin{\vdash} n$.

On the other hand, if the integer partition satisfies $k \leq 4$ and $n_{k-1} = 1$, this means that it is either $1, n-1 \mathbin{\vdash} n$, $1, 1, n-2 \mathbin{\vdash} n$, or $1, 1, 1, n-3 \mathbin{\vdash} n$. In each of these three cases, it is easy to see that there is no graph with a corresponding foliage partition.
\end{proof}

\noindent
\cref{prop:Int_partition} provides the following estimation of the number of LC-classes $\mathcal{C}(n)$.

\begin{corollary}
\label{cor:partition}
The number of of LC-classes $\mathcal{C}(n)$ is bounded by
\begin{equation}
\label{eq:clas_vs_part}
\mathcal{C}(n)\geq
p(n)- \text{min}(3, n-2)
\end{equation}
where $p(n)$ is the number of distinct ways to represent $n$ as a sum of positive integers. 
\end{corollary}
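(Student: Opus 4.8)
The plan is to derive \cref{cor:partition} directly from \cref{prop:Int_partition} by a counting argument on integer partitions. The key observation is that the foliage partition of a graph $G$ is an LC-invariant (\cref{th1q}), and since the multiset of sizes $\{|V_1|, \ldots, |V_k|\}$ of the parts of $\hat{V}$ is determined by the foliage partition, it too is LC-invariant; moreover, because relabeling the vertices by a permutation $\sigma$ does not change the multiset of part sizes either, this multiset is actually an invariant of the whole LC-class. Hence the map sending an LC-class to the integer partition $|V_1|, \ldots, |V_k| \mathbin{\vdash} n$ is well-defined on the set of LC-classes of $n$-qubit graph states.

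First I would note that the number of integer partitions of $n$ that arise as the part-size multiset of some graph's foliage partition is, by \cref{prop:Int_partition}, all $p(n)$ partitions of $n$ except the (at most) three exceptional ones $1, n{-}1$; $1,1,n{-}2$; $1,1,1,n{-}3$. Of course, when $n$ is small some of these exceptional partitions may coincide or may fail to be partitions of $n$ at all (for instance $1,1,1,n{-}3$ is only a genuine partition once $n \geq 4$, and one must be careful that $n{-}3 \geq 1$, etc.), so the number of \emph{realizable} partitions is at least $p(n) - \min(3, n-2)$, the term $\min(3,n-2)$ accounting for the fact that no more than $3$ partitions are excluded and no more than $n-2$ can possibly be excluded for small $n$. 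Then, since distinct realizable integer partitions correspond to pairwise distinct LC-classes (two graphs whose foliage partitions have different part-size multisets cannot be LC-equivalent, even up to relabeling), we obtain at least $p(n) - \min(3, n-2)$ distinct LC-classes, which is exactly \eqref{eq:clas_vs_part}.

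The main obstacle, such as it is, lies entirely in the bookkeeping of the small-$n$ boundary cases: one must check that $\min(3, n-2)$ is the correct count of partitions to subtract, i.e.\ that for every $n$ the number of partitions of $n$ \emph{not} realizable as a foliage part-size multiset is at most $\min(3, n-2)$. For $n \geq 5$ all three exceptional partitions $1,n{-}1$, $1,1,n{-}2$, $1,1,1,n{-}3$ are genuine, distinct partitions of $n$, so exactly (at most) $3 = \min(3,n-2)$ are removed; for $n = 2, 3, 4$ one verifies by hand that the count of excluded partitions is $n - 2 \in \{0,1,2\}$ and agrees with $\min(3,n-2)$. Since this is a finite check and \cref{prop:Int_partition} does the real work, the corollary follows. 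I would present the argument in two short sentences invoking \cref{prop:Int_partition} and \cref{th1q}, followed by the one-line boundary-case remark, and conclude.

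\begin{proof}
By \cref{th1q}, the foliage partition is LC-invariant, and the multiset of part sizes $|V_1|, \ldots, |V_k|$ is moreover invariant under relabeling of vertices; hence it is an invariant of the LC-class, giving a well-defined map from LC-classes of $n$-qubit graph states to integer partitions of $n$. By \cref{prop:Int_partition}, every partition of $n$ except possibly $1, n{-}1$; $1, 1, n{-}2$; and $1, 1, 1, n{-}3$ is realized as the part-size multiset of some graph's foliage partition, and distinct realized partitions come from graphs lying in distinct LC-classes. The number of excluded partitions is at most $3$, and for $n \leq 4$ it is at most $n - 2$ (as $1,1,1,n{-}3$ is a partition of $n$ only for $n \geq 4$, $1,1,n{-}2$ only for $n \geq 3$, and $1,n{-}1$ only for $n \geq 2$); hence the number of realized partitions, and therefore $\mathcal{C}(n)$, is at least $p(n) - \min(3, n-2)$.
\end{proof}
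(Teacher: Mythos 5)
Your proposal is correct and takes essentially the same route as the paper, which states \cref{cor:partition} as an immediate consequence of \cref{prop:Int_partition} using exactly the reasoning you spell out: the part-size multiset of the foliage partition is an invariant of the LC-class, so distinct realizable integer partitions yield distinct classes, and at most $\min(3,n-2)$ partitions are excluded. Your explicit bookkeeping of the small-$n$ boundary cases is a welcome addition the paper leaves implicit.
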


\noindent
The function $p(n)$ is called the \textit{partition function}, and it is a well-known function in number theory. Despite the low values of the function $p(n)$ for small values of $n$, it grows exponentially in $\sqrt{n}$. Indeed, $p(n)$ has the following asymptotic behavior:
\begin{equation}
{\displaystyle p(n) \sim {\frac {1}{4n{\sqrt {3}}}} \exp \left( \pi {\sqrt {\frac {2n}{3}}} \right)},
\end{equation}
which was shown by Hardy and Ramanujan \cite{HardyRama1918}. For instance, for $n = 100$, the partition function is $p(n) \sim 1.9 \times 10^9$, which directly translates into the lower bound of the number of LC-classes $\mathcal{C}(n)$ in accordance with (\ref{eq:clas_vs_part}). \Cref{tab:assymptotic} presents such a lower bound on $\mathcal{C}(n)$ given by the foliage partition for small values of $n$.

We end this section with the following remark, comparing the estimation of $\mathcal{C}(n)$ with respect to the foliage of a graph and the foliage partition. Note that the foliage of a graph provides a bipartition of an integer $n = n_\text{f} + n_\text{nf}$, where $n_\text{f}$ is the number of vertices in the foliage and $n_\text{nf}$ is the number of vertices outside the foliage. The number of such bipartitions is obviously $n + 1$. It is easy to see that for $n \geq 7$ there are graphs with foliage corresponding to every bipartition, except the one for which $n_\text{f} = 1$. Therefore, for $n \geq 7$, the lower bound on the number $\mathcal{C}(n)$ resulting from the foliage of a graph is $\mathcal{C}(n) \geq n$, and it can be analyzed separately for $n < 7$, see \Cref{tab:assymptotic}. Note that the foliage of a graph provides a lower bound on $\mathcal{C}(n)$ which is linear in $n$, while the bound following from the foliage partition is exponential in $\sqrt{n}$.
}

\begin{figure}[ht!]
\begin{center}
\includegraphics[width = 0.4\textwidth]{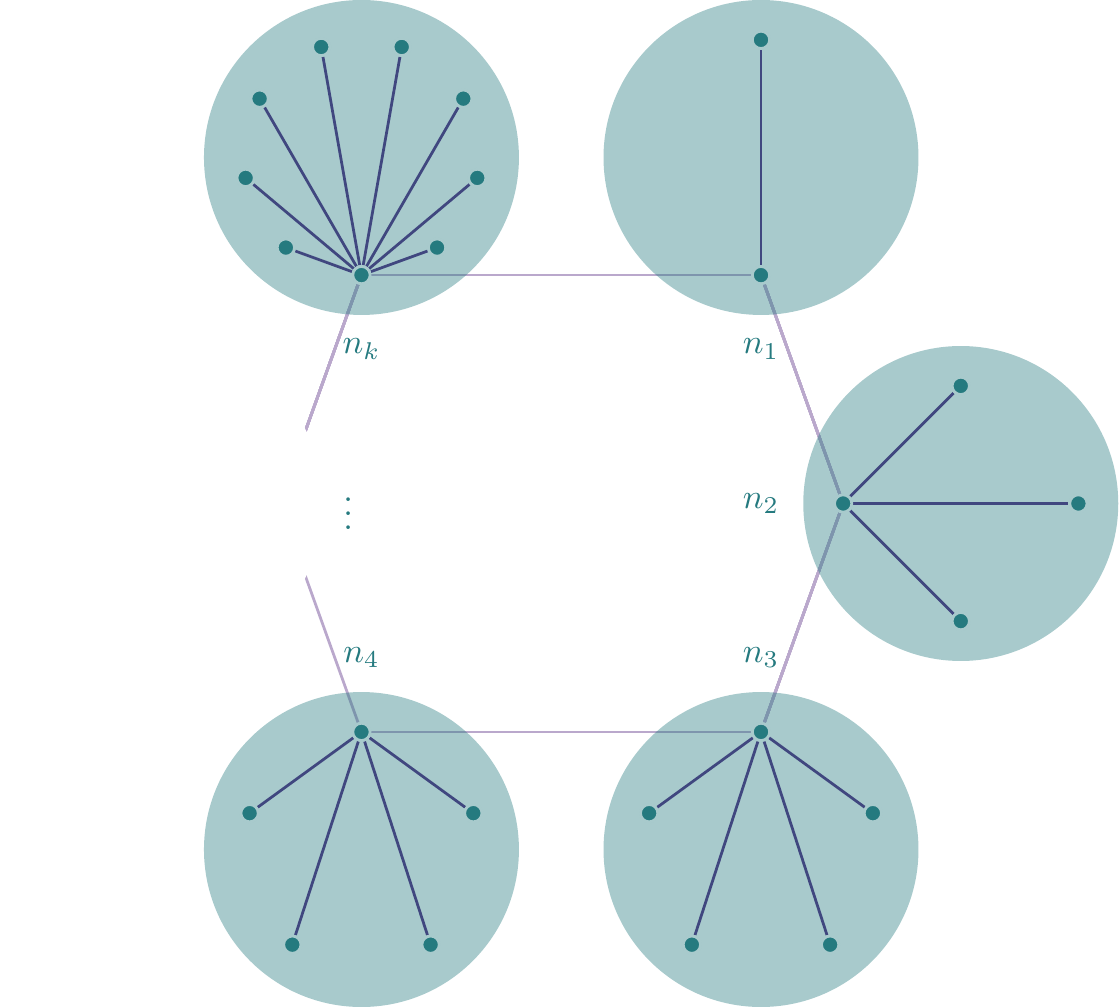}
\end{center}
\caption{\label{fig:partition}
\new{Graph $G$ constitutes of $k$ trees, with $n_k$ nodes each connected together in a cyclic way. Except of three special cases, the corresponding foliage partition provides an integer partition $n_1,\ldots ,n_k \mathbin{\vdash} n$.}
}
\end{figure}

\begin{table}[ht]
\begin{center}
		\includegraphics[width=0.55\columnwidth, 
    trim=0cm 
        0cm 
        0cm 
        0cm,
    clip]{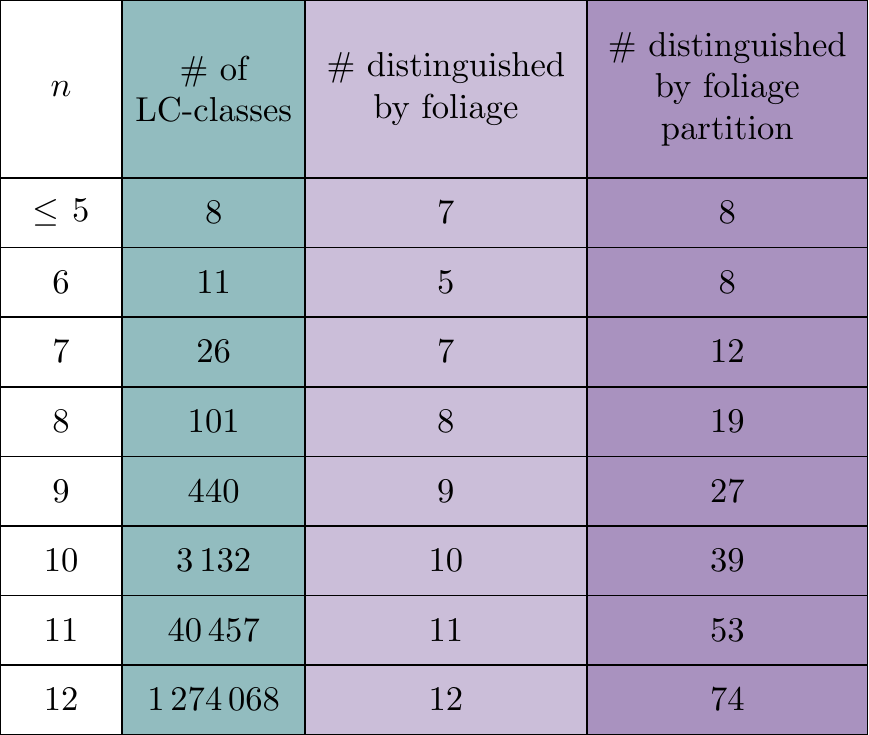}
    \caption{
    \textcolor{green}{The number of LC-classes for $n\leq 12$ qubits}, and lower bounds on this number \textcolor{violet}{derived from the foliage partition and only the foliage of a graph}, respectively. 
    The bound in the third column is linear in $n$, while the one in the fourth column, is exponential in $\sqrt{n}$. 
    }
    \label{tab:assymptotic}
    \end{center}
\end{table}

\section{LC-symmetries of a graph}
\label{Section:symmetry_of_graphs}

In this section, we discuss the \textit{LC-automorphism group} of a graph, \ie~the group of vertex permutations  that can only produce graphs that are LC-equivalent to the original graph.
In particular, we discuss the relation between the LC-automorphism group and the foliage partition of a graph. 

Recall that the automorphism group of a graph is the group of all permutations $\sigma \in S_n$ of its vertices such that any pair of vertices $(u,v)$ forms an edge if and only if the pair $(\sigma (u),\sigma  (v))$ also forms an edge. Determining the group of automorphisms of a graph is computationally hard. In particular, the \textit{graph automorphism problem}, \ie~the problem of deciding whether a graph has a non-trivial automorphism or not, is in the computational complexity class 
\texttt{NP}~\cite{LUKS198242,MATHON1979131,doi:10.1137/0210002}. 

A similar problem, which arises naturally in quantum computation and information, is to determine the group of permutations of graph vertices under which corresponding graph states remain locally Clifford (or unitary, or SLOCC) equivalent. 
Surprisingly, the problem has been the subject of very little research to date~\cite{BurchardtRaissi20,burchardt2021entanglement,Balasz22XYZ,Rather22classLU}. 
Formally, we define the following.

\begin{definition}
\label{LCautoGroup}
Consider a graph $G=(V,E)$ and a permutation $\sigma$ of its vertex set $V$, \ie~$\sigma (G)$ is the graph with its vertices and edges permuted according to $\sigma$. 
If $\sigma(G)$ is LC-equivalent to $G$, then we call $\sigma$ an \textit{LC-automorphism} of $G$. The set of all LC-automorphisms forms the \textit{LC-automorphism group} denoted by $\AutG$.
\end{definition}

\noindent
Note that the LC-automorphism group is, indeed, a group as the composition of two LC-operations, and the inverse of local complementations are local complementations. As we will see, the foliage partition of a graph is a relevant notion in order to determine the LC-automorphism group of a given graph.

\begin{lemma}
\label{symmetries_aut}
Let $\{V_1,\ldots,V_k\}=\hat{V}$ be the foliage partition of a graph $G$. The following subset 
\begin{equation}
\label{eqq9}
\AutGin 
\defeq\Big\{ \sigma_1 \cdots\sigma_k \Big| \sigma_i\in S_{V_i} \Big\} 
= S_{V_1} \times\cdots \times S_{V_k}
\end{equation}
is a normal subgroup of the group of all LC-automorphisms of $G$, \ie~$\AutGin\triangleleft\AutG$.
\end{lemma}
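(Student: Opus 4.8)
The plan is to show two things: first, that $\AutGin$ consists of genuine LC-automorphisms of $G$ (so that it is in fact a subgroup of $\AutG$), and second, that it is normal. For the first part, I would take any $\sigma = \sigma_1\cdots\sigma_k$ with $\sigma_i \in S_{V_i}$ and argue that $\sigma(G)$ is LC-equivalent to $G$. The key observation is \cref{def:foliage_partition}: each block $V_i$ of size $\geq 2$ is a star connected only through its center (type \axilleaf), or a clique with uniform external neighborhood (type \contwins), or an independent set with uniform external neighborhood (type \discontwins). For types \contwins\ and \discontwins, permuting vertices within $V_i$ leaves the graph \emph{literally unchanged}, since all vertices of $V_i$ play symmetric roles. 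For type \axilleaf, a permutation within $V_i$ may move the axil, but by \cref{LCinLLC} (or directly: local complementation at the axil turns the star into a clique, in which all vertices are interchangeable, and complementing back at any chosen vertex yields a star with that vertex as axil) one sees that any two vertices of $V_i$ can be swapped up to LC-equivalence. Composing these, $\sigma(G)$ is LC-equivalent to $G$, so $\sigma \in \AutG$.

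Next I would verify $\AutGin$ is a subgroup: it is literally the internal direct product $S_{V_1}\times\cdots\times S_{V_k}$ viewed inside $S_V$ (the blocks are disjoint, so the factors commute), hence closed under composition and inverses; and it sits inside $\AutG$ by the previous paragraph.

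For normality, the crucial input is \cref{th1q}: the foliage partition is LC-invariant. Let $\tau \in \AutG$ be arbitrary. Since $\tau(G)$ is LC-equivalent to $G$, and the foliage partition is an LC-invariant that is also manifestly permutation-covariant (relabelling vertices relabels the partition), the foliage partition of $\tau(G)$ equals $\tau(\hat{V}) = \{\tau(V_1),\ldots,\tau(V_k)\}$; but it also equals $\hat{V}$ because $\tau(G)\sim_{LC} G$. Hence $\tau$ permutes the blocks $\{V_1,\ldots,V_k\}$ among themselves: there is a permutation $\pi$ of $\{1,\ldots,k\}$ with $\tau(V_i) = V_{\pi(i)}$. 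Now for $\sigma = \sigma_1\cdots\sigma_k \in \AutGin$, conjugation gives $\tau\sigma\tau^{-1}$, which acts on $V_{\pi(i)}$ as $\tau|_{V_i}\circ\sigma_i\circ(\tau|_{V_i})^{-1}$, a permutation of $V_{\pi(i)}$; so $\tau\sigma\tau^{-1} = \sigma'_1\cdots\sigma'_k$ with $\sigma'_{\pi(i)}\in S_{V_{\pi(i)}}$, i.e.\ $\tau\sigma\tau^{-1}\in\AutGin$. Therefore $\AutGin\triangleleft\AutG$.

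The main obstacle is the first step — checking that within-block permutations really are LC-automorphisms, particularly for the \axilleaf\ type where the permutation genuinely changes the graph (it moves the axil) and one must exhibit the compensating sequence of local complementations. This is where \cref{def:foliage_partition} and \cref{LCinLLC} do the heavy lifting: the lifted-LC picture shows that a local complementation at the axil toggles the type \axilleaf\,$\leftrightarrow$\,\contwins\ while fixing the block as a set, and in the \contwins\ phase every vertex of the block is interchangeable, so any transposition of two leaves-or-axil is realized by LC-equivalence. The normality argument by contrast is essentially formal once one invokes the LC-invariance of the foliage partition from \cref{th1q}.
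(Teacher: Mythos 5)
Your proposal is correct and follows essentially the same route as the paper: for membership you reduce to the fact that in the \contwins/\discontwins\ phase a block is literally symmetric (handling \axilleaf\ via \cref{LCinLLC}, just as the paper does by first LC-normalizing the block to a same-neighborhood form), and for normality you use exactly the paper's key point that any LC-automorphism must permute the foliage blocks because the partition is LC-invariant and permutation-covariant. The only differences are presentational (explicit case analysis versus prior normalization; a direct conjugation computation versus the paper's argument by contradiction).
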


\begin{proof}
Consider any subset $V_i$ of the foliage partition of a graph $G$. The graph $G$ can be transformed by LC-operations to the form, where all vertices in $V_i$ have pairwise the same neighborhood\footnote{This can be seen from the transformation rules of \cref{LCinLLC}. All sets of type \axilleaf~can be transformed into \contwins~by a local complementation with respect to the axils. The neighboring sets of type \discontwins~and \contwins~are only transformed into each other by this.}. Such a graph is invariant under any permutation of vertices $\sigma_i \in S_{V_i}$. Therefore the state $\ket{\psi}_G$ corresponding to the initial graph $G$, and the state with permuted parties $\id \otimes \cdots \otimes  \sigma_i \otimes \cdots \otimes \id \ket{\psi}_G$ are LC-equivalent. Since LC-equivalence is a transitive relation, we can compose any such permutations of subsets of the foliage partition and therefore that $\AutGin$ is a subset of $\AutG$. The set $\AutGin$ is closed under compositions and inverses and is hence a subgroup of $\AutG$.

Secondly, we shall show that $\AutGin$ is a normal subgroup of $\AutG$. 
We shall prove that it is normal. Suppose that $\sigma\in \AutG$ and $\pi\in \AutGin$. We shall show that $\sigma \circ\pi \circ \sigma^{-1} \in \AutGin$. 
Suppose on the contrary, that $\sigma \circ\pi \circ \sigma^{-1} \not\in \AutGin$, meaning that there is a vertex $v\in V_\ell$, such that $\sigma \circ\pi \circ \sigma^{-1} (v) \not\in V_\ell$. Denote by $V_{\ell '} \in \hat{V}$ the set in the foliage partition which contains $\sigma (v)$, i.e $\sigma (v) \in V_{\ell '}$. Since $\pi \in \AutGin$, $\pi \circ \sigma (v) \in V_{\ell '}$ belongs to the same set in foliage partition. Therefore both vertices are transformed by $\sigma^{-1} \in \AutG$ into vertices in the same set in the foliage partition. Note, however, that $v=\sigma^{-1}\circ \sigma (v) \in V_{\ell }$, while $\sigma^{-1}\circ \pi \circ \sigma (v) \not\in V_\ell$,  which leads to the contradiction. This shows that $\AutGin$ is a normal subgroup of $\AutG$, and finishes the proof.
\end{proof}

Since $\AutGin \triangleleft \AutG$ is a normal subgroup, the group $\AutG$ can be represented as a semi-direct product\footnote{For a given group $G$ and its normal subgroup $N$ and subgroup $H$, we say that the group $G$ is a \textit{semi-direct product} $G =  N \rtimes H $ if and only if for any element $g\in G$ there are unique $n\in N$ and $h\in H$ such that $g=nh$. The group $H $ is isomorphic to the quotient group $G  / N$ and the orders of three groups satisfy $|G|=|N||H|$. In that sense, the semi-direct product is a generalization of a direct product, where only one subgroup is normal. Similarly as with direct products, $G$ decomposes uniquely into $N$ and $H$, which however, do not commute (contrary to direct product).} of $\AutGin$ and the quotient group, that is, as $\AutG =  \AutGin  \rtimes \AutG  /\AutGin  $. Furthermore $\AutG  /\AutGin $ can be seen as a subgroup of permutations of subsets of foliage partition.

\begin{lemma}
\label{lemm2}
Let $\hat{V}=\{V_1,\ldots,V_k\}$ be the foliage partition of a graph $G$, and let $v_i\in V_i$ be representatives of each partition set. For any permutation $\sigma \in S_{V}$ of the vertices, define the permutation $F(\sigma ) \in S_{\hat{V}}$ of the foliage partition by $F(\sigma ) :V_i\mapsto V_{j}$ where $j$ is such that $\sigma (v_i)\in V_{j}$.  The quotient group $ \AutG  /\AutGin  $ is isomorphic to the following subset of permutations of subsets of the foliage partition
\begin{align*}
\AutGout \cong \Big\{F(\sigma ) \in S_{\hat{V}}
\big| 
\sigma \in \AutG, \Big\}
<S_{\hat{V}}.
\end{align*}
\end{lemma}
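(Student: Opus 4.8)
The plan is to show that the map $F\colon \AutG \to S_{\hat V}$ is a well-defined group homomorphism whose kernel is exactly $\AutGin$; the first isomorphism theorem then yields $\AutG/\AutGin \cong F(\AutG) = \AutGout$ as claimed. I split this into three steps.

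First I would verify that $F$ is well-defined, i.e.\ that $F(\sigma)$ does not depend on the choice of representatives $v_i\in V_i$ and is genuinely a permutation of $\hat V$. This is where the foliage-partition invariance under LC-equivalence (\cref{th1q}) does the work: if $\sigma\in\AutG$ then $\sigma(G)$ is LC-equivalent to $G$, so $\sigma(G)$ has the same foliage partition $\hat V$; since permuting $G$ by $\sigma$ permutes its foliage partition by $\sigma$ setwise, $\sigma$ must map each $V_i$ onto some full set $V_j\in\hat V$. Hence $\sigma(v_i)\in V_j$ pins down the same $V_j$ regardless of which element $v_i$ of $V_i$ we picked, and the induced assignment $V_i\mapsto V_j$ is a bijection of $\hat V$ (its inverse being induced by $\sigma^{-1}$, which is also in $\AutG$).

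Second, $F$ is a homomorphism: for $\sigma,\rho\in\AutG$, both $F(\sigma\rho)$ and $F(\sigma)F(\rho)$ send $V_i$ to the unique set containing $\sigma(\rho(v_i))$ (using that $\rho(V_i)$ is a full partition set, so $\sigma$ acting on any representative of it lands in the same target set), so $F(\sigma\rho)=F(\sigma)\circ F(\rho)$. Third, I identify the kernel: $F(\sigma)=\mathrm{id}_{\hat V}$ means $\sigma(V_i)=V_i$ for every $i$, i.e.\ $\sigma$ restricts to a permutation of each $V_i$, which is exactly the condition $\sigma\in S_{V_1}\times\cdots\times S_{V_k}=\AutGin$. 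Conversely every element of $\AutGin$ clearly lies in the kernel (and in $\AutG$, by \cref{symmetries_aut}). Applying the first isomorphism theorem gives $\AutG/\ker F \cong \operatorname{im} F$, and by construction $\operatorname{im} F = \{F(\sigma): \sigma\in\AutG\} = \AutGout$, a subgroup of $S_{\hat V}$.

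The only real subtlety — and the step I would flag as the main obstacle — is the well-definedness in the first step, specifically the claim that each $V_i$ is mapped \emph{onto} a whole partition set $V_j$ rather than being split among several. This is not automatic from $\sigma$ being a graph permutation; it relies crucially on the LC-invariance of the foliage partition: $\sigma(G)\sim_{LC} G$ forces $\widehat{\sigma(V)} = \hat V$, and since $\widehat{\sigma(V)}$ is literally $\{\sigma(V_1),\ldots,\sigma(V_k)\}$, each $\sigma(V_i)$ must coincide with some $V_j$. Once this is in hand, everything else is routine. (Note that \cref{symmetries_aut} already records that $\AutGin\triangleleft\AutG$, so the quotient and the semidirect-product remark preceding this lemma are consistent with the kernel computation here.)
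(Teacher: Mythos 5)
Your proposal is correct and follows essentially the same route as the paper's proof: both construct the map $F$ via representatives, justify well-definedness through the LC-invariance of the foliage partition (the paper cites \cref{necessary_condition}, you cite \cref{th1q}, which is the same underlying fact), identify $\ker F = \AutGin$ and $\operatorname{im} F = \AutGout$, and conclude by the first isomorphism theorem. Your treatment of the well-definedness step — arguing that $\{\sigma(V_1),\ldots,\sigma(V_k)\}$ must coincide with $\hat{V}$ as a partition — is in fact slightly more explicit than the paper's.
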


\begin{proof}
Firstly, note that any automorphism $\sigma \in \AutG$ induces a permutation $\hat{\sigma} \in S_{\{V_1,\ldots,V_k\}}$ on the partition set. Indeed, for any $i\in [k]$, choose a representative element $v_i\in V_i$. For any $i\in [k]$, there exists $i'\in [k]$ such that $\sigma (i) \in V_{i'}$, which defines $\hat{\sigma} $ by $\hat{\sigma} (V_i )=V_{i'}$.

We shall see that this is a well-defined permutation of the foliage partition. Indeed, as an immediate consequence of \cref{necessary_condition}, any two vertices $v_i, w_i \in V_i$ from the same set in foliage partition are transformed by $\sigma $ onto vertices belonging to the same set $\sigma (v_i), \sigma (w_i) \in V_{i'}$, hence $\hat{\sigma}$ does not depend on the choice of representatives.  Furthermore, it is easy to see that $\hat{\sigma}$ is a surjection and hence permutation of a foliage partition. 

Observe that the set $ \AutG $ constitutes a group with respect to composition of permutations. Furthermore, the mapping $F:\AutG \rightarrow \AutGout$ given by $F(\sigma)=\hat{\sigma}$ is a group homomorphism. Note that the group $\AutGout$ was defined as an image of $F$, \ie~$\text{Im} F =\AutGout$. We shall see that the kernel is given by $\text{Ker} F =\AutGin$. Observe that from $\hat{\sigma}\equiv \text{id}$ it follows that for all $i\in [k]$ there are $v_i\in V_i$ such that $\sigma (v_i)\in V_i$. Furthermore, for all $i\in [k]$ and all $v\in V_i$, $\sigma (v)\in V_i$, hence $\sigma \in \AutGin$. On the other hand, for any $\sigma\in \AutGin$, $F(\sigma)\equiv \text{id}$ on the foliage partition.

Since $F$ is a group homomorphism it lifts into an isomorphism $\hat{F}: \AutG / \text{Ker} F \rightarrow \text{Im} F =\AutGout$, and as we have shown $\text{Ker} F=\AutGin$, which finishes the proof.
\end{proof}

The foliage partition of a graph does not completely characterize the group $\AutGout$, however, it restricts its structure in the following way. 

\begin{lemma}
\label{lemm1}
Let $\hat{V}=\{V_1,\ldots,V_k\}$ be the foliage partition of a graph $G$. We have the following bound
\begin{align}
\label{eqq10}
\AutGout  < \AutGoutupp \defeq
S_{\hat{V}^1} \times \cdots  \times S_{\hat{V}^k} ,
\end{align}
where $\hat{V}^{\ell}=\{ V_i\in \hat{V}: |V_i|=\ell\}$ denote subset containing all sets with exactly $\ell$ elements.
\end{lemma}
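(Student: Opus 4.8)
The plan is to show that every permutation in $\AutGout$ preserves the cardinality of the sets it permutes, i.e.\ it can only map a set $V_i \in \hat{V}$ to another set $V_j \in \hat{V}$ with $|V_i| = |V_j|$. Once this is established, the claimed inclusion is immediate: a permutation of $\hat{V}$ that preserves cardinality decomposes as a product of permutations, one for each cardinality class $\hat{V}^{\ell}$, which is exactly the statement that $\AutGout$ sits inside $S_{\hat{V}^1} \times \cdots \times S_{\hat{V}^k}$.

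First I would recall, using \cref{lemm2}, that any element of $\AutGout$ has the form $F(\sigma)$ for some $\sigma \in \AutG$, and that $F(\sigma)$ sends $V_i$ to the set $V_j$ containing $\sigma(v_i)$; moreover, by \cref{necessary_condition} (the LC-invariance of the foliage partition), $\sigma$ maps $V_i$ \emph{bijectively} onto $V_j = F(\sigma)(V_i)$, because $\sigma(G)$ has the same foliage partition as $G$ up to the relabelling induced by $\sigma$. Hence $|V_i| = |\sigma(V_i)| = |V_j|$, which is precisely the cardinality-preservation property. The key input here is just that $\sigma$ is a bijection on $V$ and that it carries the foliage partition of $G$ to the foliage partition of $\sigma(G)$, which coincides with that of $G$ as an unordered partition since they are LC-equivalent.

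Then I would spell out the factorisation: write $\hat{V} = \bigsqcup_{\ell \geq 1} \hat{V}^{\ell}$ as the disjoint union over the distinct cardinalities appearing among the $V_i$. Since every $\pi \in \AutGout$ preserves each block $\hat{V}^{\ell}$ setwise (by the previous paragraph), its restriction $\pi|_{\hat{V}^{\ell}}$ lies in $S_{\hat{V}^{\ell}}$, and $\pi \mapsto (\pi|_{\hat{V}^1}, \ldots, \pi|_{\hat{V}^k})$ is an injective group homomorphism into $S_{\hat{V}^1} \times \cdots \times S_{\hat{V}^k} = \AutGoutupp$. This exhibits $\AutGout$ as a subgroup of $\AutGoutupp$, as claimed.

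I do not expect a genuine obstacle here; the lemma is essentially a bookkeeping consequence of \cref{necessary_condition} and \cref{lemm2}. The only point that needs a little care is making precise the sense in which ``$\sigma$ maps the foliage partition of $G$ to the foliage partition of $\sigma(G)$'': the relation $\sim$ in \cref{def:relation} is defined purely in terms of the adjacency matrix, so relabelling vertices by $\sigma$ relabels the partition classes accordingly, and LC-invariance (\cref{th1q}) guarantees that $\sigma(G)$ and $G$ share the same partition as a set partition of $V$ — so the two partitions of $V$ must agree class-by-class after applying $\sigma$, forcing $|V_i| = |F(\sigma)(V_i)|$. Everything else is routine.
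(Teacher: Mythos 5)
Your proposal is correct and follows essentially the same route as the paper: both arguments reduce to the observation (via \cref{necessary_condition}) that an LC-automorphism $\sigma$ carries each foliage-partition set $V_i$ onto another set $V_{i'}$ of the same cardinality, so $F(\sigma)$ preserves each block $\hat{V}^{\ell}$ and hence lies in $S_{\hat{V}^1}\times\cdots\times S_{\hat{V}^k}$. Your version is in fact slightly more careful than the paper's, since you make explicit why $\sigma(V_i)=V_{i'}$ exactly (rather than merely $\sigma(V_i)\subseteq V_{i'}$), which is the step the paper glosses over.
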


\begin{proof}
In accordance to \cref{necessary_condition}, any two vertices $v_i, w_i \in V_i$ from the same set in foliage partition are transformed by $\sigma $ onto vertices belonging to the same set $\sigma (v_i), \sigma (w_i) \in V_{i'}$. As a consequence, sets $V_i$ and $V_{i'}$ has the same order, and hence $F(\sigma)$ permutes only sets in the foliage partition of the same order
\end{proof}

\cref{Table11} presents LC-automorphisms of several graphs on five qubits, previously presented in \cref{figPartition}.

\begin{table}[ht!]
\begin{center}
		\includegraphics[width=0.55\columnwidth, 
    trim=0cm 
        0cm 
        0cm 
        0cm,
    clip]{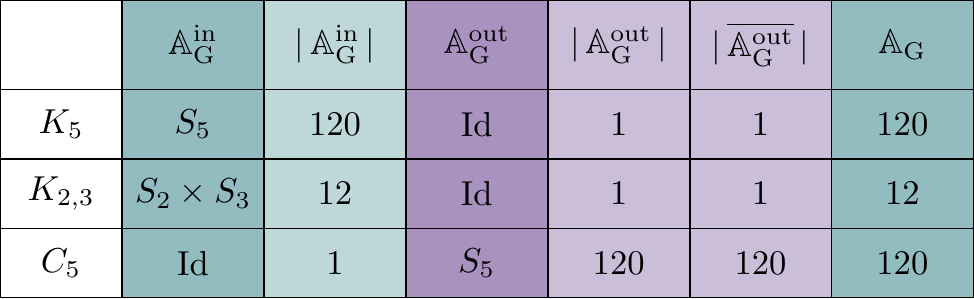}
\caption{\label{Table11} The LC-automorphisms group $\AutGin $ of the three graphs previously presented in \cref{figPartition}:
 the fully connected graph $K_5$, the bipartite graph $K_{2,3}$, and the circular graph $C_5$. As $|\AutGoutupp |=1$ for the first two graphs, the foliage representation completely characterizes their LC-automorphisms group $\AutGin $.}
\end{center}
\end{table}

As we demonstrated, the group $\AutG$ of all LC-automorphisms of a graph is a semi-direct product of two groups $\AutG =  \AutGin  \rtimes \AutGout   $, where $\AutGin $ consists of all permutations within subsets of the foliage partition, \ie~$\AutGin =S_{V_1} \times\cdots \times S_{V_k}$ while $\AutGout$ is isomorphic to a subgroup of all permutation of the subsets in foliage partition, \ie~$\AutGout < S_{\{V_1,\ldots,V_k\}}$. Note that the group $\AutGin$ is completely characterized by the foliage partition, and its order is $|\AutGin | =\prod_{i=1}^k |V_i | !$. Furthermore the foliage partition provides restriction on the structure and order of $\AutGout$. Directly form \cref{lemm2,lemm1}, we have the following.

\begin{proposition}
\label{symmetriess}
The LC-automorphisms group $\AutG$ of a graph $G$ satisfies
\begin{equation*}
    \AutGin \,\leq \, \AutG \,\leq \,
 \AutGin   \rtimes \AutGoutupp,
\end{equation*}
where $ \AutGin$ and $\AutGoutupp$ are completely characterized by the foliage partition $\hat{V}=\{V_1,\ldots,V_k\}$ of a graph, see \cref{eqq9,eqq10}. Furthermore, denote by $\hat{V}^{\ell}=\{ V_i\in \hat{V}: |V_i|=\ell\}$ subset containing all sets with exactly $\ell$ elements, and by $T_\ell=|\hat{V}^{\ell}|$ its size. Then, the order of the LC-automorphisms group is bounded as follows:
\begin{align*}
\prod_{\ell =1}^k \, {\ell !}^{T_\ell} \,
\leq \Big| \AutG  \Big| \leq 
\prod_{\ell =1}^k \, {\ell !}^{T_\ell} \, T_\ell !
\end{align*}
\end{proposition}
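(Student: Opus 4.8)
The plan is to assemble \cref{symmetriess} directly from the three lemmas already proven, treating it as essentially a corollary that only requires bookkeeping. First I would establish the group-theoretic sandwich $\AutGin \leq \AutG \leq \AutGin \rtimes \AutGoutupp$. The left inclusion is exactly the content of \cref{symmetries_aut}, where $\AutGin$ was shown to be a (normal) subgroup of $\AutG$. For the right inclusion, I would invoke \cref{symmetries_aut} again to write $\AutG = \AutGin \rtimes \AutG/\AutGin$, then use \cref{lemm2} to identify $\AutG/\AutGin \cong \AutGout$ as a subgroup of $S_{\hat{V}}$, and finally use \cref{lemm1} to embed $\AutGout < \AutGoutupp = S_{\hat{V}^1}\times\cdots\times S_{\hat{V}^k}$. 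Composing these gives $\AutG = \AutGin \rtimes \AutGout \leq \AutGin \rtimes \AutGoutupp$, where the last step uses that enlarging the acting subgroup in a semi-direct product (with the same normal subgroup and the same action, here by the natural permutation action on the blocks of the foliage partition) yields an overgroup. I should note that $\AutGoutupp$ does act on $\AutGin = S_{V_1}\times\cdots\times S_{V_k}$ by permuting the factors of equal size, so the external semi-direct product $\AutGin \rtimes \AutGoutupp$ is well-defined and contains $\AutGin \rtimes \AutGout$ as a subgroup.

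Next I would derive the order bounds by taking cardinalities through the sandwich. From $\AutGin \leq \AutG$ and $|\AutGin| = \prod_{i=1}^{k}|V_i|!$ we get the lower bound; regrouping the product according to block sizes, each block of size $\ell$ contributes $\ell!$ and there are $T_\ell = |\hat{V}^\ell|$ of them, so $|\AutGin| = \prod_{\ell=1}^{k} (\ell!)^{T_\ell}$, giving $|\AutG| \geq \prod_{\ell=1}^{k}(\ell!)^{T_\ell}$. For the upper bound, the semi-direct product structure gives $|\AutG| = |\AutGin|\cdot|\AutGout| \leq |\AutGin|\cdot|\AutGoutupp|$, and $|\AutGoutupp| = \prod_{\ell=1}^{k} |S_{\hat{V}^\ell}| = \prod_{\ell=1}^{k} T_\ell!$, so $|\AutG| \leq \prod_{\ell=1}^{k}(\ell!)^{T_\ell}\, T_\ell!$. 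The index ranges can all be taken up to $k = |\hat{V}|$ since $T_\ell = 0$ (and $T_\ell! = 1$, $(\ell!)^{0}=1$) for $\ell$ larger than the maximum block size, so the extra factors are harmless.

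I do not expect any serious obstacle here; the proposition is a repackaging of \cref{symmetries_aut,lemm2,lemm1}. The one point that needs a little care, and which I would spell out, is the monotonicity claim "$\AutGout < \AutGoutupp$ implies $\AutGin \rtimes \AutGout \leq \AutGin \rtimes \AutGoutupp$": one must check that the action of $\AutGout$ on $\AutGin$ used implicitly in \cref{symmetries_aut} is the restriction of the permutation-of-factors action of $\AutGoutupp$ on $\AutGin = S_{V_1}\times\cdots\times S_{V_k}$, so that the inclusion of acting groups genuinely induces an inclusion of semi-direct products rather than merely an injection of underlying sets. This follows because in $\AutG$ conjugation of an element of $\AutGin$ by a lift of $F(\sigma)\in\AutGout$ permutes the symmetric-group factors exactly according to how $F(\sigma)$ permutes the blocks $V_i$ — and by \cref{lemm1} such a $\sigma$ only permutes blocks of equal size, so this is compatible with the ambient action of $\AutGoutupp$. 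With that observation in place the proof is complete.

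\begin{proof}
By \cref{symmetries_aut}, $\AutGin$ is a normal subgroup of $\AutG$, which gives the lower inclusion $\AutGin \leq \AutG$ and, moreover, the decomposition $\AutG = \AutGin \rtimes (\AutG/\AutGin)$. By \cref{lemm2}, the quotient $\AutG/\AutGin$ is isomorphic to $\AutGout < S_{\hat{V}}$, the group acting on $\AutGin = S_{V_1}\times\cdots\times S_{V_k}$ by permuting the symmetric-group factors in accordance with the permutation of the blocks $V_i$. By \cref{lemm1}, $\AutGout < \AutGoutupp = S_{\hat{V}^1}\times\cdots\times S_{\hat{V}^k}$, i.e.\ an automorphism permutes only blocks of equal size; hence the action of $\AutGout$ on $\AutGin$ is the restriction of the natural action of $\AutGoutupp$ on $\AutGin$ (which permutes equal-size factors), and consequently
\[
\AutG = \AutGin \rtimes \AutGout \;\leq\; \AutGin \rtimes \AutGoutupp .
\]
This proves the chain $\AutGin \leq \AutG \leq \AutGin \rtimes \AutGoutupp$.

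For the order bounds, write $|\AutGin| = \prod_{i=1}^{k} |V_i|!$. Grouping the factors by block size and recalling $\hat{V}^{\ell}=\{V_i\in\hat{V}:|V_i|=\ell\}$ and $T_\ell = |\hat{V}^\ell|$, each block of size $\ell$ contributes a factor $\ell!$, so $|\AutGin| = \prod_{\ell=1}^{k} (\ell!)^{T_\ell}$. From $\AutGin \leq \AutG$ we obtain
\[
\prod_{\ell=1}^{k} (\ell!)^{T_\ell} \;\leq\; \bigl|\AutG\bigr| .
\]
From the semi-direct product structure, $|\AutG| = |\AutGin|\cdot|\AutGout| \leq |\AutGin|\cdot|\AutGoutupp|$, and $|\AutGoutupp| = \prod_{\ell=1}^{k} |S_{\hat{V}^\ell}| = \prod_{\ell=1}^{k} T_\ell!$; therefore
\[
\bigl|\AutG\bigr| \;\leq\; \prod_{\ell=1}^{k} (\ell!)^{T_\ell}\, T_\ell! .
\]
(For $\ell$ exceeding the maximum block size one has $T_\ell = 0$, so the corresponding factors are $1$ and the range of the products may be taken up to $k=|\hat{V}|$ without ambiguity.) This establishes both bounds and completes the proof.
\end{proof}
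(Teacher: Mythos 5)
Your proposal is correct and follows essentially the same route as the paper, which states \cref{symmetriess} as a direct consequence of \cref{symmetries_aut,lemm2,lemm1} without writing out a separate proof: the sandwich comes from $\AutGin \triangleleft \AutG$, the identification $\AutG/\AutGin \cong \AutGout$, and the embedding $\AutGout < \AutGoutupp$, and the order bounds follow by counting. Your extra remark on the compatibility of the $\AutGout$- and $\AutGoutupp$-actions on $\AutGin$ is a reasonable piece of added care that the paper leaves implicit.
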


Note that the bound given above is tight if $| \AutGoutupp |=1$, which occurs if all sets in the foliage partition have a different number of elements, \eg~see the first two graphs in \cref{Table11}.
We analyzed the $\AutG $ group of all LC-orbits of graph states up to 8 qubits. In \cref{tab:symmetries_table1,tab:symmetries_table2,tab:symmetries_table3}, we present the LC-automorphism groups of all LC-orbits up to eight qubits and show when the above lower and upper bounds are saturated.

We conclude this section by connecting the size of LC-automorphisms group $\AutG$ with the size of LC-orbit of $G$. Recall that LC-orbit $L_G$ consists of all graphs LC-equivalent to $G$. Furthermore, identifying all isomorphic graphs in $L_G$ results into orbit $C_G$, relevant in the context of unlabelled graphs. Formally, $C_G\defeq L_G / \sim$, where two graphs $G,G'$ are related $G\sim G'$ if they are isomorphic, i.e there is $\sigma\in S_n$ such that $G=\sigma (G')$. 

\begin{proposition}
The size of LC-automorphisms group of a graph and its $L_G$ and $C_G$ orbits satisfy the following inequality:
\begin{equation}
\label{w23}
|\AutG | \; |C_G | \geq | L_G |.
\end{equation}
\end{proposition}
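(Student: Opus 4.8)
The plan is to use a counting (orbit–stabilizer) argument relating the labelled orbit $L_G$, the unlabelled orbit $C_G$, and the action of $S_n$ by vertex permutations. The key observation is that $S_n$ acts on the labelled orbit $L_G$: if $G'\in L_G$ and $\sigma\in S_n$, then $\sigma(G')$ is again LC-equivalent to $G$ (this uses that LC-equivalence is preserved under simultaneous relabelling of vertices, since local complementation commutes with graph isomorphism). The quotient of $L_G$ by this $S_n$-action is exactly $C_G = L_G/\sim$. Hence by the orbit-counting setup, $|L_G|$ equals the sum over the isomorphism classes $[G']\in C_G$ of the size of the $S_n$-orbit of $G'$ inside $L_G$, and each such orbit has size $|S_n|/|\mathrm{Stab}(G')| = n!/|\mathrm{AUT}(G')|$, where $\mathrm{AUT}(G')$ is the ordinary graph-automorphism group of $G'$.

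Next I would bound each term. The crux is that for any $G'$ LC-equivalent to $G$, the ordinary automorphism group $\mathrm{AUT}(G')$ is a subgroup of the LC-automorphism group $\mathbb{A}_{G'}$: indeed any graph-automorphism $\sigma$ of $G'$ satisfies $\sigma(G')=G'$, which is trivially LC-equivalent to $G'$, so $\sigma\in\mathbb{A}_{G'}$. Moreover $\mathbb{A}_{G'}=\mathbb{A}_G$ as sets of permutations, because $G$ and $G'$ lie in the same LC-orbit and ``$\sigma(H)$ LC-equivalent to $H$'' depends only on the LC-orbit of $H$ (if $H'$ is obtained from $H$ by local complementations, then $\sigma(H')$ is obtained from $\sigma(H)$ by the same complementations with relabelled vertices, so $\sigma(H')\sim_{LC}\sigma(H)\sim_{LC}H\sim_{LC}H'$). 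Therefore $|\mathrm{AUT}(G')|\le|\mathbb{A}_{G'}|=|\mathbb{A}_G|$ for every $G'\in L_G$.

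Putting these together:
\begin{equation*}
|L_G| = \sum_{[G']\in C_G} \frac{n!}{|\mathrm{AUT}(G')|} \le \sum_{[G']\in C_G} \frac{n!}{1}\cdot\frac{|\mathbb{A}_G|}{n!}\cdot\frac{n!}{|\mathbb{A}_G|}\,,
\end{equation*}
which is not quite the clean form I want, so instead I would argue directly: each term $n!/|\mathrm{AUT}(G')|$ is the size of one $S_n$-orbit, and I want to show this is at most $|\mathbb{A}_G|$. Actually the cleanest route is the reverse inequality via stabilizers of the LC-action. Consider the group $\mathbb{A}_G$ acting on $L_G$; but more simply, note $|L_G| = \sum_{[G']} n!/|\mathrm{AUT}(G')|$ and $|C_G|$ is the number of summands, while $n!/|\mathrm{AUT}(G')| \le n!/1$; this only gives $|L_G|\le n!\,|C_G|$, too weak. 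The right refinement: since $\mathrm{AUT}(G')\le\mathbb{A}_G$, we get $n!/|\mathrm{AUT}(G')| \ge n!/|\mathbb{A}_G|$, which bounds terms from below, giving $|L_G|\ge |C_G|\, n!/|\mathbb{A}_G|$, i.e. $|\mathbb{A}_G|\,|L_G|\ge |C_G|\,n!$ — again the wrong direction. So I must bound $n!/|\mathrm{AUT}(G')|\le |C_G|^{-1}|\mathbb{A}_G||L_G|$? No. Let me reconsider: the correct, genuinely easy argument is that the map $\mathbb{A}_G \times L_G^{\mathrm{fix}} \to L_G$ is surjective where $L_G^{\mathrm{fix}}$ is a set of isomorphism-class representatives: given any $G''\in L_G$, pick the representative $G'$ of its isomorphism class; then $G''=\sigma(G')$ for some $\sigma\in S_n$, and since both are in $L_G$, one checks $\sigma\in\mathbb{A}_G$ (as $\sigma(G')=G''\sim_{LC}G'$). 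Hence every element of $L_G$ is of the form $\sigma(G')$ with $\sigma\in\mathbb{A}_G$ and $G'$ a representative, so $|L_G|\le |\mathbb{A}_G|\cdot|C_G|$, which is exactly \eqref{w23}.

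The main obstacle — and the only place requiring care — is justifying that $\sigma\in\mathbb{A}_G$ in the surjectivity step, i.e. that if $G', G''\in L_G$ and $\sigma(G')=G''$ then $\sigma$ is an LC-automorphism of $G$. This follows because $\sigma(G'')\sim_{LC} G''$ would be the defining condition, but what we actually have is $\sigma(G')=G''$ with $G'\sim_{LC} G \sim_{LC} G''$; so $\sigma(G')\sim_{LC}\sigma(G)$ (applying the relabelling $\sigma$ to the complementation sequence from $G$ to $G'$), hence $\sigma(G)\sim_{LC}\sigma(G')=G''\sim_{LC}G$, which is precisely the statement that $\sigma\in\mathbb{A}_G$ by \cref{LCautoGroup}. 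The rest is the trivial inequality $|L_G|\le |\mathbb{A}_G|\,|C_G|$ from this surjective covering, so no heavy calculation is involved.
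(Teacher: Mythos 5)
Your final argument is correct and is essentially the paper's own proof: each isomorphism class inside $L_G$ is covered by $\{\sigma(G'):\sigma\in\AutG\}$ for a class representative $G'$, hence has at most $|\AutG|$ elements, and there are $|C_G|$ such classes. You even supply the justification (that $\sigma(G')=G''$ with $G',G''\in L_G$ forces $\sigma\in\AutG$, via compatibility of local complementation with relabelling) that the paper leaves implicit; the earlier orbit--stabilizer detours in your write-up are harmless since you correctly discard them.
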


\begin{proof}
Consider an equivalence class in $C_G$ orbit represented by a graph $G'\in L_G$, we denote it $[G']$. Note that $[G']=\{ \sigma (G') :\sigma\in \AutG \}$. Obviously, the size of $[G']$ orbit is at most $|\AutG |$\footnote{In fact, the size of the orbit $[G']= |\AutG | / |A_G|$, where $A_G $ is the automorphisms group of a graph $G$.}. Since $C_G$ consists of orbits in $L_G$ and each orbit contains at most $|\AutG |$ elements from $L_G$, the inequality (\ref{w23}) holds true. 
\end{proof}

This observation can be used to bound the size of each of the three quantities involved. For example, the ratio $\tfrac{|L_G |}{|C_G |}$ gives the lower bound on the LC-automorphisms group of a graph. Furthermore, the quantity $I_G \defeq  |\AutG |\, \tfrac{|L_G |}{|C_G |}$ satisfies $I_G\geq 1$ and measures the average size of the automorphisms group of the graphs in $C_G$ orbit. Analizing all LC-orbits of graphs up to $n=8$ qubits, we noticed that the quantity $I_G$ is closely related to the structure of foliage partition, see the last column in \cref{tab:symmetries_table1,tab:symmetries_table2,tab:symmetries_table3}. Indeed, it is close to its lower bound $I_G \geq 1$ for graphs having the trivial foliage partition, i.e $|V|=|\hat{V}|$, meaning that all LC-equivalent graphs have small number of automorphisms in average. On the other hand $I_G$ is maximized for the orbit of GHZ state, the only states for which $|\hat{V}|=1$. Indeed, all LC-equivalent states are either fully-connected graph or a star graph, both with a large automorphisms group.

\section{An algorithm to find the foliage partition}
\label{algo}

Based on \cref{def:foliage_partition}, we can give an algorithm to find a foliage partition of a given graph that runs in $\mathcal{O}(n^3)$ time in the number of graph vertices $n$. 
Indeed, consider a graph $G$ represented by its adjacency matrix $\Gamma_{G}$ and denote by $\Gamma_{v}$ the $v$-row of the $\Gamma_{G}$ matrix. 
In each step, the algorithm takes a vertex $v\in V$ as an input and returns a subset $V_i\in \hat{V}$ of the foliage partition of all vertices related to $v$. We save this information and delete the subset $V_i$ from the graph. We continue this process until the graph is empty. Below, we present the iteratve step of the algorithm for a vertex $v\in V$ as an input.

Firstly, check if $v$ is a leaf. Note that $v$ is a leaf if and only if $\Gamma_{v}$ contains only one non-zero element. If such a unique non-zero element was in $w$ position, $w$ is an adjacent axil. Hence $\Gamma_{w}$ contains information of all other leaves adjacent to $w$. The set of those leaves together with $w$ constitutes a subset $V_i$ in the foliage partition. The complexity cost is $\mathcal{O}(n^2)$, since for each of the neighbors of $w$ we need to check if they are leafs.

Secondly, if $v$ is not a leaf, check if it is an axil. This can be done by choosing any neighbor of $v$ and checking if it is a leaf. If this is the case, $v$ together with all its neighbors (read $\Gamma_{v}$) constitutes a subset $V_i$ in the foliage partition. The complexity cost is again $\mathcal{O}(n^2)$. 

Finally, if $v$ is neither a leaf nor an axil, we shall determine all twins sharing the same neighborhood with $v$. Choose any other vertex $w\in V$ and decide if $v$ and $w$ share the same neighborhood. For instance, compare $\Gamma_{v}$ against $\Gamma_{w}$, both agree on all positions except $v$ and $w$ if and only if they share the same neighborhood. Repeat this for every vertex $w\in V$ in the graph and store vertices sharing the same neighbors as $v$. Those vertices together with $v$ constitute a subset $V_i$ in a foliage partition. The complexity cost is $\mathcal{O}(n^2)$.

Since the computational complexity in each step is at most $\mathcal{O}(n^2)$, and we repeat this process at most $n$ times, the total computational complexity is $\mathcal{O}(n^3)$. Hence, we have the following.

\begin{lemma}[An algorithm to find foliage partition]
\label{algoAlgo}
There is an algorithm to find a foliage partition of a given graph that runs in $\mathcal{O}(n^3)$ time in the number of vertices $n$ in the graph.
\end{lemma}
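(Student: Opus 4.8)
The plan is to convert the structural characterization of \cref{def:foliage_partition} (together with \cref{extraOb1}) into a procedure that, given any vertex $v$, reads the whole block $V_i\in\hat V$ containing $v$ directly off the adjacency matrix $\Gamma_G$, and then to run this on one representative per block. Throughout, write $\Gamma_v$ for the $v$-th row of $\Gamma_G$; its support is the neighborhood $N_v$, so $v$ is a leaf exactly when $\Gamma_v$ has a single nonzero entry, and two vertices $v,w$ are twins exactly when $\Gamma_v$ and $\Gamma_w$ agree in every coordinate except possibly the coordinates indexed by $v$ and $w$ themselves.

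First I would describe the iterative step on an unclassified vertex $v$, branching on three mutually exclusive and exhaustive cases that mirror the four types of \cref{def:foliage_partition}: (i) if $\Gamma_v$ has exactly one nonzero entry, in position $w$, then $v$ is a leaf of a type-\axilleaf{} block with axil $w$, and by \cref{extraOb1} the block $[v]$ is $\{w\}$ together with every other leaf adjacent to $w$ — found by scanning $N_w$ and testing each neighbor for the leaf property; (ii) if $v$ is not a leaf but some vertex of $N_v$ is a leaf, then $v$ is itself the axil of a type-\axilleaf{} block and $[v]$ is recovered exactly as in (i) with $w=v$; (iii) otherwise $v$ belongs to a block of type \contwins, \discontwins{} or \single, and $[v]$ is $\{v\}$ together with all twins of $v$, obtained by one sweep over the unclassified vertices comparing each $\Gamma_w$ against $\Gamma_v$. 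Each branch does $\mathcal{O}(n)$ comparisons of length-$n$ vectors, hence costs $\mathcal{O}(n^2)$; the degenerate graph $K_2$ (whose single block consists of one vertex that is of type \axilleaf{} and \contwins{} at once) and isolated vertices (singleton blocks) are absorbed without incident into cases (i) and (iii).

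The outer loop records the block $V_i$ just produced, flags its vertices as classified, and repeats on an arbitrary remaining vertex until none are left; the blocks being disjoint, this happens at most $k\le n$ times, so the total cost is $\mathcal{O}(n^3)$. Correctness reduces to the observation that (i)--(iii) exhaust the possible types of the block containing $v$ and that in each case the stated description of $[v]$ is exactly the $\sim$-class of $v$ supplied by \cref{def:foliage_partition,extraOb1}; verifying that no $w$ identified in case (iii) is secretly a leaf or an axil (which would place it in a different block) is the one point that needs a short argument. The only implementation subtlety — and the closest thing here to an obstacle — is to keep working with the fixed matrix $\Gamma_G$ and a classified/unclassified flag rather than literally deleting rows and columns of already-processed blocks, since such deletions could spuriously turn later vertices into leaves or twins; with that proviso the leaf/axil/twin tests remain valid at every iteration and the complexity bound follows.
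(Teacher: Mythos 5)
Your proposal is correct and follows essentially the same route as the paper's own argument: the identical three-way case split (leaf, axil, twin-or-singleton) applied to one representative per block, with an $\mathcal{O}(n^2)$ cost per step and at most $n$ steps giving $\mathcal{O}(n^3)$. If anything, you are slightly more careful than the paper on two implementation points --- taking only the \emph{leaf} neighbors of the axil rather than all of its neighbors, and marking vertices as classified instead of deleting them (which could indeed spuriously create new leaves or twins) --- but these are refinements of, not departures from, the published proof.
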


Interestingly, the computational complexity of an algorithm providing a foliage partition of a graph based on the relation defined \cref{def:relation} is larger, namely $\mathcal{O}(n^4)$. Indeed, the relation presented in \cref{def:relation} involves four vertices at once and hence its straightforward implementation has complexity $\mathcal{O}(n^4)$ in the number of vertices of a graph $n$.

\new{Note that the foliage of a graph is efficiently computable, as a direct consequence of this result. Indeed, the foliage of a graph is defined as the union of all elements of the foliage partition that are of size two or greater. Therefore, by computing the foliage partition, we can also determine the foliage of a graph. Since our algorithm computes the foliage partition in \( O(n^3) \) time, the foliage can likewise be computed in \( O(n^3) \). }

\section{Foliage partition for small graphs}
\label{Section:foliage_of_small_graphs}

Using the existing repositories and libraries of representatives of LC-orbits of graph states on $n\leq 8$ qubit subsystems \cite{Adcock2020mappinggraphstate,Table_LU_orbit,PhysRevA.83.042314}, we computed the foliage partition and representation for all LC-orbits of small graphs. \cref{fig:graph_representatives} presents foliage partition of all LC-orbits of graphs up to $n=8$ vertices. There are exactly 146 such orbits, the order in which they are listed and selection of orbits representatives agree with Figure 11 in Ref. \cite{Adcock2020mappinggraphstate}. Vertices belonging to the same set in the foliage partition are presented with the same colour. 

Furthermore, \cref{fig:graph_representatives_level1} presents the foliage graph of all graphs in \cref{fig:graph_representatives}, where nodes in the same set in the foliage partition were grouped together. Note that graphs with the same (or more generally LC-equivalent) foliage graph does not have to be equivalent. Indeed, representatives (13) and (15) have the same foliage graph, however are not LC-equivalent, as they represents different LC-orbits, see \cref{fig:graph_representatives,fig:graph_representatives_level1}. Note that they are distinguished by the type function, which takes values \axilleaf, \axilleaf, \axilleaf~for representative (13) and \axilleaf, \axilleaf, \contwins~for representative (15).

In addition, \cref{tab:symmetries_table1,tab:symmetries_table2,tab:symmetries_table3} presents general results concerning LC-automorphisms group $\AutG$ of representatives of all LC-orbits of graphs up to $n=8$ vertices. As we have shown, $\AutG =  \AutGin  \rtimes \AutGout   $ is a semi-direct product of two groups, where $\AutGin$ is completely determined by the foliage partition, while $\AutGout$ has usually more sophisticated structure. As a consequence, the form and the size of $\AutG$ group is only bounded by the foliage partition, see \cref{symmetriess}. Cases for which the bound is tight (lower bound equals upper bound in \cref{symmetriess}) are shown in brown, otherwise, cases where the lower/upper bound is achieved are presented in green/purple. Using \texttt{GAP} programming, we computed the form and generators of $\AutGout$ group for representatives of all LC-orbits. As generators of $\AutGin$ group are fully characterized by the foliage partition, see \cref{symmetries_aut}, this gives a full insight for the form of $\AutG$ group. 

We conclude this section with two loose observations relating to the foliage partition and the group of LC-automorphisms $\AutG$ group. Firstly, note that the $2$nd-foliage partition (and in general $k$th foliage partition) might give better upper bounds on the $\AutGout$ group. Indeed, it is easy to see, that $\AutGout$ is not only a subgroup of $\AutGoutupp$ but also $\overline{\mathbb{A}_{\hat{G}}^{ \text{out}}}$, and $\overline{\mathbb{A}_{\hathat{G}}^{ \text{out}}}$, etc. In particular, for the representative of (139) LC-orbit $G$, presented on \cref{fig:graph139}, we have $\AutGoutupp \cong S_6$, while $\overline{\mathbb{A}_{\hat{G}}^{ \text{out}}}\cong S_5$, which actually agree with the $\AutGout \cong S_5$. This phenomenon might be useful when examining LC-automorphisms $\AutG$ group of larger graphs. 

Secondly, note that the property of being a subgraph has no immediate consequences connecting related LC-automorphisms groups. Indeed, for $H$ being a subgraph of $G$, we encounter many examples when $\AutG  < \AutH $ (\eg~orbits 47, 8), $\AutH< \AutG$ 
(\eg~orbits 4, 120, or orbits 43, 141). This loosely resembles a decision problem of whether the graph H can be obtained from the graph G by sequences of LC-transforms and vertices deletaion, which is known to be \texttt{NP}-hard \cite{Dahlberg_2020,Dahlberg2020transforminggraph}. It is intriguing to ask if there is any relation between the inclusion of groups of automorphisms of a graph and its subgraph and the problem of local transformation between the corresponding states described in Ref. \cite{Dahlberg_2020,Dahlberg2020transforminggraph}.

\section{Qudit graph states}
\label{Section:qudits}

In this section we recall the definition of a qudit graph states $\ket{G}\in \mathcal{H}^{\otimes n}_d$	and their relation to weighted graphs. In the special case, when $d=2$, they coincide with the graph states presented in \cref{Section:qubits}. All results presented in the paper, as foliage partition, foliage representation, foliage graph, might be easily adapted to the qudit graph states as we present in this section.

\textit{Weighted graphs} are a natural generalization of graphs, where edges are equipped with weights, here taken from a finite group $\mathbb{Z}_d$. A $d$-weighted graph is a simple graph $G=(V,E)$  with weights $ \omega_{vw} \in \mathbb{Z}_{d}$ on its edges $vw \in E$. By convention, $vw \not\in E$ if and only if the corresponding weight vanishes $\omega_{vw} =0$, see \cref{lastpicture} Note that a weighted graph is fully characterized by its adjacency matrix $\Gamma_{G} \defeq (\omega_{vw})_{v,w\in V}$.

To any $d$-weighted graph $(G,\omega )$ one may associate a \textit{qudit graph state} $\ket{\psi_G} \in \mathcal{H}_d^{\otimes |V|}$ as follows
\begin{equation}
\label{graph_state_weighted}
\ket{\psi_G} = \prod_{(v,w)\in E} \Big( \textbf{CZ}_{\{vw\}}\Big)^{\omega_{vw}} \ket{+}^{\otimes V},
\end{equation}
where $\textbf{CZ}^{\{vw\}} \defeq  \omega_d^{ ij } \ket{ij}_{vw} \bra{ij}
$ is a controlled-Z operator acting on qubits $v, w$, and $\ket{+} \defeq \frac{1}{\sqrt{d}} (\ket{0}+\cdots +\ket{d-1})$, and $\omega_d$ is a root of unity of order $d$. 

\begin{figure}[ht!]
\begin{center}
\includegraphics[width = 0.55\textwidth]{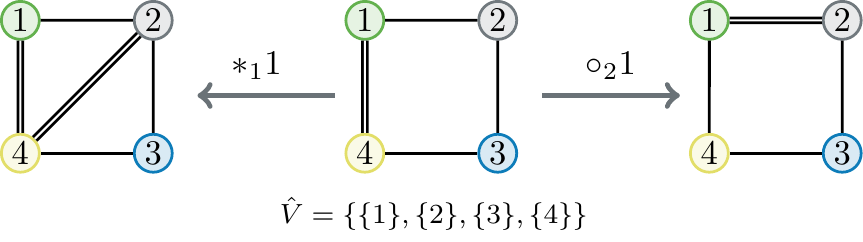}
\end{center}
\caption{\label{lastpicture}
A $3$-weighted graph on four vertices (in the middle) and LC-equivalent graphs obtained by applying two LC-operations on vertex $1$ (on the sides). The weights $0,1,2$ are graphically represented as no-edge, single-edge, and double-edge respectively. Note that the foliage partition is trivial, \ie~$|V|=|\hat{V}|$, which is impossible for graphs without weights on four vertices, see orbits 3 and 4 in \cref{tab:symmetries_table1}.}
\end{figure}

We can further introduce a generalization of local complementation for $d$-weighted graphs~\cite{Clifford}. Contrary to the qubit case, we have two distinct operators $*_a w$ and $\circ_b v$, that act on a $d$-weighted graph with a given adjacency matrix.
In the following, all operations are modulo $d$.

\begin{definition}[$*_a w$, $\circ_b v$]\label{def:qudit_local_complementation}
Consider a $d$-weighted graph with vertex set $V$ and adjacency matrix $\Gamma_{G} = (\omega_{ij})_{ij \in V}$. 
For any vertex $w$ and integer $a \in \mathbb{Z}_d$, the operator $*_a w$ transforms the graph $G$ into a new graph $G *_a w$ with the same vertex set $V$ and adjacency matrix defined by
\begin{equation}
(\Gamma_{G *_a w})_{jk} \defeq (\Gamma_{G})_{jk} + a(\Gamma_{G})_{wj}(\Gamma_{G})_{wk}
\end{equation}
for $j \neq k$, and $(\Gamma_{G}^{\prime})_{jj} = 0$ for all $j$.
Similarly, for any vertex $v$ and nonzero integer $b \in \mathbb{Z}_d$, the operator $\circ_b v$ transforms the graph $G$ into a new graph $G \circ_b v$ with an adjacency matrix obtained by pre- and post-multiplying $\Gamma_{G}$ with a diagonal matrix 
$I(v, b) \Gamma_{G} I(v, b)$,
where $I(v, b) \defeq \operatorname{diag}(1, \ldots, 1, b, 1, \ldots, 1)$ has $b$ as its $v$-th diagonal entry and 1 elsewhere.
\end{definition}

The operations $ *_a w$ and $\circ_b v$ are visualized in \cref{fig:qudit_lc,lastpicture}. Similarly to the qubit case, we have the following. 

\begin{theorem}[see \cite{vandennestGraphicalDescriptionAction2004}]
For any prime number $d$, two qudit graph states are LC-equivalent if and only if the associated $d$-weighted graphs are equivalent under a sequence of local complementations.
\end{theorem}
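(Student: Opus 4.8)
The plan is to prove the two implications separately; for $d=2$ a weighted graph is an ordinary graph and the two moves of \cref{def:qudit_local_complementation} reduce to ordinary local complementation (together with a trivial rescaling), so that case is already \cref{LCisLC} and one may assume that $d$ is an odd prime.

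For the ``if'' direction I would exhibit, for each elementary move, an explicit local Clifford unitary realizing it and verify the claim by a direct stabilizer computation. For $\circ_b v$ the candidate is the single-qudit multiplication gate $M_b\colon \ket{x}\mapsto\ket{bx \bmod d}$ on qudit $v$ --- which is Clifford precisely because $d$ is prime, so $b$ is invertible; conjugation sends $X_v\mapsto X_v^{b}$ and $Z_v\mapsto Z_v^{b^{-1}}$, and feeding the stabilizer generators $K_j = X_j\prod_k Z_k^{\omega_{jk}}$ through it rescales the $v$-th row and column of $\Gamma_G$, \ie~implements $\Gamma_G\mapsto I(v,b)\,\Gamma_G\,I(v,b)$ up to the $b\leftrightarrow b^{-1}$ convention. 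For $*_a w$ the candidate is a quadratic phase gate $\ket{x}\mapsto\omega_d^{\,c_a x^2}\ket{x}$ on qudit $w$, with $c_a$ chosen using $2^{-1}\bmod d$ so that conjugation yields $X_w\mapsto\omega_d^{\bullet}\,X_wZ_w^{a}$; propagating this through $K_w$ produces exactly the update $(\Gamma_G)_{jk}\mapsto(\Gamma_G)_{jk}+a(\Gamma_G)_{wj}(\Gamma_G)_{wk}$ on the off-diagonal entries, while an extra diagonal gate on $w$ removes the spurious self-loop so the diagonal stays zero. This half is routine.

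For the ``only if'' direction --- the substantive half --- I would work in the symplectic (check-matrix) picture. Since $d$ is prime, $\mathbb{Z}_d$ is a field, every qudit graph state is a stabilizer state whose stabilizer (mod Pauli operators) is the row span of $(\,\Id\mid\Gamma_G\,)$ over $\mathbb{Z}_d$, a local Clifford acts on check matrices by a block-diagonal symplectic matrix, and the single-qudit Clifford group modulo Pauli operators is $SL(2,\mathbb{Z}_d)$, generated by the ``Fourier'' element and an ``upper shear'' (the phase gate), with the diagonal matrices $\operatorname{diag}(b,b^{-1})$ supplied by the multiplication gates. Given $U=\bigotimes_v U_v$ with $U\ket{\psi_G}=\ket{\psi_{G'}}$, decompose each $U_v$ into these generators and induct on the total number of generator factors. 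At each step a single generator hits one qudit of a graph state, and one checks --- by a linear-algebra computation over $\mathbb{Z}_d$, after reabsorbing a local diagonal Pauli / multiplication-gate correction so that the check matrix is again of the form $(\,\Id\mid\Gamma'\,)$ --- that $\Gamma\to\Gamma'$ is either the identity, a single $\circ_b v$, or a single $*_a w$. Composing the resulting moves along the decomposition of $U$ gives the required sequence of local complementations.

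The hard part is precisely this inductive step: unlike $*_a w$ and $\circ_b v$, a bare Fourier gate on a single qudit generally does \emph{not} send a graph state to a graph state, so one must show that the ``defect'' is always a \emph{local} operation --- a diagonal Pauli together with a multiplication gate --- that can be commuted past the remaining factors and absorbed, and it is here that primality of $d$ is essential, guaranteeing the inverses needed to keep the check matrix in reduced row-echelon form. A secondary nuisance is the bookkeeping of the Pauli and phase corrections accumulated along the induction, so that the final graph $G'$ is matched exactly rather than only up to a local Pauli. Once the generator-by-generator reduction is in place, the theorem follows.
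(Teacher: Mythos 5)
First, a point of comparison: the paper does not prove this statement at all --- it is imported from Ref.~\cite{vandennestGraphicalDescriptionAction2004} and its qudit/weighted-graph generalizations, so there is no in-paper argument to match yours against, and any proof you supply has to stand on its own. Judged on that basis, your sketch contains a concrete error in the ``if'' direction. The unitary you propose for $*_a w$ --- a computational-basis quadratic phase gate $\ket{x}\mapsto\omega_d^{c_a x^2}\ket{x}$ acting on qudit $w$ \emph{alone} --- does not implement $*_a w$. Being diagonal, it multiplies the amplitude of $\ket{x}$ by $\omega_d^{c_a x_w^2}$; equivalently, it sends $K_w=X_w\prod_k Z_k^{\omega_{wk}}$ to $\omega_d^{\bullet}X_wZ_w^{a}\prod_k Z_k^{\omega_{wk}}$ and fixes every other generator. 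The result is the graph state of $G$ with a self-loop of weight $a$ at $w$, not of $G*_a w$: no diagonal gate on $w$ can create the new couplings $a\,\omega_{wj}\omega_{wk}\,x_j x_k$ \emph{between the neighbours} $j,k$ of $w$, and the ``extra diagonal gate on $w$'' you invoke to kill the self-loop cannot repair this. The correct gadget, exactly parallel to the qubit realization $\sqrt{-iX_a}\otimes\bigotimes_{b\in N_a}\sqrt{iZ_b}$, is an $X$-basis (Fourier-conjugated) quadratic phase on $w$ --- fixing $X_w$ and sending $Z_w\mapsto\omega_d^{\bullet}Z_wX_w^{\pm a}$, so that each neighbour's generator absorbs a power of $K_w$ --- combined with computational-basis quadratic phases on the neighbours of $w$ to cancel the leftover single-qudit factors. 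Your treatment of $\circ_b v$ via the multiplication gate $M_b$ is correct (up to the $b\leftrightarrow b^{-1}$ convention you note).

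For the ``only if'' direction your strategy is the standard and correct one --- represent the stabilizer by the check matrix $(\Id\mid\Gamma_G)$ over the field $\mathbb{Z}_d$, let local Cliffords act as block-diagonal symplectic matrices, decompose each single-qudit factor into $SL(2,\mathbb{Z}_d)$ generators, and induct --- but, as you yourself flag, the inductive step (showing that the defect of a bare Fourier generator can always be absorbed into a local multiplication-gate/Pauli correction that returns the check matrix to graph form, and that the induced map on $\Gamma$ is a composition of $*_a w$ and $\circ_b v$ moves) is exactly where the content of the theorem lives, and it is not carried out. As written, the proposal is a proof plan with one incorrect gadget rather than a proof; both halves need to be completed before it can replace the citation.
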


\begin{figure}[ht!]
\begin{center}
\includegraphics[width = 0.5\textwidth, 
    trim=16.5cm 
        5cm 
        16.5cm 
        5cm,
    clip]{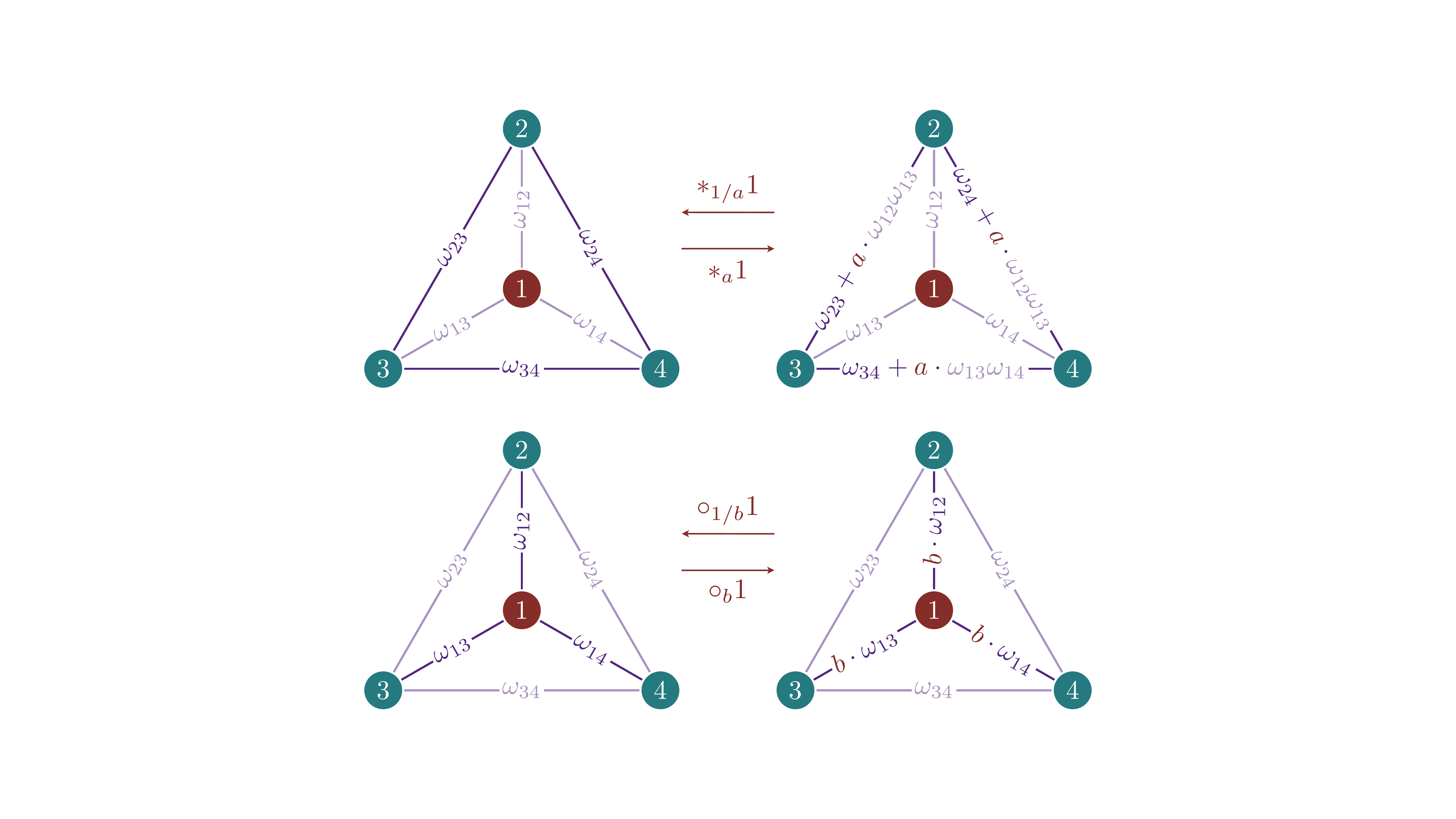}
\end{center}
\caption{\label{fig:qudit_lc}
Qudit local complementation operations $*_a 1$ and $\circ_b 1$ as introduced in \cref{def:qudit_local_complementation}.
The operation $*_a 1$ updates the weights of the edges between all pairs of vertices based on their connections to the vertex $1$ and the integer $a$.
The operation $\circ_b 1$ scales the weights of the edges between vertex $1$ and all other vertices by the integer $b$; it does not affect edges between other pairs of vertices that do not involve vertex $1$.}
\end{figure}

For $d$-weighted graphs we can define a foliage partition similarly as for graphs.

\begin{definition}
\label{def:2}
Consider a $d$-weighted graph characterized by its adjacency matrix $\Gamma_G= (\omega_{vw})_{v,w\in V}$. We define the binary relation on the set of vertices $V$. Two vertices $v,w$ are related $v \sim w$ if and only if they are in the same connected component and $\omega_{v,u_1}\cdot \omega_{w,u_2}=\omega_{v,u_2}\cdot \omega_{w,u_1}$ for any other pair of vertices $u_1,u_2 \neq v,w$.
\end{definition}

\begin{proposition}
\label{lemma2}
The relation defined in \cref{def:2} is an equivalence relation and therefore provides a set partition $\hat{V}=\{V_1,\ldots,V_k\}$ of vertices. We call it the \textit{foliage partition of} the $d$-weighted graph. 
\end{proposition}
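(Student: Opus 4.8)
The plan is to mirror the argument used for the qubit relation (the proof of \cref{th1q} in \cref{proofs}, which simultaneously shows that $\sim$ is an equivalence relation) and to check reflexivity, symmetry, and transitivity one after another. Reflexivity is immediate: a vertex lies in its own connected component, and $\omega_{v,u_1}\omega_{v,u_2}=\omega_{v,u_2}\omega_{v,u_1}$ because $\mathbb{Z}_d$ is commutative. Symmetry is just as quick: the defining identity $\omega_{v,u_1}\omega_{w,u_2}=\omega_{v,u_2}\omega_{w,u_1}$ is invariant under the simultaneous relabelling $v\leftrightarrow w$, $u_1\leftrightarrow u_2$, while the quantifier ``$u_1,u_2\neq v,w$'' and the same-component condition are symmetric in $v$ and $w$. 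Thus the whole content of the proposition is transitivity.

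For transitivity I would start from $v\sim w$ and $w\sim x$, assuming $v,w,x$ pairwise distinct (otherwise the claim is trivial). Lying in a common connected component is transitive, so it remains to establish $\omega_{v,a}\omega_{x,b}=\omega_{v,b}\omega_{x,a}$ for all $a,b\notin\{v,x\}$. First I would dispose of the case where one of $a,b$ equals $w$, say $a=w$ and $b\notin\{v,w,x\}$ (the case $b=w$ is trivial): the instance of $w\sim x$ for the pair $(v,b)$ rewrites $\omega_{v,w}\omega_{x,b}$ as $\omega_{w,b}\omega_{v,x}$, and then the desired identity is exactly the instance of $v\sim w$ for the pair $(x,b)$; no cancellation is needed here. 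In the remaining case $a,b\notin\{v,w,x\}$, I would multiply the identity $\omega_{v,a}\omega_{w,b}=\omega_{v,b}\omega_{w,a}$ (from $v\sim w$) by $\omega_{x,b}$ and substitute $\omega_{w,a}\omega_{x,b}=\omega_{w,b}\omega_{x,a}$ (from $w\sim x$); everything collapses to $\omega_{w,b}\,\delta=0$, and by the symmetric manipulation $\omega_{w,a}\,\delta=0$, where $\delta\defeq\omega_{v,a}\omega_{x,b}-\omega_{v,b}\omega_{x,a}$ is the discrepancy to be eliminated. So whenever $\omega_{w,a}\neq0$ or $\omega_{w,b}\neq0$ --- so that the weight in question is cancellable in $\mathbb{Z}_d$, which is where primality of $d$ is used --- we obtain $\delta=0$.

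The one genuinely delicate point is the leftover sub-case $\omega_{w,a}=\omega_{w,b}=0$, and this is precisely where the connected-component hypothesis enters. Since $v,w,x$ are distinct, the component of $w$ has at least two vertices, so $w$ has a neighbour $c$ with $\omega_{w,c}\neq0$, and necessarily $c\notin\{a,b\}$. Splitting on the three possibilities $c\notin\{v,x\}$, $c=v$, $c=x$, I would feed the pair $(a,c)$ (and likewise $(b,c)$) into whichever of $v\sim w$, $w\sim x$ keeps it admissible and cancel $\omega_{w,c}$, forcing either $\omega_{v,a}=\omega_{v,b}=0$ or $\omega_{x,a}=\omega_{x,b}=0$; either conclusion gives $\delta=0$. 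This sub-case is vacuous for $n\le4$. I expect the main obstacle to be organisational rather than conceptual: across all these instances one must keep careful track of which auxiliary vertices might coincide with $v,w,x$, so that only identities actually licensed by \cref{def:2} are invoked; the sole non-formal ingredient is that nonzero weights of $\mathbb{Z}_d$ are cancellable, \ie~that $d$ is prime.
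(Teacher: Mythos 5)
Your proposal is correct and follows essentially the same route as the paper's proof of \cref{lemma2}: reflexivity and symmetry are immediate, and transitivity is handled by the same case split (whether an auxiliary vertex coincides with $w$), with cancellation of nonzero weights in $\mathbb{Z}_d$ and connectedness supplying a neighbour of $w$ in the degenerate sub-case. Your bookkeeping via the discrepancy $\delta$ with $\omega_{w,a}\delta=\omega_{w,b}\delta=0$ is a slightly cleaner packaging of the paper's argument (which first shows $\omega_{wu_1},\omega_{wu_2}\neq 0$ by contradiction and then cancels), and your explicit remark that primality of $d$ is what licenses the cancellation is a point the paper leaves implicit.
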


\begin{theorem}
\label{th2}
The foliage partition of a $d$-weighted graph is invariant under any LC-transformation.
\end{theorem}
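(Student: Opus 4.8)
The plan is to mirror the structure of the (qubit) proof of \cref{th1q} but to carry the weights along in the algebraic identities. Since the qudit local complementation comes in two flavors, $*_a w$ and $\circ_b v$, and both generate the LC-equivalence by \cref{def:qudit_local_complementation}, it suffices to check that the relation $\sim$ of \cref{def:2} is preserved under each of the two elementary operations separately; composing them then gives invariance under an arbitrary sequence.

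First I would dispose of the easy case: the operation $\circ_b v$ only rescales the $v$-th row and column of $\Gamma_G$ by the unit $b\in\mathbb{Z}_d^\times$ (here $d$ prime, so every nonzero element is a unit). For a pair $x,y$ with $x,y\neq v$, the quantity $\omega_{x,u_1}\omega_{y,u_2}-\omega_{x,u_2}\omega_{y,u_1}$ is simply multiplied by a unit when $u_1$ or $u_2$ equals $v$, and unchanged otherwise, so it vanishes before iff it vanishes after; and when $x=v$ a uniform factor $b$ comes out of every term, so again the defining identity is unaffected. Connectedness is obviously preserved. Hence $\circ_b v$ leaves $\hat V$ invariant.

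The substance is the $*_a w$ case, where $(\Gamma_{G*_a w})_{jk}=\omega_{jk}+a\,\omega_{wj}\omega_{wk}$ for $j\neq k$. Here I would reuse the combinatorial bookkeeping of \cref{proofs}: since $*_a w$ is invertible (its inverse is $*_{-a}w$), it is enough to show that $v\sim w'$ in $G$ implies $v\sim w'$ in $G*_a w$ for all $v,w'$. I expect to split into the subcases according to how the special vertex $w$ relates to $v$ and $w'$ — namely $w\in\{v,w'\}$, $w$ in the same foliage block as $v,w'$, $w$ a common neighbor, or $w$ otherwise placed — exactly as in the qubit proof, but now verifying the bilinear identity $\omega_{v,u_1}\omega_{w',u_2}=\omega_{v,u_2}\omega_{w',u_1}$ survives the update. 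Concretely, one substitutes the new weights into the left- and right-hand sides, expands, and checks that the correction terms cancel in pairs using the hypothesis that $v,w'$ already satisfy the proportionality relation against every pair of other vertices (in particular against any pair involving $w$). This is where the weights make the calculation genuinely heavier than in the unweighted case — one cannot just argue "same neighborhood" set-theoretically but must track the scalar $a\,\omega_{w,\cdot}$ factors — and I expect this bilinear cancellation to be the main obstacle; it is essentially a rank-$1$ update argument, and framing it that way (the submatrix $\Gamma^{\{v,w'\},\cdot}$ has rank $\le 1$, and adding the rank-$1$ term $a(\omega_{w,v},\omega_{w,w'})^{\!\top}\omega_{w,\cdot}$ keeps it rank $\le 1$ on the relevant index set) is likely the cleanest route and may even let me avoid the case split.

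Finally I would note that the analogue of \cref{def:foliage_partition} — the classification of blocks into the types \single, \axilleaf, \contwins, \discontwins\ and the lifted-LC transformation rules — carries over verbatim once invariance is established, so the proof of \cref{th2} reduces exactly to the two verifications above plus the remark that $\sim$ being an equivalence relation (\cref{lemma2}) is proved by the same transitivity argument as in the unweighted setting, now over the integral domain $\mathbb{Z}_d$ with $d$ prime.
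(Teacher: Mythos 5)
Your proposal is correct and follows essentially the same route as the paper: the paper's proof of \cref{th2} likewise reduces to checking that the bilinear identity $\omega_{vu_1}\omega_{wu_2}=\omega_{vu_2}\omega_{wu_1}$ survives each elementary operation $\circ_a$ and $*_a$, split into cases according to where the complemented vertex sits relative to $v,w,u_1,u_2$ (six cases in total), with the $\circ$ cases trivial and the $*$ cases settled by exactly the pairwise cancellation of correction terms you describe. One caution on your proposed shortcut: a rank-one update of a rank-$\le 1$ matrix is \emph{not} rank-$\le 1$ in general, so the matrix formulation only lets you avoid the case split once you additionally observe that the column part $(\omega_{wv},\omega_{ww'})^{\top}$ of the added term is itself a column of the submatrix $\Gamma^{\{v,w'\},\,V\setminus\{v,w'\}}$ (hence lies in its at most one-dimensional column space), and that complementing at $v$ or $w'$ amounts to a row operation; with those two remarks the linear-algebra route does go through.
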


\noindent
Proofs of both statements are presented in \cref{proofs}. 
In the remaining part of this section, we investigate what are the possible structures for subsets in a foliage partition.

\begin{definition}[Strong twins]
\label{def:leaf_axil_weighted}
Two vertices $v,w \in V$ that have pairwise the same neighborhood, \ie, $N_{v} \backslash \{w\} = N_{w} \backslash \{v\}$, and weights to adjusted edges agree up to some constant, \ie~there is an $a\in \mathbb{Z}_{d}$ such that $\omega_{u v} =a \;\omega_{u w}$, for all $u\in N_{w} \backslash \{v\}$ are called strong twins.
\end{definition}

\begin{lemma}
\label{obs1}
If in part $V_i$ of the foliage partition there is at least one leaf, part $V_i$ consists of a single axil and all adjacent leaves. 
\end{lemma}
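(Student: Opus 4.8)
The plan is to show that if $V_i$ contains a leaf $\ell$, then $V_i$ is precisely the set consisting of the unique neighbor (axil) $a$ of $\ell$ together with all leaves adjacent to $a$. First I would fix a leaf $\ell \in V_i$ with unique neighbor $a$, so $\omega_{\ell a} \neq 0$ and $\omega_{\ell u} = 0$ for all $u \neq a$. I would then take an arbitrary $w \in V_i$, $w \neq \ell$, and exploit the defining relation $\omega_{\ell u_1}\omega_{w u_2} = \omega_{\ell u_2}\omega_{w u_1}$ for all $u_1, u_2 \neq \ell, w$. Choosing $u_1 = a$ and letting $u_2 = u$ range over vertices other than $\ell, w, a$, the left-hand side is $\omega_{\ell a}\omega_{w u}$ and the right-hand side is $\omega_{\ell u}\omega_{w a} = 0$ since $u \neq a$ forces $\omega_{\ell u} = 0$. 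Because $\omega_{\ell a} \neq 0$ and $\mathbb{Z}_d$ — wait, here $d$ is prime (the only regime where LC-equivalence is characterized by local complementation), so $\mathbb{Z}_d$ is a field and has no zero divisors — we conclude $\omega_{w u} = 0$ for every $u \neq a, w$. Hence the only possible neighbor of $w$ is $a$: either $w$ is a leaf adjacent to $a$, or $w = a$ itself, or $w$ is isolated — but the last is impossible since $v \sim w$ requires $w$ to be in the same (nontrivial) connected component.

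Next I would argue the reverse inclusion: every leaf $\ell'$ adjacent to $a$, and $a$ itself, lies in $V_i$. For two leaves $\ell, \ell'$ both adjacent to $a$: for any $u_1, u_2 \neq \ell, \ell'$, if neither index equals $a$ then both sides of the relation vanish; if say $u_1 = a$, the relation reads $\omega_{\ell a}\omega_{\ell' u_2} = \omega_{\ell u_2}\omega_{\ell' a}$, and since $u_2 \neq a$ both $\omega_{\ell u_2}$ and $\omega_{\ell' u_2}$ vanish, so both sides are $0$. Thus $\ell \sim \ell'$ (they are clearly in the same connected component, being joined through $a$). For $\ell \sim a$: for any $u_1, u_2 \neq \ell, a$ we have $\omega_{\ell u_1} = \omega_{\ell u_2} = 0$, so the left side $\omega_{\ell u_1}\omega_{a u_2} = 0$ equals the right side $\omega_{\ell u_2}\omega_{a u_1} = 0$. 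So indeed $a \in V_i$ and all leaves of $a$ are in $V_i$, which together with the first paragraph pins down $V_i$ exactly.

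The only subtlety — and the step I would be most careful about — is the edge case handling in the relation's quantifier "for any other pair $u_1, u_2 \neq v, w$": one must make sure that when $n$ is small (say $V_i$ or the whole graph has very few vertices) the argument still goes through, and that the appeal to $\mathbb{Z}_d$ being an integral domain (using primality of $d$) is legitimate. For $d=2$ this is immediate; for general prime $d$ it is exactly the reason one restricts to prime $d$. I would also note that \cref{obs1} is the weighted-graph analogue of the implication contained in \cref{extraOb1} and \cref{def:foliage_partition} in the unweighted case, so the structure of the argument parallels those proofs. I expect no serious obstacle beyond bookkeeping the small cases; the core is the single observation that a leaf's row has a unique nonzero entry, which the product relation then propagates to any related vertex.
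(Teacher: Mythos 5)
Your proof is correct and follows essentially the same route as the paper's: both exploit the fact that a leaf's weight row has a single nonzero entry, substitute the axil into one slot of the defining product relation, and use that $\mathbb{Z}_d$ (for $d$ prime) has no zero divisors to force any other vertex related to the leaf to be adjacent only to that axil, before checking directly that the axil and its leaves are indeed all mutually related. Your version merely replaces the paper's two separate proofs by contradiction (ruling out a second non-leaf in $V_i$, then ruling out leaves attached to a different axil) with one uniform direct computation, which is a cosmetic difference.
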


\begin{proof}
Suppose that there is a leaf $v \in V_i$. Denote by $w\in V_i$ the only vertex adjacent to $v $. Note that for any other vertices $u_1,u_2$, $\omega_{v u_1}=\omega_{v u_2}=0$, hence equation $\omega_{v u_1} \cdot \omega_{w u_2} = \omega_{v u_2} \cdot \omega_{w u_1}$
is trivially satisfied. Therefore $w \in V_i$. 
Unless $|V|=2$,
$w$ is not a leaf, otherwise the graph $G$ is not connected. 

Suppose now, that there is another vertex $z\in V_i$ which is not a leaf. Note that $z$ is not connected to $v$, hence it is connected to at least one vertex $x \neq w$. Therefore $\omega_{z x} \neq 0$ and obviously $\omega_{wv} \neq 0$ as we discussed before, while $\omega_{zw} = 0$ since $v$ is a leaf. This contradicts \cref{con3} with $u_1=w$ and $u_2=x$, \ie~$\omega_{v w} \cdot \omega_{z x} = \omega_{v x} \cdot \omega_{z w}$. 

As we have shown, if there is a leaf $v \in V_i$, the only vertex adjacent to $v $ has degree at least two and also belongs to $V_i$. We shall investigate what are other vertices are elements of $V_i$. As we have shown, there are no other vertex of degree larger then one in $V_i$, hence all remaining vertices in $V_i$ are leaves. It is trivial to show that all other leaves adjacent to $w $ also belong to $V_i$. 

On the other hand, leaves adjacent to any other vertex then $w$ do not belong to $V_i$. Indeed, suppose that there are two leaves $v_1,v_2 \in V_i$ adjacent to two different vertices $w_1,w_2$. Then both of them belong to $w_1,w_2 \in V_i$, and have degrees at least two, which leads to a contradiction, and concludes the proof. 
\end{proof}

\begin{lemma}
\label{obs2}
If in part $V_i$ of the foliage partition there there are no leaves, all elements in $V_i$ have pairwise the same neighborhood, and are pairwise strong twins.
\end{lemma}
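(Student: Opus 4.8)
The plan is to fix a part $V_i$ of the foliage partition that contains no leaves, pick any two vertices $v,w\in V_i$, and extract from the defining relation $v\sim w$ — namely $\omega_{v u_1}\cdot\omega_{w u_2}=\omega_{v u_2}\cdot\omega_{w u_1}$ for all $u_1,u_2\neq v,w$ — first that they share a neighborhood and then that they are strong twins in the sense of \cref{def:leaf_axil_weighted}. The key structural input is that $V_i$ is a connected component's piece with no leaves, so every vertex of $V_i$ has degree at least two; this will let me always produce a vertex $u_1$ with $\omega_{v u_1}\neq 0$ to use as a ``pivot'' in the cross-multiplication identity.

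First I would establish that $N_v\setminus\{w\}=N_w\setminus\{v\}$. Suppose some $u\neq v,w$ is adjacent to $v$ but not to $w$, i.e.\ $\omega_{v u}\neq 0$ and $\omega_{w u}=0$. Since $v$ is not a leaf, if $u$ is the \emph{only} neighbor of $v$ outside $\{w\}$ then $w$ must be a neighbor of $v$; in any case I can pick a neighbor $u_2$ of $v$ with $u_2\neq u$ (using $\deg v\geq 2$, and treating the subcase where the second neighbor is $w$ itself separately, which forces $\omega_{vw}\neq0$ and one argues with the roles of $v,w$ swapped). Applying the relation with $u_1=u$ gives $\omega_{v u}\cdot\omega_{w u_2}=\omega_{v u_2}\cdot\omega_{w u}=0$, and since $\omega_{v u}\neq0$ and $\mathbb{Z}_d$ is an integral domain (here one uses that $d$ is prime, consistent with the paper's standing assumption in this section), we get $\omega_{w u_2}=0$ for \emph{every} neighbor $u_2\neq u$ of $v$. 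Iterating this kills all of $w$'s edges except possibly to $v$, contradicting $\deg w\geq 2$. Hence no such $u$ exists, and by symmetry $N_v\setminus\{w\}=N_w\setminus\{v\}$.

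Next I would produce the proportionality constant. Fix any $u_0\in N_v\setminus\{w\}=N_w\setminus\{v\}$ (nonempty since $\deg v\geq 2$); both $\omega_{v u_0}$ and $\omega_{w u_0}$ are nonzero, so set $a\defeq \omega_{v u_0}\cdot\omega_{w u_0}^{-1}\in\mathbb{Z}_d$ (invertibility again uses $d$ prime). For any other $u\in N_w\setminus\{v\}$, the relation with $u_1=u_0$, $u_2=u$ reads $\omega_{v u_0}\cdot\omega_{w u}=\omega_{v u}\cdot\omega_{w u_0}$, i.e.\ $\omega_{v u}=a\,\omega_{w u}$; and for $u\notin N_w\setminus\{v\}$ both sides vanish by the neighborhood equality, so $\omega_{vu}=a\,\omega_{wu}=0$ holds trivially. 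Thus $\omega_{uv}=a\,\omega_{uw}$ for all $u\in N_w\setminus\{v\}$, which is exactly the strong-twin condition. Since $v,w$ were arbitrary in $V_i$, all elements of $V_i$ are pairwise strong twins with a common neighborhood.

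The main obstacle I anticipate is bookkeeping around the \emph{degenerate pivot cases}: when one of $v,w$ has exactly two neighbors and one of them is the other vertex (so the only ``free'' pivot $u_2$ coincides with forbidden indices), and the edge case $|V|=2$ flagged in the footnote of \cref{def:foliage_partition}. These require separately checking that $\omega_{vw}$ must then be nonzero and re-running the cross-multiplication with $v,w$ interchanged; none of it is deep, but it is where a careless argument would have a gap. The use of $d$ being prime — so $\mathbb{Z}_d$ is a field — is essential and should be stated explicitly, since cancellation and the existence of $\omega_{wu_0}^{-1}$ both fail for composite $d$.
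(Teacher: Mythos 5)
Your proposal is correct and follows essentially the same route as the paper: first derive the common neighborhood by contradiction from the cross-product relation (using that a non-leaf $w$ must have a neighbor besides $v$, and that $\mathbb{Z}_d$ has no zero divisors for prime $d$), then fix a common neighbor $u_0$, set $a=\omega_{vu_0}\omega_{wu_0}^{-1}$, and propagate via the relation with $u_1=u_0$. The only cosmetic difference is that the paper reaches the contradiction in one step by pivoting on a neighbor $z$ of $w$ (so that $\omega_{vx}\omega_{wz}\neq 0$ while $\omega_{vz}\omega_{wx}=0$), which sidesteps the degenerate-pivot bookkeeping you worry about; your explicit remark that primality of $d$ is needed for cancellation and inversion is a fair point that the paper leaves implicit.
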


\begin{proof}
Suppose on the contrary that there are two vertices $v,w \in V_i$ with a different neighborhood. Without loss of generality, we may assume that there exists a vertex $x\neq v,w $, connected to $v$ and not connected to $w$, \ie~$\omega_{vx}\neq 0$, $ \omega_{wx}=0$. Since $w$ is not a leaf, it is connected to at least one vertex else then $v$. Therefore, there exists $z\neq v,x$ such that $\omega_{wz}\neq 0$. Since $v,w \in V_i$ are elements of the same subset in foliage partition, they satisfy $\omega_{vx}\cdot\omega_{wz} =\omega_{vz}\cdot\omega_{wx}$. Note that the left side of this equation is non-vanishing, while the rights one equals zero. This contradiction shows that both vertices $v,w$ have pairwise the same neighborhood.

Choose any vertex $u$ from the joint neighborhood of $v$ and $w$, and define $a=\tfrac{\omega_{uv}}{\omega_{uw}}$. Note that for any other vertex $u'$ from the joint neighborhood of $v$ and $w$, we have $\omega_{vu}\cdot\omega_{wu'} =\omega_{vu'}\cdot\omega_{wu}$, and hence $\omega_{vu'} =a\omega_{wu'}$, therefore $w$ and $v$ are strong twins.
\end{proof}

\begin{lemma}
\label{EitherConOrNotCon}
If in part $V_i$ of the foliage partition there there are no leaves, all elements of $V_i$ are either pairwise connected or pairwise disconnected. 
\end{lemma}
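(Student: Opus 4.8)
The plan is to leverage Lemma~\ref{obs2}, which already guarantees that if $V_i$ contains no leaves then all its elements are pairwise strong twins and share the same neighborhood \emph{outside} $V_i$. What remains is to analyze the edges \emph{within} $V_i$, i.e.\ the weights $\omega_{vw}$ for $v,w\in V_i$, and show that they are either all zero (pairwise disconnected) or all nonzero (pairwise connected). First I would set up the dichotomy: suppose toward a contradiction that $V_i$ contains three vertices $v,w,z$ with $\omega_{vw}\neq 0$ but $\omega_{vz}=0$ (if such a configuration can be ruled out, a short connectivity/induction argument shows the ``connected'' and ``disconnected'' cases cannot mix, since any two vertices in the same part are linked through chains of such triples).

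The key step is to apply the defining relation of the foliage partition (the weighted analogue, \cref{def:2}) to the pair $v,z\in V_i$, tested against a suitable pair $u_1,u_2$. The natural choice is $u_1=w$ together with some vertex $u_2$ witnessing that $z$ is not a leaf — since $V_i$ has no leaves, $z$ has a neighbor $x\neq v,w$ (one must check $x$ can be taken distinct from $w$; if $z$'s only neighbors were $w$ and vertices inside $V_i$, one still gets a neighbor $x\neq v,w$ as long as $|V_i|\geq 2$, which holds). Then the relation $\omega_{v,w}\cdot\omega_{z,x}=\omega_{v,x}\cdot\omega_{z,w}$ must hold. Now $\omega_{v,w}\neq 0$ by assumption. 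For the right-hand side: since $v$ and $z$ are strong twins with the same neighborhood (Lemma~\ref{obs2}), $\omega_{v,x}\neq 0$ iff $\omega_{z,x}\neq 0$, and $x$ was chosen as a neighbor of $z$, so both $\omega_{z,x}\neq 0$ and $\omega_{v,x}\neq 0$. Hence the left side is nonzero while we need $\omega_{z,w}\neq 0$ on the right — but then reversing roles, testing $v,z$ against $u_1=x,u_2=w$ or re-examining, one derives that $\omega_{v,z}$ must be nonzero as well, contradicting $\omega_{v,z}=0$. (The cleanest route: from $\omega_{v,w}\omega_{z,x}=\omega_{v,x}\omega_{z,w}$ with three of the four factors nonzero, conclude $\omega_{z,w}\neq 0$; then apply the same reasoning to the triple $w,z,v$ to force $\omega_{v,z}\neq 0$.)

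I would then close the argument by noting this contradiction shows no ``mixed'' triple exists inside $V_i$: for every pair $v,w\in V_i$ either $\omega_{v,w}=0$ or $\omega_{v,w}\neq 0$ uniformly. A final short step promotes this from ``no bad triple'' to ``all pairs behave the same'': define a graph on $V_i$ with an edge for each nonzero weight; the triple analysis shows this graph has no induced path on three vertices and (by the symmetric statement) no ``anti-path'', forcing it to be either complete or empty on the connected piece $V_i$ — and since $V_i$ lies in a single connected component of $G$, it is complete or empty as claimed.

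The main obstacle I anticipate is the bookkeeping around \emph{distinctness of the auxiliary vertices}: the relation in \cref{def:2} is only asserted for pairs $u_1,u_2\neq v,w$, so one must carefully exhibit a neighbor $x$ of $z$ that is different from $v$ and $w$ (and handle the small cases $|V_i|=2$ or graphs on very few vertices, analogous to the $K_2$ caveat in \cref{def:foliage_partition}). Once the right triple of genuinely distinct vertices is in hand, the algebra over $\mathbb{Z}_d$ is immediate — the content is entirely in choosing the witnesses, exactly as in the proofs of Lemmas~\ref{obs1} and~\ref{obs2}.
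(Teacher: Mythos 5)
Your proof is correct and follows essentially the same route as the paper: a contradiction obtained by feeding a mixed triple $v,w,z\in V_i$ (with $\omega_{vw}\neq 0$, $\omega_{vz}=0$) together with an auxiliary neighbour $x$ --- whose existence is exactly what the no-leaf hypothesis supplies --- into the defining relation of the partition. Your choice of $x$ as a neighbour of $z$ rather than of $w$ even makes the existence of $x\neq v,w$ immediate, and the detour through \cref{obs2} is harmless but redundant, since the non-vanishing of the product $\omega_{vx}\,\omega_{zw}$ already forces both factors to be non-zero.
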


\begin{proof}
Suppose that there is a vertex $v\in V_i$ which is connected to another vertex $w \in V_i$ and disconnected from another vertex $z \in V_i$. This means that $\omega_{vw}\neq 0$, $\omega_{vz}=0$. Since there are no leaves in a graph $G$, vertex $w$ is connected to at least one vertex $x$ where $x\neq z,v$. Therefore $\omega_{wx}\neq 0$. Since $z,w \in V_i$ are elements of the same subset in foliage partition, they satisfy $\omega_{wx}\cdot\omega_{zv} =\omega_{wv}\cdot\omega_{zx}$. Note that the left side of this equation is non-vanishing, while the rights one equals zero. This contradiction shows that any vertex $v\in V_i $ is either connected to all other vertices in $V_i$ or is not connected to any of them. The statement of \cref{EitherConOrNotCon} follows easily.
\end{proof}

\begin{proposition}
\label{EquiDefi}
Each set $V_i\in \hat{V}$ in the foliage partition is of 
exactly one\footnote{In the unique case of a fully connected graph on two vertices $K_2$, the foliage partition contains one set. This set has a single element that is both of type \axilleaf~and \contwins, but this singular case should not cause any confusion later.}  of the following four types:
\begin{enumerate}
\item[\single] $V_i$ contains only one element, \ie~$|V_i|=1$, or if not:
\item[\axilleaf] $V_i$ is a star graph and it is connected to the rest of the graph only by its center,
\item[\contwins] $V_i$ is a fully connected graph and all its vertices have pairwise the same neighborhood, and are pairwise strong twins
\item[\discontwins] $V_i$ is a fully disconnected graph and all its vertices have the same  neighborhood, and are pairwise strong twins.
\end{enumerate}
Furthermore, each set $V_i\in \hat{V}$ is maximal, \ie~by adding any additional vertex to the set $V_i'\defeq V_i\cup v$ for $v\notin V_i$, $V_i'$ is not one of the above sets. 

On the other hand, any set $W\subset V$ of the above form which is maximal, belongs to the foliage partition, \ie~$W\in \hat{V}$.
\end{proposition}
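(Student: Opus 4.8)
The plan is to establish \cref{EquiDefi} by combining the structural lemmas already proven in this section, namely \cref{obs1}, \cref{obs2}, and \cref{EitherConOrNotCon}, and then to argue maximality and the converse direction directly from \cref{def:2}. First I would observe that every part $V_i \in \hat{V}$ with $|V_i| \geq 2$ falls into exactly one of the three nontrivial cases: either $V_i$ contains a leaf, or it does not. If $V_i$ contains a leaf, \cref{obs1} tells us $V_i$ consists of a single axil together with all its adjacent leaves, which is precisely a star graph connected to the rest of the graph only through its center — type \axilleaf. If $V_i$ contains no leaf, \cref{obs2} gives that all vertices of $V_i$ are pairwise strong twins with a common neighborhood, and \cref{EitherConOrNotCon} splits this into the pairwise-connected case (type \contwins) and the pairwise-disconnected case (type \discontwins). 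Disjointness of the four types is immediate from their defining features (cardinality one; presence of a leaf with an external axil; fully connected; fully disconnected), with the single noted exception of $K_2$ handled by the footnote.

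Next I would treat maximality. Suppose $V_i \in \hat{V}$ and consider $V_i' = V_i \cup \{v\}$ for some $v \notin V_i$; I must show $V_i'$ is not one of the four listed types. Since $v \not\sim w$ for $w \in V_i$, there is a witnessing pair $u_1, u_2$ with $\omega_{v,u_1}\,\omega_{w,u_2} \neq \omega_{v,u_2}\,\omega_{w,u_1}$. One then checks case by case on the putative type of $V_i'$ that such a configuration is incompatible: if $V_i'$ were of type \contwins~or \discontwins, all its vertices would be strong twins, forcing $\omega_{v,u}$ and $\omega_{w,u}$ to be proportional with a uniform constant for all external $u$, contradicting the witness; if $V_i'$ were of type \axilleaf, then at most one of $v,w$ could be the axil and the leaves all have vanishing weights to external vertices, which again contradicts the existence of the witnessing pair unless $v$ or $w$ sits improperly. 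The bookkeeping here is the routine part; the content is just that $\sim$-inequivalence propagates to a structural obstruction.

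For the converse, I would take an arbitrary $W \subset V$ that is maximal and of one of the four types, and show $W \in \hat{V}$. The key point is that membership in one of these types forces all pairs of vertices in $W$ to satisfy the relation $\sim$: for type \axilleaf~one invokes the leaf/axil computations already used in the proof of \cref{obs1} (leaves have vanishing external weights, so the defining equation holds trivially, and a leaf relates to its axil); for types \contwins~and \discontwins~the strong-twin condition gives the proportionality $\omega_{u,v} = a\,\omega_{u,w}$ which is exactly what the relation requires after rearranging. Hence $W$ is contained in a single part $V_j \in \hat{V}$. By the first part of the proposition, $V_j$ has a well-defined type, and since $W \subseteq V_j$ with $W$ already of maximal size among type-admissible sets, we must have $W = V_j$, so $W \in \hat{V}$. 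Connectivity is handled by the observation (as in \cref{obs1}) that these configurations sit inside a single connected component once $|V| \geq 3$.

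The main obstacle I anticipate is the maximality argument in the weighted setting: unlike the unweighted case, two vertices can share a neighborhood yet fail to be strong twins (the weights may not be globally proportional), so one must be careful that "adding $v$ breaks the type" is genuinely forced by $v \not\sim w$ rather than merely by a neighborhood mismatch. Concretely, the delicate sub-case is when $v$ shares the exact neighborhood of the vertices of $W$ but with non-proportional weights — here one needs the witnessing pair $u_1,u_2$ to lie within that common neighborhood and to exploit that strong-twinhood is a transitive-type condition across all of $W \cup \{v\}$. Once that case is dispatched, the remaining cases reduce to the connectivity and degree arguments already rehearsed in \cref{obs1,obs2,EitherConOrNotCon}, so the overall proof is mostly an assembly of the preceding lemmas with careful attention to the weighted subtleties.
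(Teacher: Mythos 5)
Your proposal is correct and follows essentially the same route as the paper: the forward direction assembles \cref{obs1,obs2,EitherConOrNotCon}, and both maximality and the converse rest on the observation that any set of one of the four types has all its vertices pairwise $\sim$-related (your maximality case analysis is just the contrapositive of this observation, which the paper invokes directly). The only implicit step in both arguments is that a subset of a type-admissible set is again type-admissible, which is needed to conclude $W=V_j$ in the converse; this is an easy check and matches the paper's level of detail.
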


\begin{proof}
In order to simplify the argument, we call the set $W\subset V$ \textit{special} if it is one of the four types listed above. Directly form \cref{obs1,obs2,EitherConOrNotCon}, each set $V_i\in \hat{V}$ in the foliage partition is a special set. On the other hand, consider any subset $W\subset V$ of vertices which is special. It is a straightforward observation that all vertices $v,w \in W$ are pairwise related $v\sim w$. 

Consider a set $V_i\in \hat{V}$ in the foliage partition, we shall see that $V_i$ is a maximal special set. For any $v\not\in V_i$, the set $V_i'=V_i\cup\{v\} \not\in \hat{V}$ is not an element of the foliage partition, and hence $W_i'$ cannot be a special set. Therefore each set $V_i \in \hat{V}$ is a maximal special set, in a sense that for any $v\not\in V_i$, the set $V_i'=V_i\cup\{v\} \not\in \hat{V}$ is not special.

Conversely, any maximal special set $W\subset V$ belongs to the foliage partition. Indeed, as we demonstrated, all vertices $v,w\in W$ are pairwise related $v\sim w$. On the other hand, for any vertex $t\not\in W$, the set $W\cup \{t\}$ is not special, and hence contains vertices which are pairwise not related. As all vertices in $W$ are pairwise related, $t\not\sim w$ for some vertex $w\in W$, hence $t$ and $w$ are not in the same equivalence class. This shows that all vertices in $W$ are in one equivalence class, and all other vertices are not in this class. Hence $W\in \hat{V}$ is an element of foliage partition.

Lastly, note that all four types of the special sets are pairwise excluding, meaning that being one type precludes the possibility of being another type, except the unique case of a fully connected graph on two vertices $K_2$, for which the foliage partition contains one set which is both of type \axilleaf~and \contwins.
\end{proof}

\noindent
Note that \cref{EquiDefi} reduces to \cref{def:foliage_partition} in the special case $d=2$.

\section{Foliage partition is well defined and invariant under LC operations}
\label{proofs}

In this section, we show that the foliage partition is well-defined and LC-invariant. Firstly, we show that relation $\sim$ defined in \cref{def:relation} for graphs, and more generally, in \cref{def:2} for weighted graphs are, indeed, equivalence relations. 

\begin{proof}[Proof of \cref{lemma2}]
Without loss of generality, we might assume that graph $G$ is connected. Indeed, for disconnected graphs, vertices from different connected component are never related, hence the relation splits into relations on the separate connected components. 

It is straightforward to see that the relation $\sim$ is reflexive and symmetric. We shall see that it is also transitive, \ie~
$
v \;\sim\; w, \; 
w \;\sim\; z \;
\Rightarrow\; 
v \;\sim\; z.
$  
Suppose that the following conditions holds
\begin{align}
\label{con1}
\forall_{x,y\neq v,w} \quad
\omega_{v x} \cdot \omega_{w y} = \omega_{v y} \cdot \omega_{w x}, \\
\label{con2}
\forall_{x,y\neq w,z} \quad
\omega_{w x} \cdot \omega_{z y} = \omega_{w y} \cdot \omega_{z x};
\end{align}
we shall show that 
\begin{equation}
\label{con3}
\forall_{u_1,u_2\neq v,z} \quad
\omega_{v u_1} \cdot \omega_{z u_2} = \omega_{v u_2} \cdot \omega_{z u_1}.
\end{equation}
Choose any $u_1,u_2\neq v,z$. We shall consider two cases, either $u_1,u_2\neq w$ or $u_1 =w$ (or by similarity $u_2 =w$). 

Case I, $u_1,u_2\neq w$. We shall show that 
\cref{con3} holds.
Consider two possibilities, either both sides of this equation vanish and it is trivially satisfied, or one of the sides is non-vanishing.  Without loss of generality, suppose $\omega_{v u_1} \cdot \omega_{z u_2} \neq 0$. Therefore $\omega_{v u_1} \neq 0 \wedge \omega_{z u_2} \neq 0$. 

We shall see, that $\omega_{w u_1} \neq 0 \wedge \omega_{w u_2} \neq 0$. Indeed, suppose on the contrary that $\omega_{w u_1}=0$. By \cref{con1} with $x= u_1,y=u_2$, we have $\omega_{w u_2}=0$ since $\omega_{v u_1} \neq 0$. Further, taking \cref{con1} with $x= u_1,y=z$, we have $\omega_{w z}=0$ also since $\omega_{v u_1} \neq 0$. 
Finally, by \cref{con2} with $x= v,y=u_2$, we have $\omega_{w v}=0$ since $\omega_{z u_2} \neq 0$. Therefore vertex $w$ is not connected to any of $v,z,u_1,u_2$. However, since the graph is connected, there exists another vertex $u_3$, such that $\omega_{w u_3} \neq 0$. Observe that this contradicts \cref{con2} with $x=u_2,y=u_3$, since $\omega_{w u_3} \neq 0$ and $\omega_{z u_2} \neq 0$ on the right hand side, whereas $\omega_{w u_2} = 0$ on the left hand side. An analogous argument applies when assuming $\omega_{w u_2}=0$ instead of $\omega_{w u_1}=0$.

As we have shown, if $\omega_{v u_1} \cdot \omega_{z u_2} \neq 0$, then $\omega_{w u_1} \neq 0$ and $ \omega_{w u_2} \neq 0$. Taking \cref{con1} and \cref{con2} with $x=u_1, y=u_2$, we have $\omega_{v u_1} \cdot \omega_{w u_2} = \omega_{v u_2} \cdot \omega_{w u_1}$ and $\omega_{w u_1} \cdot \omega_{z u_2} = \omega_{w u_2} \cdot \omega_{z u_1}$, after multiplying by sides, we get 
$
\omega_{v u_1} \cdot \textcolor{violet}{\omega_{w u_2}} \cdot \textcolor{green}{\omega_{w u_1}} \cdot \omega_{z u_2} = \omega_{v u_2} \cdot \textcolor{green}{\omega_{w u_1}} \cdot \textcolor{violet}{\omega_{w u_2}} \cdot \omega_{z u_1}.
$. 
Note that \textcolor{green}{$\omega_{w u_1}$} and \textcolor{violet}{$\omega_{w u_2}$} appear on both sides of the equation above, and as we have shown both are non-zero. Therefore, we conclude that $\omega_{v u_1} \cdot \omega_{z u_2} = \omega_{v u_2} \cdot \omega_{z u_1}$, which finishes the proof in Case I.

Case II, $u_1 = w$. We shall show that \cref{con3} holds, that is, $\omega_{v w} \cdot \omega_{z u_2} = \omega_{v u_2} \cdot \omega_{z w}$.

By substituting
$x=z$, $y=u_2$ in \cref{con1}
and
$x=v$, $y=u_2$ in \cref{con2},
we have 
$\textcolor{green}{\omega_{v z}} \cdot \textcolor{violet}{\omega_{w u_2}} 
= 
\omega_{v u_2} \cdot \omega_{w z}$, 
and 
$\omega_{w v} \cdot \omega_{z u_2} 
= 
\textcolor{violet}{\omega_{w u_2}} \cdot \textcolor{green}{\omega_{z v}}$
and thus
$
\omega_{w v} \cdot \omega_{z u_2}
= \omega_{v u_2} \cdot \omega_{w z},
$ 
which completes the proof of Case II
since the weights $\omega_{a b} = \omega_{b a}$
are symmetric.

The equivalence classes with respect to the equivalence relation $\sim$ imply a partition for any $d$-weighted graph $(G, \omega)$.
\end{proof}

Secondly, we show that foliage partition is invariant under local complementary operations. This is the statement of \cref{th2}, which become \cref{th1q} in the very special case $d=2$. 

\begin{proof}[Proof of \cref{th2}]
Without loss of generality, we might assume that all graphs are connected. Note that property of being a connected graph is invariant under LC-transformations. Indeed, suppose that the graph has two connected components $C_1$ and $C_2$. Applying any LC-transformation on $C_1$ component ($C_2$ equivalently) does not create any connection between $C_1$ and $C_2$. Therefore LC-transformations cannot decrease the number of connected components. Since they are invertible, they cannot increase this number either.

Suppose that two vertices $v,w$ belong to the same subset in a foliage partition. It means that for any other pair of vertices $u_1,u_2\neq v,w$ the following holds $\omega_{v u_1} \cdot \omega_{wu_2} = \omega_{v u_2} \cdot \omega_{w u_1}$. We shall see that this is preserved under any LC-transformation. We shall consider two types of operations: $\circ_a$ and ${*}_a$ acting on all possible vertices (see Definition~\ref{def:qudit_local_complementation}). 
Without loss of generality, we can consider operations acting on $v, u_1$ and any other vertex $z\neq v,w,u_1,u_2$. 

For clarity of presentation, we denote by $\omega_{x,y}$ the weights of an initial graph, and by $\omega_{x,y}'$ the weights of the transformed graph. We shall show that 
\begin{equation}
\omega_{v u_1}' \cdot \omega_{w u_2}' = \omega_{v u_2}' \cdot \omega_{w u_1}'
\label{con4}
\end{equation} 
assuming 
\begin{equation}
\omega_{v x} \cdot \omega_{w y} = \omega_{v y} \cdot \omega_{w x}
\label{con5}
\end{equation}
for all $x,y\neq v,w$ in all six distinct cases.

Case I: $\circ_a v$. Note that in this case $\omega_{v u_1}'=a \omega_{v u_1}$, $\omega_{w u_2}'=\omega_{w u_2}$, $\omega_{v u_2}' =a \omega_{v u_2}$, and $ \omega_{w u_1}'=\omega_{w u_1}$. Hence \cref{con4} is trivially satisfied by \cref{con5} with $x=u_1$ and $y=u_2$ for any $a\neq 0$. 

Case II: $\circ_a u_1$ and Case III: $\circ_a z$, are analogous to the previously considered Case I.

Case IV: ${*}_a v$. Note that in this case the following weights remain unchanged: $\omega_{v u_1}'= \omega_{v u_1}$, and $\omega_{v u_2}' = \omega_{v u_2}$, while two other weights changed in the following way: 
$\omega_{w u_2}'=\omega_{w u_2} +a\omega_{v u_2}\omega_{w v}$, and $ \omega_{w u_1}'=\omega_{w u_1}+a\omega_{v u_1}\omega_{w v}$. Hence the left-hand side of \cref{con4} equals $\omega_{v u_1}\omega_{w u_2} + a\omega_{v u_2}\omega_{w v} \omega_{v u_1}$, while the right-hand side equals $\omega_{v u_2}\omega_{w u_1}+a\omega_{v u_1}\omega_{w v}\omega_{v u_2}$. Note that in both expressions there is the same term $a\omega_{v u_2}\omega_{w v} \omega_{v u_1}$, furthermore $\omega_{v u_1}\omega_{w u_2}=\omega_{v u_2}\omega_{w u_1}$ by \cref{con5} with $x=u_1, y=u_2$. This finishes the proof in this case.

Case V: ${*}_a u_1$ is analogous to the previously considered Case IV.

Case VI: ${*}_a z$. In this case weights change in the following way: 
$\omega_{v u_1}'=\omega_{v u_1} +a \omega_{v z}\omega_{z u_1} $, 
$\omega_{v u_2}'=\omega_{v u_2}+ a\omega_{v z}\omega_{z u_2}$, 
$\omega_{w u_2}'=\omega_{w u_2} +a\omega_{w z}\omega_{z u_2}$, and 
$ \omega_{w u_1}'=\omega_{w u_1}+a\omega_{w z}\omega_{z u_1}$. Therefore the left-hand side and the right-hand side of \cref{con4} are of the following form 
\begin{align*}
L\;=\;&
\omega_{v u_1}\omega_{w u_2}+
a \Big(\omega_{v z}\omega_{z u_1}\omega_{w u_2}+\omega_{w z}\omega_{z u_2}\omega_{v u_1} \Big)
\\
&+a^2 \omega_{v z}\omega_{z u_1} \omega_{w z}\omega_{z u_2}
\\
R\;=\;&
\omega_{v u_2}\omega_{w u_1}+
a \Big( \omega_{v z}\omega_{z u_2}\omega_{w u_1}+\omega_{w z}\omega_{z u_1}\omega_{v u_2}\Big)
\\
&+a^2 \omega_{v z}\omega_{z u_2}\omega_{w z}\omega_{z u_1},
\end{align*}
respectively. Note that $\omega_{v u_1}\omega_{w u_2}=\omega_{v u_2}\omega_{w u_1}$, hence the first terms in both expressions are equal. Furthermore terms with the $a^2$ prefactors are trivially equal. Finally, applying \cref{con5} with $x=z$, $y=u_2$ in the first summand, and with $x=z$, $y=u_1$ in the second summand, we have $\omega_{v z}\omega_{z u_1}\omega_{w u_2}+\omega_{w z}\omega_{z u_2}\omega_{v u_1}=\omega_{v u_2}\omega_{z u_1}\omega_{w z}+\omega_{w u_1}\omega_{z u_2}\omega_{v z}$, which shows that $L=R$ and finishes the proof. 
\end{proof}

\section{Summary}

In this paper, we introduced a simple and efficiently computable local Clifford (LC)-invariant for graph states: For any graph $G=(V,E)$ we defined the \textit{foliage partition} $\hat{V}$ of its vertices.
We demonstrated that the foliage partition $\hat{V}$ is invariant under local complementation transformations of the graph, meaning that it is invariant under LC-operations on the associated graph state. This implies that having the same foliage partition is a necessary condition for two graph states to be LC-equivalent.

To better understand the foliage partition, we analyzed it in terms of structural properties of the graph. We showed that vertices belonging to the same subset in the foliage partition exhibit similar neighborhood properties. Importantly, each set in a foliage partition is either a star connected with the rest of a graph by its center only, or a fully connected/fully disconnected set of vertices that have pairwise the same neighborhood. This allowed us to define a \textit{type function} $T$ encoding these neighborhood properties of sets in $\hat{V}$. 

We then investigated the relationship between the foliage partition of a graph and the entanglement properties of the associated graph state. Specifically, we found that the foliage partition of $G$ is trivial if and only if all $2$-body marginals of $\ket{G}$ are maximally mixed, \ie, if and only if $\ket{G}$ is $2$-uniform.

With the foliage partition $\hat{V}$ of a given graph $G$ as a foundation, we defined its \textit{foliage graph} $\hat{G}$ as a graph on the vertices $\hat{V}$. 
We proved that the adjacency matrix $\Gamma_{\hat{G}}$ of a foliage graph $\hat{G}$, together with the type function $T$, encapsulates all information about the entanglement of a graph state $\ket{G}$. For graphs with a non-trivial foliage partition, this simplifies the well-known result that the entanglement properties of $\ket{G}$ are encoded in its adjacency matrix $\Gamma_{G}$.

The foliage graph $\hat{G}$, together with the type function $T$, and the set of the graph's axils $A$, represents the underlying graph $G$ in a compact way. We termed this tuple $(\hat{G},T,A)$ the \textit{foliage representation} of $G$.

\new{Further, we show that the foliage partition provides a lower bound on the number of LC-classes $\mathcal{C}(n)$, which is exponential in $\sqrt{n}$. Notably, the lower bound on the number of such classes following from the foliage of a graph is linear in $n$. This shows the advantage of using the foliage partition over the foliage itself.}

We then discussed the \textit{LC-automorphism group} $\AutG$, the group of all permutations of vertices of a given graph $G$ leading to LC-equivalent graphs. We provided satisfactory bounds on the size of $\AutG$ group in terms of the foliage partition. 

Lastly, we extended our results to qudit graph states, defining the foliage partition for weighted graphs and proving its invariance under the generalized qudit local complementation operations. 

In conclusion, the foliage partition is a practical, easy-to-compute LC-invariant for graph states, capturing both the structure of graphs and the $2$-body marginal properties of their related graph states. It is instrumental in determining LC-symmetries of graphs, in finding the number of LC-orbits, and the sizes of LC-orbits for both labeled and unlabeled graphs. It allows us to define the foliage representation of graphs, which presents these graphs in a compact way. 

\section*{Acknowledgement}
We thank Lina Vandré, Jarn de Jong, Otfried Gühne and Anna Pappa for useful discussions. 
\new{We would like to thank Nathan Claudet and Nikolai Wyderka for an interesting email exchange, and the two anonymous reviewers whose comments helped to improve this manuscript.} 
This research was supported by an NWO Vidi grant (Project No. VI.Vidi.192.109). This work is funded/co-funded by the European Union (ERC, ASC-Q,
101040624). Views and opinions expressed are however those of the authors
only and do not necessarily reflect those of the European Union or the
European Research Council. Neither the European Union nor the granting
authority can be held responsible for them.  F.~H.~acknowledges support from the Emmy Noether DFG grant No.~418294583 and from the European Union via the Quantum Internet Alliance project. 
We thank the organizers of the \lq\lq 5th Seefeld Workshop on Quantum Information\rq\rq~where this project started.

\bibliographystyle{unsrtnat}  
\bibliography{Physics}

\begin{figure*}
\renewcommand*\thesubfigure{\arabic{subfigure}}
\centering
\foreach \x in {1,...,146}{
	\begin{subfigure}[b]{0.066\textwidth}
		\centering
		\includegraphics[width=\textwidth]{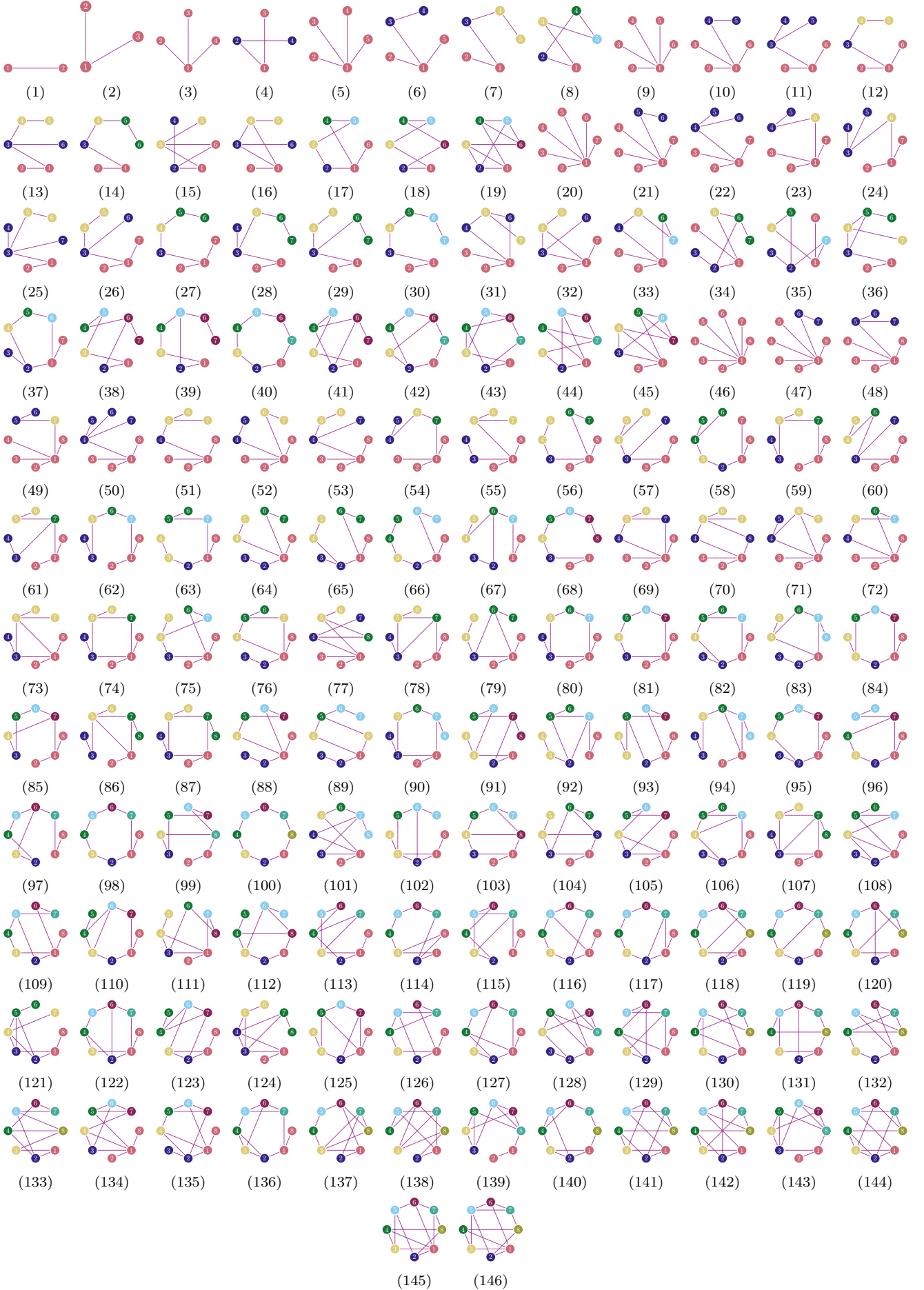}
		\caption{\label{fig:representative_\x}}
	\end{subfigure}
 }
\caption{Representatives for each local complementation orbit up to 8 qubits with highlighted foliage partition. Vertices in the same set in the foliage partition $\hat{V}$ share the same color of the node.
}
\label{fig:graph_representatives}
\end{figure*}

\begin{figure*}
\renewcommand*\thesubfigure{\arabic{subfigure}}
\centering
\foreach \x in {1,...,146}{
	\begin{subfigure}[b]{0.066\textwidth}
		\centering
		\includegraphics[width=\textwidth]{Orbit_representatives_foliage_representation_level_1/\x.pdf}
		\caption{\label{fig:representative_level1_\x}}
	\end{subfigure}
 }
\caption{
The foliage graph $\hat{G}$ for each of the graphs $G$ shown in Figure~\ref{fig:graph_representatives} -- again with highlighted foliage partition.
}
\label{fig:graph_representatives_level1}
\end{figure*}

\begin{table*}
	\includegraphics[width=0.98\textwidth]{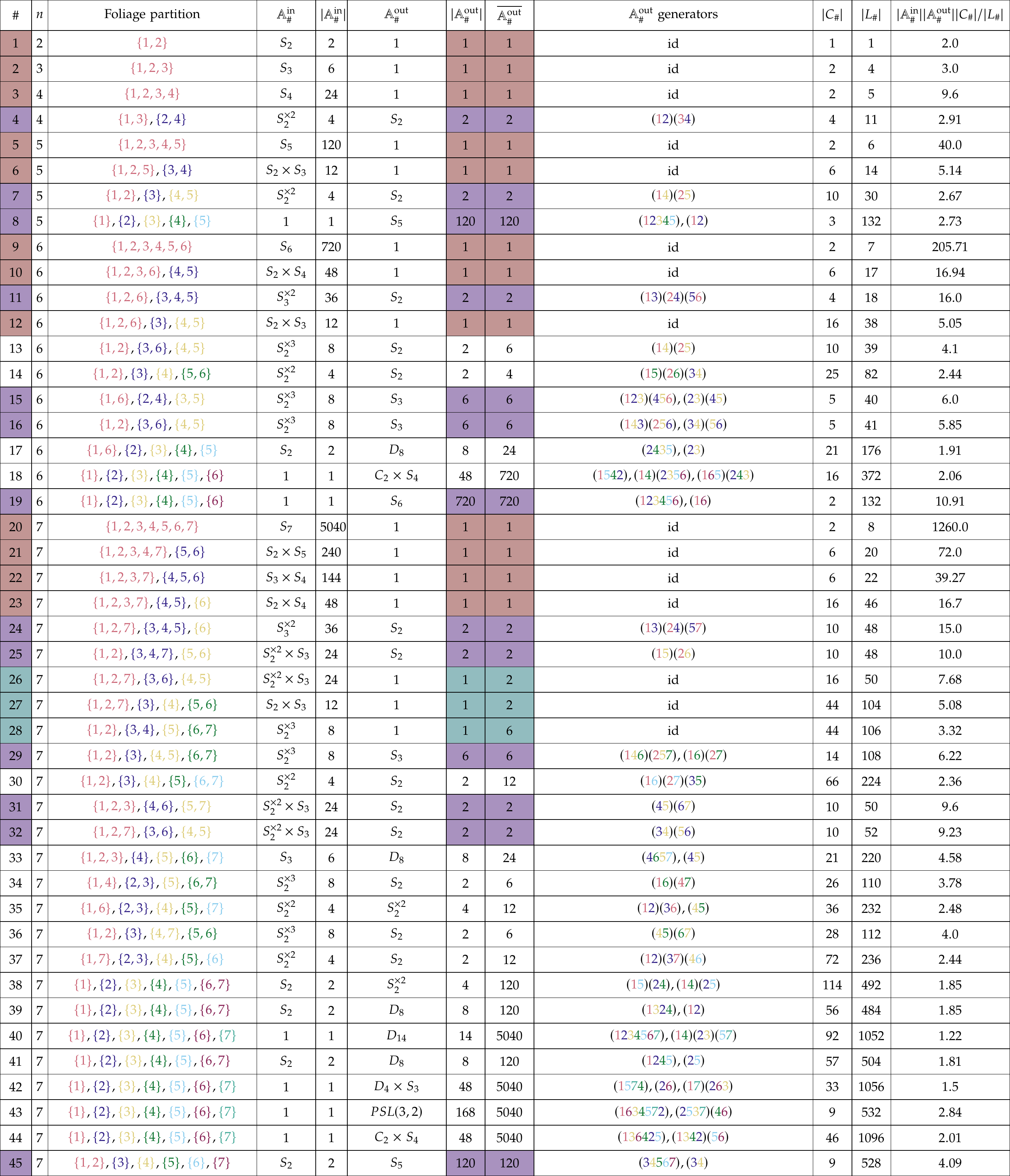}
\caption{
Symmetries for all LC-orbits up to 7 qubits.
The rows correspond to the LC-orbits of~\cref{fig:graph_representatives}. 
The first column contains the number $(\#)$ of the orbit, the second contains the number of nodes $n$ and the third contains the elements of the foliage partition $\hat{V}$.
The foliage partition $\hat{V}=\{V_1,\ldots, V_k\}$ defines the inner symmetry group $\AutGin \defeq S_{|V_{1}|}\times S_{|V_{2}|} \times \cdots \times S_{|V_{k}|}$.
All further LC-symmetries exchange different foliage partition sets of the same size, resulting in the outer symmetry group $\AutGout$. 
The foliage partition gives an upper bound $\AutGoutupp$ on the group $\AutGout$. Thus, if $|\AutGoutupp |=1$, the foliage partition completely determines $\AutGout$ group, and thereby all LC-symmetries of $G$, \textcolor{brown}{highlighted in brown}. Otherwise, the size of the group $\AutGout$ is bounded by $1\leq |\AutGout |\leq |\AutGoutupp |$. For orbits where $|\AutGout|$ matches the \textcolor{green}{lower}/\textcolor{violet}{upper} bound the row is highlighted in \textcolor{green}{green}/\textcolor{violet}{violet}. 
The remaining columns list the generators of the outer symmetry group $\AutGout$, the number of LC-equivalent graphs in each orbit counting ($|L_{\#}|$) and not counting ($|C_{\#}|$) isomorphic graphs to be different, as well as the average size of the automorphism group in the orbit --  computed as 
$|\AutGin| |\AutGout| |C_{\#}| / |L_{\#}|$.}
\label{tab:symmetries_table1}
\end{table*}

\begin{table*}
	\includegraphics[width=0.98\textwidth]{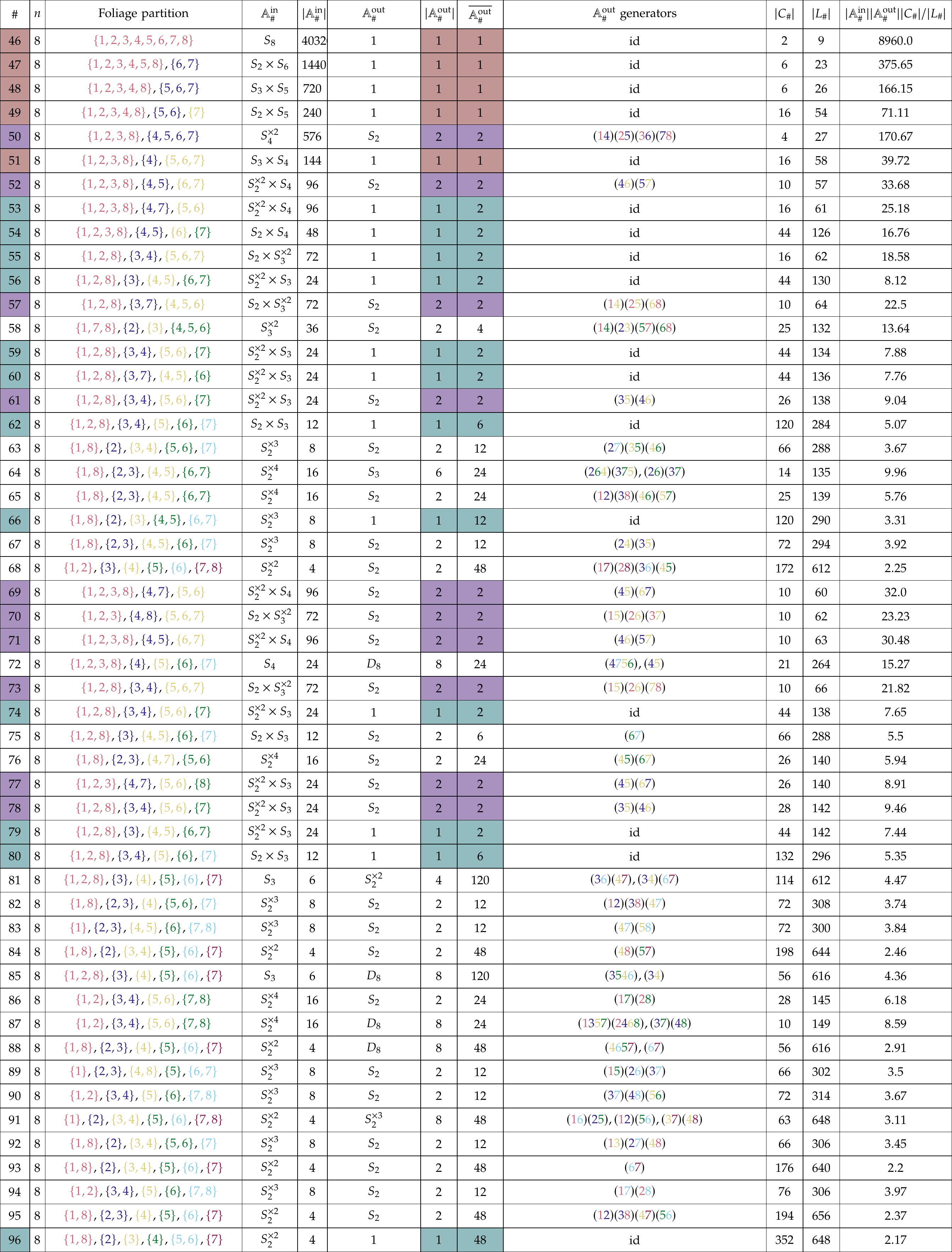}
\caption{First part of symmetries for each LC-orbit of 8 qubits. Column headings identical to~\cref{tab:symmetries_table1}.
}
\label{tab:symmetries_table2}
\end{table*}

\begin{table*}
	\includegraphics[width=0.98\textwidth]{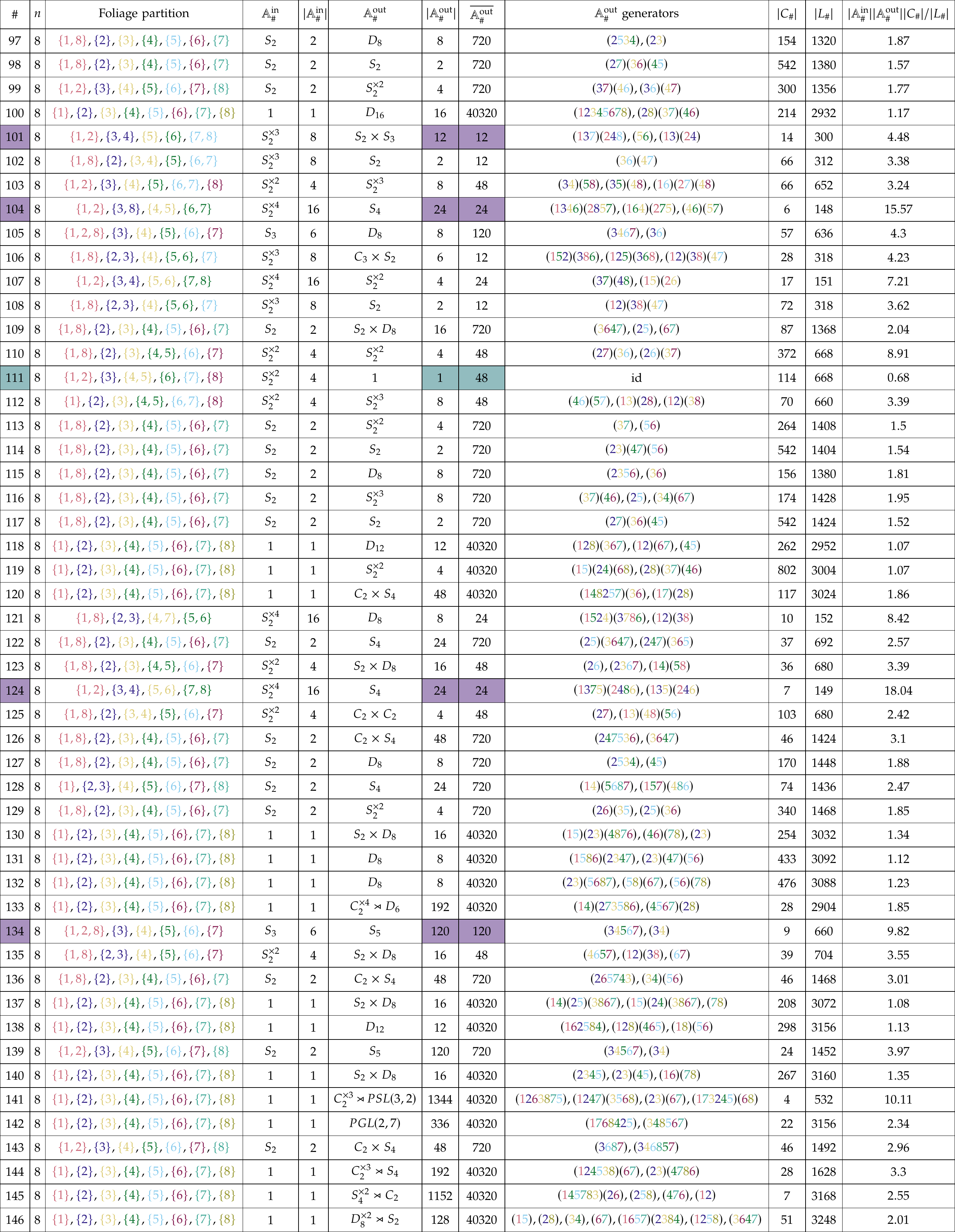}
\caption{Second part of symmetries for each LC-orbit of 8 qubits.
Column headings identical to~\cref{tab:symmetries_table1}.
}
\label{tab:symmetries_table3}
\end{table*}

\FloatBarrier

\end{document}